\journal{Journal of Discrete Algorithms}
\renewcommand{\paragraph}[1]{{\medskip\noindent{\bf #1}}}
\newcommand{\emparagraph}[1]{{\medskip\noindent{\it #1}}}
\newcommand{\etal}{{\it et al.}}
\newcommand{\mycase}[1]{\mbox{{\underline{Case #1}}:\/}}
\newcommand{\margincomment}[1]%
    {{%
      \marginpar{{\tiny\begin{minipage}{0.5in}
                       \begin{flushleft}
                          {#1}
                       \end{flushleft}
                       \end{minipage}
                }}
    }}
\newcommand{\hatr}{{\hat r}}
\newcommand{\hatx}{{\hat x}}
\newcommand{\haty}{{\hat y}}
\newcommand{\dotx}{{\dot x}}
\newcommand{\doty}{{\dot y}}
\newcommand{\dotr}{{\dot r}}
\newcommand{\barx}{{\bar x}}
\newcommand{\bary}{{\bar y}}
\newcommand{\bard}{{\bar d}}
\newcommand{\wbarN}{{\overline{N}}}
\newcommand{\bfr}{\boldsymbol{r}}
\newcommand{\bfx}{\boldsymbol{x}}
\newcommand{\bfy}{\boldsymbol{y}}
\newcommand{\bfalpha}{\boldsymbol{\alpha}}
\newcommand{\bfbeta}{\boldsymbol{\beta}}
\newcommand{\calA}{{\cal A}}
\newcommand{\hatcalI}{{\hat{\cal I}}}
\newcommand{\dotcalI}{{\dot{\cal I}}}
\newcommand{\tilder}{{\widetilde r}}
\newcommand{\tildex}{{\widetilde x}}
\newcommand{\wtildeN}{{\widetilde N}}
\newcommand{\tildebfr}{\widetilde{\boldsymbol{r}}}
\newcommand{\tildebfx}{\widetilde{\boldsymbol{x}}}
\newcommand{\barbfx}{\bar{\boldsymbol{x}}}
\newcommand{\barbfy}{\bar{\boldsymbol{y}}}
\newcommand{\hatbfx}{\hat{\boldsymbol{x}}}
\newcommand{\hatbfy}{\hat{\boldsymbol{y}}}
\newcommand{\dotbfx}{\dot{\boldsymbol{x}}}
\newcommand{\dotbfy}{\dot{\boldsymbol{y}}}
\newcommand{\half}{{\mbox{$\frac{1}{2}$}}}
\newcommand{\onethird}{{\mbox{$\frac{1}{3}$}}}
\newcommand{\fourthirds}{{\mbox{$\frac{4}{3}$}}}
\newcommand{\braced}[1]{{ \left\{ #1 \right\} }}
\newcommand{\floor}[1]{{ \lfloor #1 \rfloor }}
\newcommand{\set}{\,{\leftarrow}\,}
\newcommand{\suchthat}{{\,:\,}}
\newcommand{\cost}{{\it cost}}
\newcommand{\assign}{{\,\leftarrow\,}}
\newcommand{\NP}{{\mbox{\sf NP}}}
\newcommand{\PP}{{\mbox{\sf P}}}
\newcommand{\DTIME}{{\mbox{\sf DTIME}}}
\DeclareMathOperator*{\argmin}{arg\,min}
\newcommand{\LP}{\mbox{\rm LP}}
\newcommand{\OPT}{\mbox{\rm OPT}}
\newcommand{\EGUP}{\mbox{\rm EGUP}} 
\newcommand{\ECHS}{\mbox{\rm ECHS}} 
\newcommand{\EBGS}{\mbox{\rm EBGS}} 
\newcommand{\smallEGUP}{\mbox{\tiny\rm EGUP}}
\newcommand{\smallECHS}{\mbox{\tiny\rm ECHS}}
\newcommand{\smallEBGS}{\mbox{\tiny\rm EBGS}}
\newcommand{\FTFP}{\mbox{\rm FTFP}}
\newcommand{\calI}{\mathcal{I}}
\newcommand{\avg}{{\mbox{\scriptsize\rm avg}}}
\newcommand{\cls}{{\mbox{\scriptsize\rm cls}}}
\newcommand{\far}{{\mbox{\scriptsize\rm far}}}
\newcommand{\smallfar}{{\mbox{\tiny\rm far}}}
\newcommand{\sitesset}{\mathbb{F}}
\newcommand{\clientset}{\mathbb{C}}
\newcommand{\facilityset}{\overline{\sitesset}}
\newcommand{\demandset}{\overline{\clientset}}
\newcommand{\concost}{C^{\avg}}
\newcommand{\tcc}{\mbox{\rm{tcc}}}
\newcommand{\clsdist}{C_{\cls}^{\avg}}
\newcommand{\fardist}{C_{\far}^{\avg}}
\newcommand{\clsmax}{C_{\cls}^{\max}}
\newcommand{\wbarclsnb}{\wbarN_{\cls}}
\newcommand{\wbarfarnb}{\wbarN_{\far}}
\newcommand{\wtildeclsnb}{\wtildeN_{\cls}}
\newcommand{\tcccls}{\mbox{\rm{tcc}}_{\cls}}
\newcommand{\dmaxcls}{\mbox{\rm{dmax}}_{\cls}}
\newcommand{\Exp}{\mathbb{E}}
\newcommand{\Prob}{\mathbb{P}}
\newcommand{\NearestUnitChunk}{{\textsc{NearestUnitChunk}}}
\newcommand{\AugmentToUnit}{{\textsc{AugmentToUnit}}}
\newcommand{\connsum}{{\textrm{conn}}}
\newtheorem{theorem}{Theorem}
\newtheorem{corollary}[theorem]{Corollary}
\newtheorem{lemma}[theorem]{Lemma}
\newcommand{\ignore}[1]{}
\begin{document}

\begin{frontmatter}
  \title{LP-rounding Algorithms for the Fault-Tolerant\\
    Facility Placement Problem\tnoteref{t1}} \tnotetext[t1]{A
    preliminary version of this article appeared in Proc. CIAC 2013.}

\author{Li Yan\corref{cor1} and Marek Chrobak\fnref{fn1}}
\ead{\{lyan,marek\}@cs.ucr.edu}
\address{Department of Computer Science\\
 University of California at Riverside\\
Riverside, CA 92521, USA}

\cortext[cor1]{Corresponding author}
\fntext[fn1]{Work supported by NSF grants CCF-0729071 and CCF-1217314.}
\begin{abstract} 
  The Fault-Tolerant Facility Placement problem (FTFP) is a
  generalization of the classic Uncapacitated Facility
  Location Problem (UFL). In FTFP we are given a set of
  facility sites and a set of clients. Opening a facility at
  site $i$ costs $f_i$ and connecting client $j$ to a
  facility at site $i$ costs $d_{ij}$. We assume that the
  connection costs (distances) $d_{ij}$ satisfy the triangle
  inequality. Multiple facilities can be opened at any
  site. Each client $j$ has a demand $r_j$, which means that
  it needs to be connected to $r_j$ different facilities
  (some of which could be located on the same site). The
  goal is to minimize the sum of facility opening cost and
  connection cost.

  The main result of this paper is a $1.575$-approximation algorithm
  for FTFP, based on LP-rounding. The algorithm first reduces the
  demands to values polynomial in the number of sites. Then it uses a
  technique that we call adaptive partitioning, which partitions the
  instance by splitting clients into unit demands and creating a
  number of (not yet opened) facilities at each site. It also
  partitions the optimal fractional solution to produce a fractional
  solution for this new instance.  The partitioned fractional solution
  satisfies a number of properties that allow us to exploit existing
  LP-rounding methods for UFL to round our partitioned solution to an
  integral solution, preserving the approximation ratio.  In
  particular, our $1.575$-approximation algorithm is based on the
  ideas from the $1.575$-approximation algorithm for UFL by
  Byrka~\etal, with changes necessary to satisfy the fault-tolerance
  requirement.
\end{abstract}

\begin{keyword}
Facility Location \sep Approximation Algorithms
\end{keyword}

\end{frontmatter}


\section{Introduction}

In the \emph{Fault-Tolerant Facility Placement} problem
(FTFP), we are given a set $\sitesset$ of \emph{sites} at
which facilities can be built, and a set $\clientset$ of
\emph{clients} with some demands that need to be satisfied
by different facilities. A client $j\in\clientset$ has demand
$r_j$. Building one facility at a site $i\in\sitesset$ incurs a cost
$f_i$, and connecting one unit of demand from client $j$ to
a facility at site $i$ costs $d_{ij}$. Throughout the
paper we assume that the connection costs (distances)
$d_{ij}$ form a metric, that is, they are
symmetric and satisfy the triangle inequality. In a feasible solution, some
number of facilities, possibly zero, are opened at each site
$i$, and demands from each client are connected to those
open facilities, with the constraint that demands from the
same client have to be connected to different
facilities. Note that any two facilities at the same site are considered different.

It is easy to see that if all $r_j=1$ then FTFP reduces to
the classic Uncapacitated Facility Location problem (UFL).
If we add a constraint that each site can have at most one
facility built on it, then the problem becomes equivalent to the
Fault-Tolerant Facility Location problem (FTFL). One
implication of the one-facility-per-site restriction in FTFL
is that $\max_{j\in\clientset}r_j \leq |\sitesset|$, while
in FTFP the values of $r_j$'s can be much bigger than
$|\sitesset|$.

The UFL problem has a long history; in particular, great
progress has been achieved in the past two decades in
developing techniques for designing constant-ratio
approximation algorithms for UFL.  Shmoys, Tardos and
Aardal~\cite{ShmoysTA97} proposed an approach based on
LP-rounding, that they used to achieve a ratio of 3.16.
This was then improved by Chudak~\cite{ChudakS04} to 1.736,
and later by Sviridenko~\cite{Svi02} to 1.582.
The best known ``pure" LP-rounding algorithm is due to
Byrka~{\etal}~\cite{ByrkaGS10} with ratio 1.575. 
Byrka and Aardal~\cite{ByrkaA10} gave a hybrid algorithm that combines LP-rounding
and dual-fitting (based on \cite{JainMMSV03}), achieving a ratio of 1.5.  Recently,
Li~\cite{Li11} showed that, with a more refined analysis and
randomizing the scaling parameter used in \cite{ByrkaA10}, the ratio can be improved
to 1.488. This is the best known approximation result for UFL.  
Other techniques include the primal-dual algorithm with ratio 3 by
Jain and Vazirani~\cite{JainV01}, the dual fitting method by
Jain~{\etal}~\cite{JainMMSV03} that gives ratio 1.61, and a
local search heuristic by Arya~{\etal}~\cite{AryaGKMMP04}
with approximation ratio 3.  On the hardness side, UFL is
easily shown to be {\NP}-hard, and it is known that it is
not possible to approximate UFL in polynomial time with
ratio less than $1.463$, provided that
$\NP\not\subseteq\DTIME(n^{O(\log\log
  n)})$~\cite{GuhaK98}. An observation by Sviridenko
strengthened the underlying assumption to $\PP\ne \NP$ (see \cite{vygen05}).

FTFL was first introduced by Jain and
Vazirani~\cite{JainV03} and they adapted their primal-dual
algorithm for UFL to obtain a ratio of
$3\ln(\max_{j\in\clientset}r_j)$.  All subsequently
discovered constant-ratio approximation algorithms use
variations of LP-rounding.  The first such algorithm, by
Guha~{\etal}~\cite{GuhaMM01}, adapted the approach for UFL
from \cite{ShmoysTA97}.  Swamy and Shmoys~\cite{SwamyS08}
improved the ratio to $2.076$ using the idea of pipage
rounding introduced in \cite{Svi02}. Most recently,
Byrka~{\etal}~\cite{ByrkaSS10} improved the ratio to 1.7245
using dependent rounding and laminar clustering.

FTFP is a natural generalization of UFL. It was first
studied by Xu and Shen~\cite{XuS09}, who extended the
dual-fitting algorithm from~\cite{JainMMSV03} to give an
approximation algorithm with a ratio claimed to be
$1.861$. However their algorithm runs in polynomial time
only if $\max_{j\in\clientset} r_j$ is polynomial in
$O(|\sitesset|\cdot |\clientset|)$ and the analysis of the
performance guarantee in \cite{XuS09} is flawed\footnote{Confirmed through
  private communication with the authors.}.  To date, the
best approximation ratio for FTFP in the literature is $3.16$,
established by Yan and Chrobak~\cite{YanC11}, while the only
known lower bound is the $1.463$ lower bound for UFL
from~\cite{GuhaK98}, as UFL is a special case of FTFP.
If all demand values $r_j$ are equal, the problem can be solved
by simple scaling and applying LP-rounding algorithms for UFL. This does
not affect the approximation ratio, thus achieving ratio $1.575$ for this
special case (see also \cite{LiaoShen11}).

\smallskip

The main result of this paper is an LP-rounding algorithm
for FTFP with approximation ratio 1.575, matching the best
ratio for UFL achieved via the LP-rounding method
\cite{ByrkaGS10} and significantly improving our earlier
bound in~\cite{YanC11}. In Section~\ref{sec: polynomial
  demands} we prove that, for the purpose of LP-based
approximations, the general FTFP problem can be reduced to
the restricted version where all demand values are
polynomial in the number of sites.  This \emph{demand
  reduction} trick itself gives us a ratio of $1.7245$,
since we can then treat an instance of FTFP as an instance
of FTFL by creating a sufficient (but polynomial) number of
facilities at each site, and then using the algorithm
from~\cite{ByrkaSS10} to solve the FTFL instance.

The reduction to polynomial demands suggests an approach where
clients' demands are split into unit demands. These unit demands can
be thought of as ``unit-demand clients'', and a natural approach would
be to adapt LP-rounding methods from
\cite{gupta08,ChudakS04,ByrkaGS10} to this new set of unit-demand
clients.  Roughly, these algorithms iteratively pick a client that
minimizes a certain cost function (that varies for different
algorithms) and open one facility in the neighborhood of this
client. The remaining clients are then connected to these open
facilities.  In order for this to work, we also need to convert the
optimal fractional solution $(\bfx^\ast,\bfy^\ast)$ of the original
instance into a solution $(\barbfx,\barbfy)$ of the modified instance
which then can be used in the LP-rounding process. This can be thought
of as partitioning the fractional solution, as each connection value
$x^\ast_{ij}$ must be divided between the $r_j$ unit demands
of client $j$ in some way. In Section~\ref{sec: adaptive partitioning} we
formulate a set of properties required for this partitioning to
work. For example, one property guarantees that we can connect demands
to facilities so that two demands from the same client are connected
to different facilities. Then we present our \emph{adaptive
  partitioning} technique that computes a partitioning with all the
desired properties. Using adaptive partitioning we were able to extend
the algorithms for UFL from \cite{gupta08,ChudakS04,ByrkaGS10} to
FTFP. We illustrate the fundamental ideas of our approach in
Section~\ref{sec: 3-approximation}, showing how they can be used to
design an LP-rounding algorithm with ratio $3$.  In Section~\ref{sec:
  1.736-approximation} we refine the algorithm to improve the
approximation ratio to $1+2/e\approx 1.736$.  Finally, in
Section~\ref{sec: 1.575-approximation}, we improve it even further to
$1.575$ -- the main result of this paper.

Summarizing, our contributions are two-fold: One, we show
that the existing LP-rounding algorithms for UFL can be
extended to a much more general problem FTFP, retaining the
approximation ratio. We believe that, should even better
LP-rounding algorithms be developed for UFL in the future,
using our demand reduction and adaptive partitioning
methods, it should be possible to extend them to FTFP.
In fact, some improvement of the ratio
should be achieved by randomizing the scaling parameter
$\gamma$ used in our algorithm, as Li showed in \cite{Li11}
for UFL.  (Since the ratio $1.488$ for UFL in~\cite{Li11}
uses also dual-fitting
algorithms~\cite{MahdianYZ06}, we would not obtain the same
ratio for FTFP yet using only LP-rounding.)

Two, our ratio of $1.575$ is significantly better than the
best currently known ratio of $1.7245$ for the
closely-related FTFL problem. This suggests that in the
fault-tolerant scenario the capability of creating
additional copies of facilities on the existing sites makes
the problem easier from the point of view of approximation.

\section{The LP Formulation}\label{sec: the lp formulation}

The FTFP problem has a natural Integer Programming (IP)
formulation. Let $y_i$ represent the number of facilities
built at site $i$ and let $x_{ij}$ represent the number of
connections from client $j$ to facilities at site $i$. If we
relax the integrality constraints, we obtain the following LP:

\begin{alignat}{3}
  \textrm{minimize} \quad \cost(\bfx,\bfy) &= \textstyle{\sum_{i\in \sitesset}f_iy_i 
								+ \sum_{i\in \sitesset, j\in \clientset}d_{ij}x_{ij}}\label{eqn:fac_primal}\hspace{-1.5in}&&
									\\ \notag
  \textrm{subject to}\quad y_i - x_{ij} &\geq 0 			&\quad 		&\forall i\in \sitesset, j\in \clientset 
									\\ \notag
     \textstyle{\sum_{i\in \sitesset} x_{ij}} &\geq r_j  &			&\forall j\in \clientset
 									\\ \notag
  	  x_{ij} \geq 0, y_i &\geq 0 						& 			&\forall i\in \sitesset, j\in \clientset 
  									\\ \notag
\end{alignat}


\noindent
The dual program is:

\begin{alignat}{3}
  \textrm{maximize}\quad \textstyle{\sum_{j\in \clientset}} r_j\alpha_j&\label{eqn:fac_dual}  
     						\\ \notag
  \textrm{subject to} \quad \textstyle{
    \sum_{j\in \clientset}\beta_{ij}} &\leq f_i  &\quad\quad			&\forall i \in \sitesset  
							\\ \notag
  \alpha_{j} - \beta_{ij} 	&\leq  d_{ij}       &                 & \forall i\in \sitesset, j\in \clientset 
							\\ \notag
  \alpha_j \geq 0, \beta_{ij} &\geq 0           &            & \forall i\in \sitesset, j\in \clientset
  							\\ \notag
\end{alignat}

In each of our algorithms we will fix some optimal
solutions of the LPs (\ref{eqn:fac_primal}) and (\ref{eqn:fac_dual})
that we will denote by $(\bfx^\ast, \bfy^\ast)$ and
$(\bfalpha^\ast,\bfbeta^\ast)$, respectively.

With $(\bfx^\ast, \bfy^\ast)$ fixed, we can define the
optimal facility cost as $F^\ast=\sum_{i\in\sitesset} f_i
y_i^\ast$ and the optimal connection cost as $C^\ast =
\sum_{i\in\sitesset,j\in\clientset} d_{ij}x_{ij}^\ast$.
Then $\LP^\ast = \cost(\bfx^\ast,\bfy^\ast) = F^\ast+C^\ast$
is the joint optimal value of (\ref{eqn:fac_primal}) and
(\ref{eqn:fac_dual}).  We can also associate with each
client $j$ its fractional connection cost $C^\ast_j =
\sum_{i\in\sitesset} d_{ij}x_{ij}^\ast$.  Clearly, $C^\ast =
\sum_{j\in\clientset} C^\ast_j$.  Throughout the paper we
will use notation $\OPT$ for the optimal integral solution
of (\ref{eqn:fac_primal}).  $\OPT$ is the value we wish to
approximate, but, since $\OPT\ge\LP^\ast$, we can instead use
$\LP^\ast$ to estimate the approximation ratio of our
algorithms.


\paragraph{Completeness and facility splitting.}
Define $(\bfx^\ast, \bfy^\ast)$ to be \emph{complete} if
$x_{ij}^\ast>0$ implies that $x_{ij}^\ast=y_i^\ast$ for all $i,j$. In
other words, each connection either uses a site fully or not at all.
As shown by Chudak and Shmoys~\cite{ChudakS04}, we can modify the
given instance by adding at most $|\clientset|$ sites to obtain an
equivalent instance that has a complete optimal solution, where
``equivalent" means that the values of $F^\ast$, $C^\ast$ and
$\LP^\ast$, as well as $\OPT$, are not affected. Roughly, the argument
is this: We notice that, without loss of generality, for each client
$k$ there exists at most one site $i$ such that $0 < x_{ik}^\ast <
y_i^\ast$.  We can then perform the following \emph{facility
  splitting} operation on $i$: introduce a new site $i'$, let
$y^\ast_{i'} = y^\ast_i - x^\ast_{ik}$, redefine $y^\ast_i$ to be
$x^\ast_{ik}$, and then for each client $j$ redistribute $x^\ast_{ij}$
so that $i$ retains as much connection value as possible and $i'$
receives the rest. Specifically, we set
\begin{align*}
  &y^\ast_{i'} \;\assign\; y^\ast_i - x^\ast_{ik},\;   y^\ast_{i} \;\assign\; x^\ast_{ik}, \quad \text{ and }\\
  &x^\ast_{i'j} \;\assign\;\max( x^\ast_{ij} - x^\ast_{ik}, 0 ),\;	 x^\ast_{ij} \;\assign\; \min( x^\ast_{ij} , x^\ast_{ik}) 
			\quad	\textrm{for all}\ j \neq k.
\end{align*}
This operation eliminates the partial connection between $k$
and $i$ and does not create any new partial
connections. Each client can split at most one site and
hence we shall have at most $|\clientset|$ more sites.

By the above paragraph,  without loss of generality we can
assume that the optimal fractional solution $(\bfx^\ast, \bfy^\ast)$
is complete. This assumption will in fact greatly simplify some of
the arguments in the paper. Additionally, we will frequently use the facility
splitting operation described above in our algorithms to obtain fractional solutions with
desirable properties.

\section{Reduction to Polynomial Demands}
\label{sec: polynomial demands}

This section presents a \emph{demand reduction} trick that
reduces the problem for arbitrary demands to a special case
where demands are bounded by $|\sitesset|$, the number of
sites.  (The formal statement is a little more technical --
see Theorem~\ref{thm: reduction to polynomial}.)  Our
algorithms in the sections that follow process individual
demands of each client one by one, and thus they critically
rely on the demands being bounded polynomially in terms of
$|\sitesset|$ and $|\clientset|$ to keep the overall running time polynomial.

The reduction is based on an optimal fractional solution
$(\bfx^\ast,\bfy^\ast)$ of LP~(\ref{eqn:fac_primal}). From the
optimality of this solution, we can also assume that
$\sum_{i\in\sitesset} x^\ast_{ij} = r_j$ for all
$j\in\clientset$.  As explained in Section~\ref{sec: the lp
  formulation}, we can assume that $(\bfx^\ast,\bfy^\ast)$
is complete, that is $x^\ast_{ij} > 0$ implies $x^\ast_{ij}
= y^\ast_i$ for all $i,j$.  We split this solution into two
parts, namely $(\bfx^\ast,\bfy^\ast) = (\hatbfx,\hatbfy)+
(\dotbfx,\dotbfy)$, where
\begin{align*}
\haty_i &\;\assign\; \floor{y_i^\ast}, \quad
			\hatx_{ij} \;\assign\; \floor{x_{ij}^\ast} \quad\textrm{and}
			\\
\doty_i &\;\assign\; y_i^\ast - \floor{y_i^\ast}, \quad
 	\dotx_{ij} \;\assign\; x_{ij}^\ast -  \floor{x_{ij}^\ast}
\end{align*}
for all $i,j$. Now we construct two
FTFP instances $\hatcalI$ and $\dotcalI$ with the same
parameters as the original instance, except that the demand of each client $j$ is
$\hatr_j = \sum_{i\in\sitesset} \hatx_{ij}$ in instance $\hatcalI$ and
$\dotr_j = \sum_{i\in\sitesset} \dotx_{ij} = r_j - \hatr_j$ in instance $\dotcalI$. 
It is obvious that if we have integral solutions to both $\hatcalI$
and $\dotcalI$ then, when added together, they form an integral
solution to the original instance.  Moreover, we have the
following lemma.


\begin{lemma}\label{lem: polynomial demands partition}
{\rm (i)}
  $(\hatbfx, \hatbfy)$ is a feasible integral solution to
  instance $\hatcalI$.

\noindent
{\rm (ii)}
  $(\dotbfx, \dotbfy)$ is a feasible fractional
  solution to instance $\dotcalI$.

\noindent
{\rm (iii)}
$\dotr_j\leq |\sitesset|$ for every client $j$.

\end{lemma}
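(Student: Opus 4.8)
The plan is to prove each of the three parts by a direct verification, using the completeness assumption on $(\bfx^\ast, \bfy^\ast)$ as the key structural input. The overall strategy is to observe that the floor and fractional parts inherit the LP constraints in a componentwise fashion, with completeness ensuring that no "fractional cross-terms" spoil the argument.

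First I would handle part (i). Since $y_i^\ast \geq x_{ij}^\ast \geq 0$ for all $i, j$, applying $\floor{\cdot}$ preserves the inequalities because the floor function is monotone: $\haty_i = \floor{y_i^\ast} \geq \floor{x_{ij}^\ast} = \hatx_{ij} \geq 0$. By construction $\sum_{i} \hatx_{ij} = \hatr_j$, so the demand constraint holds with equality in $\hatcalI$. All quantities are integers by definition of the floor, so $(\hatbfx, \hatbfy)$ is a feasible integral solution to $\hatcalI$.

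For part (ii), I would verify the two families of constraints for $(\dotbfx, \dotbfy)$. Nonnegativity is immediate since $\dotx_{ij} = x_{ij}^\ast - \floor{x_{ij}^\ast} \in [0,1)$ and similarly for $\doty_i$. The demand constraint again holds with equality: $\sum_i \dotx_{ij} = r_j - \hatr_j = \dotr_j$. The one constraint that genuinely needs the completeness hypothesis is $\doty_i \geq \dotx_{ij}$. Here I split into cases. If $x_{ij}^\ast = 0$ then $\dotx_{ij} = 0 \leq \doty_i$ trivially. If $x_{ij}^\ast > 0$, then completeness gives $x_{ij}^\ast = y_i^\ast$, hence $\dotx_{ij} = x_{ij}^\ast - \floor{x_{ij}^\ast} = y_i^\ast - \floor{y_i^\ast} = \doty_i$, so the constraint holds with equality. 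This is the step I expect to be the crux: without completeness, $x_{ij}^\ast$ and $y_i^\ast$ could have unrelated fractional parts and $\dotx_{ij} > \doty_i$ could occur (e.g.\ $x_{ij}^\ast = 0.9$, $y_i^\ast = 1.2$).

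Finally, for part (iii), I would bound $\dotr_j = \sum_{i \in \sitesset} \dotx_{ij}$. Each term $\dotx_{ij}$ lies in $[0, 1)$, so $\dotr_j < |\sitesset|$, and since $\dotr_j = r_j - \hatr_j$ is a nonnegative integer, we conclude $\dotr_j \leq |\sitesset|$. (Alternatively one notes $\dotr_j \leq \floor{|\sitesset| - \text{something}}$, but the simple strict inequality plus integrality suffices.) This completes all three parts; the only nontrivial ingredient is the use of completeness in part (ii), and everything else is routine monotonicity and arithmetic.
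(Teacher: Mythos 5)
Your proof is correct and follows the same approach as the paper: monotonicity of the floor function for (i), a case split using the completeness assumption for (ii), and the bound $\dotx_{ij} < 1$ for (iii). The only cosmetic difference is in (iii), where your appeal to integrality of $\dotr_j$ is superfluous, since $\dotr_j < |\sitesset|$ already gives $\dotr_j \leq |\sitesset|$ directly.
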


\begin{proof}
(i) For feasibility, we need to verify that the constraints of LP~(\ref{eqn:fac_primal})
are satisfied. Directly from the definition, we have $\hatr_j = \sum_{i\in\sitesset} \hatx_{ij}$.
For any $i$ and $j$, by the feasibility of $(\bfx^\ast,\bfy^\ast)$ we have
$\hatx_{ij} = \floor{x_{ij}^\ast} \le \floor{y^\ast_i} = \haty_i$.

(ii) From the definition, we have  $\dotr_j = \sum_{i\in\sitesset} \dotx_{ij}$.
It remains to show that $\doty_i \geq \dotx_{ij}$ for all $i,j$. 
If $x_{ij}^\ast=0$, then $\dotx_{ij}=0$ and we are done. 
Otherwise, by completeness, we have $x_{ij}^\ast=y_i^\ast$. 
Then  $\doty_i = y_i^\ast - \floor{y_i^\ast} = x_{ij}^\ast - \floor{x_{ij}^\ast} =\dotx_{ij}$. 

(iii) From the definition of $\dotx_{ij}$ we have
  $\dotx_{ij} < 1$.  Then the bound follows from the definition of $\dotr_j$.
\end{proof}

Notice that our construction relies on the completeness assumption; in fact, it is
easy to give an example where $(\dotbfx, \dotbfy)$ would not be feasible if we
used a non-complete optimal solution $(\bfx^\ast,\bfy^\ast)$.
Note also that the solutions $(\hatbfx,\hatbfy)$ and $(\dotbfx, \dotbfy)$ are in fact
optimal for their corresponding instances, for if a better solution to $\hatcalI$ or
$\dotcalI$ existed, it could
give us a solution to $\calI$ with a smaller objective value.


\begin{theorem}\label{thm: reduction to polynomial}
  Suppose that there is a polynomial-time algorithm $\calA$
  that, for any instance of {\FTFP} with maximum demand
  bounded by $|\sitesset|$, computes an integral solution
  that approximates the fractional optimum of this instance
  within factor $\rho\geq 1$.  Then there is a
  $\rho$-approximation algorithm $\calA'$ for {\FTFP}.
\end{theorem}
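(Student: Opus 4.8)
The plan is to apply $\calA$ to the fractional part $\dotcalI$ of the decomposition introduced above and to add its output to the integral part $(\hatbfx,\hatbfy)$, which is already a feasible integral solution of $\hatcalI$. First I would take an arbitrary instance $\calI$ of \FTFP, solve LP~(\ref{eqn:fac_primal}) to get an optimal fractional solution $(\bfx^\ast,\bfy^\ast)$, and run the facility-splitting preprocessing of Section~\ref{sec: the lp formulation}. This produces an equivalent instance $\calI'$ with at most $|\sitesset|+|\clientset|$ sites --- each new site co-located with (hence having the same opening cost and distances as) an existing one --- for which $(\bfx^\ast,\bfy^\ast)$ may be assumed complete and for which $\OPT$ and $\LP^\ast$ are unchanged. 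Because new sites are co-located with old ones, any integral solution of $\calI'$ maps to an integral solution of $\calI$ of the same cost (merge co-located sites), so it suffices to produce a $\rho$-approximation for $\calI'$.

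Next I would form $\hatcalI$, $\dotcalI$ and the additive decomposition $(\bfx^\ast,\bfy^\ast)=(\hatbfx,\hatbfy)+(\dotbfx,\dotbfy)$ exactly as above. By Lemma~\ref{lem: polynomial demands partition}(i), $(\hatbfx,\hatbfy)$ is a feasible \emph{integral} solution of $\hatcalI$, so nothing needs to be run on it. By Lemma~\ref{lem: polynomial demands partition}(iii), every demand of $\dotcalI$ is bounded by the number of sites of $\dotcalI$, so $\calA$ applies to $\dotcalI$; let $(\tildebfx,\tildebfy)$ be the integral solution it returns. Since $(\dotbfx,\dotbfy)$ is a feasible fractional solution of $\dotcalI$ (Lemma~\ref{lem: polynomial demands partition}(ii)), the fractional optimum of $\dotcalI$ is at most $\cost(\dotbfx,\dotbfy)$, and therefore $\cost(\tildebfx,\tildebfy)\le\rho\,\cost(\dotbfx,\dotbfy)$.

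Finally, $(\hatbfx+\tildebfx,\hatbfy+\tildebfy)$ is a feasible integral solution of $\calI'$ --- demands add up since $\hatr_j+\dotr_j=r_j$, and the inequalities $x\le y$ are preserved under addition --- and by linearity of $\cost$ together with $\rho\ge1$ we get $\cost(\hatbfx+\tildebfx,\hatbfy+\tildebfy)=\cost(\hatbfx,\hatbfy)+\cost(\tildebfx,\tildebfy)\le\cost(\hatbfx,\hatbfy)+\rho\,\cost(\dotbfx,\dotbfy)\le\rho\bigl(\cost(\hatbfx,\hatbfy)+\cost(\dotbfx,\dotbfy)\bigr)=\rho\,\LP^\ast\le\rho\,\OPT$. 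Mapping back to $\calI$ as above, $\calA'$ is a $\rho$-approximation. Polynomiality is immediate: the LP and the splitting step run in polynomial time, $\dotcalI$ has size polynomial in $|\sitesset|$ and $|\clientset|$ because its demands are at most $|\sitesset|+|\clientset|$, hence $\calA$ runs in polynomial time on it, and the remaining steps are trivial.

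There is no serious obstacle here --- the real work is in Lemma~\ref{lem: polynomial demands partition}, which we may assume --- but a few points deserve care: (a) the demand bound required by $\calA$ must be read relative to the site set of the instance actually passed to it, so one should note that $\dotcalI$ carries the $O(|\sitesset|+|\clientset|)$-size site set of the augmented instance $\calI'$; (b) completeness of $(\bfx^\ast,\bfy^\ast)$, which is exactly what makes $(\dotbfx,\dotbfy)$ feasible for $\dotcalI$ (see Lemma~\ref{lem: polynomial demands partition}(ii) and the remark following it), is obtained only after facility splitting, so one must check that this preprocessing changes neither $\OPT$ nor $\LP^\ast$ and that integral solutions transfer back; and (c) the step $\cost(\hatbfx,\hatbfy)\le\rho\,\cost(\hatbfx,\hatbfy)$ genuinely uses the hypothesis $\rho\ge1$.
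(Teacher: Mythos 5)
Your proof takes essentially the same route as the paper's: decompose $(\bfx^\ast,\bfy^\ast)=(\hatbfx,\hatbfy)+(\dotbfx,\dotbfy)$, run $\calA$ on $\dotcalI$, add the result to the integral solution $(\hatbfx,\hatbfy)$ of $\hatcalI$, and bound the cost by $\rho\cdot\LP^\ast$ using $\rho\ge 1$. Your extra remarks (a)--(c) --- that the demand bound must be read against the site set of the instance actually handed to $\calA$ (the post-splitting instance, with $O(|\sitesset|+|\clientset|)$ sites), that completeness of $(\bfx^\ast,\bfy^\ast)$ is what makes $(\dotbfx,\dotbfy)$ feasible and is available only after facility splitting, and that the last inequality genuinely uses $\rho\ge1$ --- are correct and make explicit some points the paper treats more tersely, but they do not change the argument.
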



\begin{proof}
  Given an {\FTFP} instance with arbitrary demands, Algorithm~$\calA'$ works
as follows: it solves the LP~(\ref{eqn:fac_primal}) to obtain a
  fractional optimal solution $(\bfx^\ast,\bfy^\ast)$, then it constructs
  instances $\hatcalI$ and $\dotcalI$ described above,  applies
  algorithm~$\calA$ to $\dotcalI$, and finally combines (by adding
  the values) the integral solution $(\hatbfx, \hatbfy)$ of
  $\hatcalI$ and the integral solution of $\dotcalI$ produced
  by $\calA$. This clearly produces a feasible integral
  solution for the original instance $\calI$.
The solution produced by $\calA$ has cost at most
$\rho\cdot\cost(\dotbfx,\dotbfy)$, because $(\dotbfx,\dotbfy)$
is feasible for $\dotcalI$. Thus the cost of $\calA'$ is at most
 \begin{align*}
 \cost(\hatbfx, \hatbfy) + \rho\cdot\cost(\dotbfx,\dotbfy)
	\le
 \rho(\cost(\hatbfx, \hatbfy) + \cost(\dotbfx,\dotbfy))
		= \rho\cdot\LP^\ast \le \rho\cdot\OPT,
  \end{align*}
where the first inequality follows from $\rho\geq 1$. This completes
the proof.
\end{proof}

\section{Adaptive Partitioning}
\label{sec: adaptive partitioning}

In this section we develop our second technique, which we
call \emph{adaptive partitioning}. Given an FTFP instance
and an optimal fractional solution $(\bfx^\ast, \bfy^\ast)$
to LP~(\ref{eqn:fac_primal}), we split each client $j$ into
$r_j$ individual \emph{unit demand points} (or just
\emph{demands}), and we split each site $i$ into no more
than $|\sitesset|+2R|\clientset|^2$ \emph{facility points} (or
\emph{facilities}), where $R=\max_{j\in\clientset}r_j$. We
denote the demand set by $\demandset$ and the facility set
by $\facilityset$, respectively.  We will also partition
$(\bfx^\ast,\bfy^\ast)$ into a fractional solution
$(\barbfx,\barbfy)$ for the split instance.  We will
typically use symbols $\nu$ and $\mu$ to index demands and
facilities respectively, that is $\barbfx =
(\barx_{\mu\nu})$ and $\barbfy = (\bary_{\mu})$.  As before,
the \emph{neighborhood of a demand} $\nu$ is
$\wbarN(\nu)=\braced{\mu\in\facilityset \suchthat
  \barx_{\mu\nu}>0}$.  We will use notation $\nu\in j$ to
mean that $\nu$ is a demand of client $j$; similarly,
$\mu\in i$ means that facility $\mu$ is on site
$i$. Different demands of the same client (that is,
$\nu,\nu'\in j$) are called \emph{siblings}.  Further, we
use the convention that $f_\mu = f_i$ for $\mu\in i$,
$\alpha_\nu^\ast = \alpha_j^\ast$ for $\nu\in j$ and
$d_{\mu\nu} = d_{\mu j} = d_{ij}$ for $\mu\in i$ and $\nu\in
j$.  We define $\concost_{\nu}
=\sum_{\mu\in\wbarN(\nu)}d_{\mu\nu}\barx_{\mu\nu} =
\sum_{\mu\in\facilityset}d_{\mu\nu}\barx_{\mu\nu}$. 
One can think of $\concost_{\nu}$ as the
average connection cost of demand $\nu$, if we chose a
connection to facility $\mu$ with probability
$\barx_{\mu\nu}$. In our partitioned fractional solution we
guarantee for every $\nu$ that $\sum_{\mu\in\facilityset}
\barx_{\mu\nu}=1$.

Some demands in $\demandset$ will be designated as
\emph{primary demands} and the set of primary demands will
be denoted by $P$. By definition we have $P\subseteq \demandset$.
 In addition, we will use the overlap
structure between demand neighborhoods to define a mapping
that assigns each demand $\nu\in\demandset$ to some primary
demand $\kappa\in P$. As shown in the rounding algorithms in
later sections, for each primary demand we guarantee exactly
one open facility in its neighborhood, while for a
non-primary demand, there is constant probability that none
of its neighbors open. In this case we estimate its
connection cost by the distance to the facility opened in
its assigned primary demand's neighborhood. For this reason
the connection cost of a primary demand must be ``small''
compared to the non-primary demands assigned to it. We also
need sibling demands assigned to different primary demands to satisfy
the fault-tolerance requirement. Specifically, this
partitioning will be constructed to satisfy a number of
properties that are detailed below.
\begin{description}
	
      \renewcommand{\theenumii}{(\alph{enumii})}
      \renewcommand{\labelenumii}{\theenumii}

\item{(PS)} \emph{Partitioned solution}.
Vector $(\barbfx,\barbfy)$ is a partition of $(\bfx^\ast,\bfy^\ast)$, with unit-value
  demands, that is:

	\begin{enumerate}
	\item \label{PS:one} 
          $\sum_{\mu\in\facilityset} \barx_{\mu\nu} = 1$ for each demand $\nu\in\demandset$. 
	\item \label{PS:xij} $\sum_{\mu\in i, \nu\in j} \barx_{\mu\nu}
          = x^\ast_{ij}$ for each site $i\in\sitesset$ and client $j\in\clientset$.
	\item \label{PS:yi}
          $\sum_{\mu\in i} \bary_{\mu} = y^\ast_i$ for each site $i\in\sitesset$.
	\end{enumerate}
		
\item{(CO)} \emph{Completeness.}
	Solution   $(\barbfx,\barbfy)$ is complete, that is $\barx_{\mu\nu}\neq 0$ implies
				$\barx_{\mu\nu} = \bary_{\mu}$, for all $\mu\in\facilityset, \nu\in\demandset$.

\item{(PD)} \emph{Primary demands.}
	Primary demands satisfy the following conditions:

	\begin{enumerate}
		
	\item\label{PD:disjoint}  For any two different primary demands $\kappa,\kappa'\in P$ we have
				$\wbarN(\kappa)\cap \wbarN(\kappa') = \emptyset$.

	\item \label{PD:yi} For each site $i\in\sitesset$, 
		$ \sum_{\mu\in i}\sum_{\kappa\in P}\barx_{\mu\kappa} \leq y_i^\ast$.
		
	\item \label{PD:assign} Each demand $\nu\in\demandset$ is assigned
        to one primary demand $\kappa\in P$ such that

  			\begin{enumerate}
	
				\item \label{PD:assign:overlap} $\wbarN(\nu) \cap \wbarN(\kappa) \neq \emptyset$, and
				\item \label{PD:assign:cost} $\concost_{\nu}+\alpha_{\nu}^\ast \geq
        			\concost_{\kappa}+\alpha_{\kappa}^\ast$.

			\end{enumerate}

	\end{enumerate}
	
\item{(SI)} \emph{Siblings}. For any pair $\nu,\nu'$ of different siblings we have
  \begin{enumerate}

	\item \label{SI:siblings disjoint}
		  $\wbarN(\nu)\cap \wbarN(\nu') = \emptyset$.
		
	\item \label{SI:primary disjoint} If $\nu$ is assigned to a primary demand $\kappa$ then
 		$\wbarN(\nu')\cap \wbarN(\kappa) = \emptyset$. In particular, by Property~(PD.\ref{PD:assign:overlap}),
		this implies that different sibling demands are assigned to different primary demands.

	\end{enumerate}
	
\end{description}

As we shall demonstrate in later sections, these properties allow us
to extend known UFL rounding algorithms to obtain an integral solution
to our FTFP problem with a matching approximation ratio. Our
partitioning is ``adaptive" in the sense that it is constructed one
demand at a time, and the connection values for the demands of a
client depend on the choice of earlier demands, of this or other
clients, and their connection values. We would like to point out that
the adaptive partitioning process for the $1.575$-approximation
algorithm (Section~\ref{sec: 1.575-approximation}) is more subtle than that for 
the $3$-apprximation (Section~\ref{sec: 3-approximation}) and the
$1.736$-approximation algorithms (Section~\ref{sec:
  1.736-approximation}), due to the introduction of close and far
neighborhood.


\paragraph{Implementation of Adaptive Partitioning.}
We now describe an algorithm for partitioning the instance
and the fractional solution so that the properties (PS),
(CO), (PD), and (SI) are satisfied.  Recall that
$\facilityset$ and $\demandset$, respectively, denote the
sets of facilities and demands that will be created in this
stage, and $(\barbfx,\barbfy)$ is the partitioned solution
to be computed. 

The adaptive partitioning algorithm consists of two phases:
Phase 1 is called the partitioning phase and Phase 2 is called
the augmenting phase. Phase 1 is done in iterations, where
in each iteration we find the ``best'' client $j$ and create a
new demand $\nu$ out of it. This demand either becomes a
primary demand itself, or it is assigned to some existing
primary demand. We call a client $j$ \emph{exhausted} when
all its $r_j$ demands have been created and assigned to some
primary demands. Phase 1 completes when all clients are
exhausted. In Phase 2 we ensure that every demand has a
total connection values $\barx_{\mu\nu}$ equal to $1$, that is condition (PS.\ref{PS:one}).

For each site $i$ we will initially create one ``big" facility $\mu$
with initial value $\bary_\mu = y^\ast_i$.  While we partition the
instance, creating new demands and connections, this facility may end
up being split into more facilities to preserve completeness of the
fractional solution. Also, we will gradually decrease the fractional
connection vector for each client $j$, to account for the demands
already created for $j$ and their connection values.  These decreased
connection values will be stored in an auxiliary vector
$\tildebfx$. The intuition is that $\tildebfx$ represents the part of
$\bfx^\ast$ that still has not been allocated to existing demands and
future demands can use $\tildebfx$ for their connections. For
technical reasons, $\tildebfx$ will be indexed by facilities (rather
than sites) and clients, that is $\tildebfx = (\tildex_{\mu j})$.  At
the beginning, we set $\tildex_{\mu j}\assign x_{ij}^\ast$ for each
$j\in\clientset$, where $\mu\in i$ is the single facility created
initially at site $i$.  At each step, whenever we create a new demand
$\nu$ for a client $j$, we will define its values $\barx_{\mu\nu}$ and
appropriately reduce the values $\tildex_{\mu j}$, for all facilities
$\mu$. We will deal with two types of neighborhoods, with respect to
$\tildebfx$ and $\barbfx$, that is $\wtildeN(j)=\{\mu\in\facilityset
\suchthat\tildex_{\mu j} > 0\}$ for $j\in\clientset$ and
$\wbarN(\nu)=\{\mu\in\facilityset \suchthat \barx_{\mu\nu} >0\}$ for
$\nu\in\demandset$.  During this process we preserve the completeness
(CO) of the fractional solutions $\tildebfx$ and $\barbfx$. More
precisely, the following properties will hold for every facility $\mu$
after every iteration:
\begin{description}
	
	\item{(c1)} For each demand $\nu$ either $\barx_{\mu\nu}=0$ or
			$\barx_{\mu\nu}=\bary_{\mu}$. This is the same
      condition as condition (CO), yet we repeat it here as
      (c1) needs to hold after every iteration, while
      condition (CO) only applies to the final partitioned
      fractional solution $(\barbfx, \barbfy)$.

	\item{(c2)} For each client $j$,
			either $\tildex_{\mu j}=0$ or $\tildex_{\mu j}=\bary_{\mu}$.
			
\end{description}

A full description of the algorithm is given in
Pseudocode~\ref{alg:lpr2}.  Initially, the set $U$ of
non-exhausted clients contains all clients, the set
$\demandset$ of demands is empty, the set $\facilityset$ of
facilities consists of one facility $\mu$ on each site $i$
with $\bary_\mu = y^\ast_i$, and the set $P$ of primary
demands is empty (Lines 1--4).  In one iteration of the
while loop (Lines 5--8), for each client $j$ we
compute a quantity called $\tcc(j)$ (tentative connection
cost), that represents the average distance from $j$ to the
set $\wtildeN_1(j)$ of the nearest facilities $\mu$ whose
total connection value to $j$ (the sum of $\tildex_{\mu
  j}$'s) equals $1$.  This set is computed by Procedure
$\NearestUnitChunk()$ (see Pseudocode~\ref{alg:helper},
Lines~1--9), which adds facilities to $\wtildeN_1(j)$ in
order of nondecreasing distance, until the total connection
value is exactly $1$. (The procedure actually uses the
$\bary_\mu$ values, which are equal to the connection values,
by the completeness condition (c2).)  This may require splitting the last added
facility and adjusting the connection values so that
conditions (c1) and (c2) are preserved.


\begin{algorithm}[ht]
  \caption{Algorithm: Adaptive Partitioning}
  \label{alg:lpr2}
  \begin{algorithmic}[1]
    \Require $\sitesset$, $\clientset$, $(\bfx^\ast,\bfy^\ast)$
    \Ensure  $\facilityset$,  $\demandset$, $(\barbfx, \barbfy)$ 
    \Comment Unspecified $\barx_{\mu \nu}$'s and $\tildex_{\mu j}$'s are assumed to be $0$

    \State $\tildebfr \assign \bfr, U\assign \clientset, \facilityset\assign \emptyset,
    \demandset\assign \emptyset, P\assign \emptyset$
    \Comment{Phase 1}

    \For{each site $i\in\sitesset$} 
    \State create a facility $\mu$ at $i$ and add $\mu$ to $\facilityset$
    \State $\bary_\mu \assign y_i^\ast$ and $\tildex_{\mu j}\assign
    x_{ij}^\ast$ for each $j\in\clientset$ 
    \EndFor

    \While{$U\neq \emptyset$}
    \For{each $j\in U$}
    \State $\wtildeN_1(j) \assign {\NearestUnitChunk}(j, \facilityset, \tildebfx, \barbfx, \barbfy)$ \Comment see Pseudocode~\ref{alg:helper}
    \State $\tcc(j)\assign \sum_{\mu\in \wtildeN_1(j)} d_{{\mu}j}\cdot \tildex_{\mu j}$
    \EndFor
 
    \State $p \assign {\argmin}_{j\in U}\{ \tcc(j)+\alpha_j^\ast \}$
    \State create a new demand $\nu$ for client $p$

    \If{$\wtildeN_1 (p)\cap \wbarN(\kappa) \neq \emptyset$
      for some primary demand $\kappa\in P$}
    \State assign $\nu$ to $\kappa$
    \State $\barx_{\mu \nu}\assign \tildex_{\mu p}$ and $\tildex_{\mu p}\assign 0$ for each $\mu \in \wtildeN(p) \cap \wbarN(\kappa)$
    \Else 
    \State make $\nu$ primary, $P \assign P \cup \{\nu\}$, assign $\nu$ to itself
    \State set $\barx_{\mu\nu} \assign \tildex_{\mu p}$ and $\tildex_{\mu p}\assign 0$ for each $\mu\in \wtildeN_1(p)$

    \EndIf
    \State $\demandset\assign \demandset\cup \{\nu\},
    \tilder_p \assign \tilder_p -1$
	\State \textbf{if} {$\tilder_p=0$} \textbf{then} $U\assign U \setminus \{p\}$
    \EndWhile

    \For{each client $j\in\clientset$} \Comment{Phase 2}
    \For{each demand $\nu\in j$}    \Comment{each client $j$ has $r_j$ demands}
    \State \textbf{if} $\sum_{\mu\in \wbarN(\nu)}\barx_{\mu\nu}<1$
    \textbf{then} $\AugmentToUnit(\nu, j, \facilityset, \tildebfx, \barbfx, \barbfy)$ \Comment see Pseudocode~\ref{alg:helper}
    \EndFor
    \EndFor
  \end{algorithmic}
\end{algorithm}
\begin{algorithm}[ht]
  \caption{Helper functions used in Pseudocode~\ref{alg:lpr2}}
  \label{alg:helper}
  \begin{algorithmic}[1]
    \Function{\NearestUnitChunk}{$j, \facilityset, \tildebfx, \barbfx,\barbfy$}		
						\Comment upon return, $\sum_{\mu\in\wtildeN_1(j)} \tildex_{\mu j} = 1$
    \State Let $\wtildeN(j) = \{\mu_1,...,\mu_{q}\}$ where $d_{\mu_1 j} \leq d_{\mu_2 j} \leq \ldots \leq d_{\mu_{q j}}$
    \State Let $l$ be such that $\sum_{k=1}^{l} \bary_{\mu_k} \geq 1$ and $\sum_{k=1}^{l -1} \bary_{\mu_{k}} < 1$
    \State Create a new facility $\sigma$ at the same site as $\mu_l$ and add it to $\facilityset$
			\Comment split $\mu_l$
    \State Set $\bary_{\sigma}\assign \sum_{k=1}^{l} \bary_{\mu_{k}}-1$
					and $\bary_{\mu_l} \assign \bary_{\mu_l} - \bary_{\sigma}$
    \State For each $\nu\in\demandset$ with $\barx_{\mu_{l}\nu}>0$
 			set $\barx_{\mu_{l}\nu} \assign \bary_{\mu_l}$ and $\barx_{\sigma \nu} \assign \bary_{\sigma}$
    \State For each $j'\in\clientset$ with $\tildex_{\mu_{l} j'}>0$ (including $j$)
			set $\tildex_{\mu_l j'} \assign \bary_{\mu_l}$ and $\tildex_{\sigma j'} \assign \bary_\sigma$
	\State (All other new connection values are set to $0$)
    \State \Return $\wtildeN_1(j) = \{\mu_{1},\ldots,\mu_{l-1}, \mu_{l}\}$    				
    \EndFunction

    \Function{\AugmentToUnit}{$\nu, j, \facilityset, \tildebfx, \barbfx, \barbfy$}
    					\Comment $\nu$ is a demand of client $j$
    \While{$\sum_{\mu\in \facilityset} \barx_{\mu\nu} <1$}
    					\Comment upon return, $\sum_{\mu\in\wbarN(\nu)} \barx_{\mu\nu} = 1$
    \State Let $\eta$ be any facility such that $\tildex_{\eta j} > 0$
    \If{$1-\sum_{\mu\in \facilityset} \barx_{\mu\nu} \geq \tildex_{\eta j}$}
    \State $\barx_{\eta\nu} \assign \tildex_{\eta j}, \tildex_{\eta j} \assign 0$
    \Else
    \State Create a new facility $\sigma$ at the same site as $\eta$ and add it to $\facilityset$
    					\Comment split $\eta$
    \State Let $\bary_\sigma \assign 1-\sum_{\mu\in \facilityset} \barx_{\mu\nu}, \bary_{\eta} \assign \bary_{\eta} - \bary_{\sigma}$
    \State Set $\barx_{\sigma\nu}\assign \bary_{\sigma},\; \barx_{\eta \nu} \assign  0,\; \tildex_{\eta j} \assign \bary_{\eta}, \; \tildex_{\sigma j} \assign 0$
    \State For each $\nu' \neq \nu$ with $\barx_{\eta \nu'}>0$, set $\barx_{\eta \nu'} \assign \bary_{\eta},\; \barx_{\sigma \nu'} \assign \bary_{\sigma}$
    \State For each $j' \neq j$ with $\tildex_{\eta j'}>0$, set $\tildex_{\eta j'} \assign \bary_{\eta}, \tildex_{\sigma j'} \assign \bary_{\sigma}$
	\State  (All other new connection values are set to $0$)
    \EndIf
    \EndWhile
    \EndFunction
  \end{algorithmic}
\end{algorithm}


The next step is to pick a client $p$ with minimum
$\tcc(p)+\alpha_p^\ast$ and create a demand $\nu$ for $p$
(Lines~9--10). If $\wtildeN_1(p)$ overlaps the neighborhood
of some existing primary demand $\kappa$ (if there are
multiple such $\kappa$'s, pick any of them), we assign $\nu$
to $\kappa$, and $\nu$ acquires all the connection values
$\tildex_{\mu p}$ between client $p$ and facility $\mu$ in
$\wtildeN(p)\cap \wbarN(\kappa)$ (Lines~11--13). Note that
although we check for overlap with $\wtildeN_1(p)$, we then
move all facilities in the intersection with $\wtildeN(p)$,
a bigger set, into $\wbarN(\nu)$.  The other case is when
$\wtildeN_1(p)$ is disjoint from the neighborhoods of all
existing primary demands. Then, in Lines~15--16, $\nu$
becomes itself a primary demand and we assign $\nu$ to
itself. It also inherits the connection values to all
facilities $\mu\in\wtildeN_1(p)$ from $p$ (recall that
$\tildex_{\mu p} = \bary_{\mu}$), with all other
$\barx_{\mu\nu}$ values set to $0$.

At this point all primary demands satisfy
Property~(PS.\ref{PS:one}), but this may not be true for
non-primary demands. For those demands we still may need to
adjust the $\barx_{\mu\nu}$ values so that the total
connection value for $\nu$, that is $\connsum(\nu) \stackrel{\mathrm{def}}{=}
\sum_{\mu\in\facilityset}\barx_{\mu \nu}$, is equal $1$. This
is accomplished by Procedure $\AugmentToUnit()$ (definition
in Pseudocode~\ref{alg:helper}, Lines~10--21) that allocates
to $\nu\in j$ some of the remaining connection values
$\tildex_{\mu j}$ of client $j$ (Lines 19--21).
$\AugmentToUnit()$ will repeatedly pick any facility $\eta$ with
$\tildex_{\eta j} >0$.  If $\tildex_{\eta j} \leq
1-\connsum(\nu)$, then the connection value $\tildex_{\eta
  j}$ is reassigned to $\nu$. 
Otherwise, $\tildex_{\eta j} >
1-\connsum(\nu)$, in which case we split $\eta$ so that
connecting $\nu$ to one of the created copies of $\eta$ will
make $\connsum(\nu)$ equal $1$, and we'll be done.

\smallskip

Notice that we start with $|\sitesset|$ facilities and in
each iteration of the while loop in Line~5 (Pseudocode~\ref{alg:lpr2}) each client causes at most one split.
 We have a total of no more than $R|\clientset|$ iterations as in
each iteration we create one demand. (Recall that $R =
\max_jr_j$.) In Phase 2 we do an augment step for each
demand $\nu$ and this creates no more than $R|\clientset|$
new facilities.  So the total number of facilities we
created will be at most $|\sitesset|+ R|\clientset|^2 +
R|\clientset| \leq |\sitesset| + 2R|\clientset|^2$, which is
polynomial in $|\sitesset|+|\clientset|$ due to our earlier
bound on $R$.

\paragraph{Example.}
We now illustrate our partitioning algorithm with an example, where the FTFP instance
has four sites and four clients. The demands are $r_1=1$ and $r_2=r_3=r_4=2$.
The facility costs are $f_i = 1$ for all $i$. The distances are defined as follows: 
$d_{ii} = 3$ for $i=1,2,3,4$ and $d_{ij} = 1$ for all $i\neq j$. 
Solving the LP(\ref{eqn:fac_primal}), we obtain the fractional solution given in
Table~\ref{tbl:example_opt}.
{
\small
\begin{table}[ht]
\hfill
\setlength{\extrarowheight}{4pt}
\begin{subtable}{0.2\textwidth}
  \centering
  \begin{tabular}{c | c c c c | c }
    $x_{ij}^\ast$ & $1$ & $2$ & $3$ & $4$ & $y_{i}^\ast$\\
    \hline
    $1$ & 0 & $\fourthirds$ & $\fourthirds$ & $\fourthirds$ & $\fourthirds$ \\
    $2$ & $\onethird$ & 0 & $\onethird$ & $\onethird$ & $\onethird$ \\
    $3$ & $\onethird$ & $\onethird$ & 0 & $\onethird$ & $\onethird$ \\
    $4$ & $\onethird$ & $\onethird$ & $\onethird$ & 0 & $\onethird$ \\
  \end{tabular}
  \subcaption{}
  \label{tbl:example_opt}
\end{subtable}
\hspace{0.8in}
\begin{subtable}{0.4\textwidth}
  \centering
  \begin{tabular}{c | c c c c c c c | c} 
    $\barx_{\mu\nu}$ & $1'$ & $2'$ & $2''$ & $3'$ & $3''$ & $4'$ & $4''$ & $\bary_{\mu}$ \\
    \hline
    $\dot{1}$ & 0 & 1 & 0 & 1 & 0 & 1 & 0 & 1\\
    $\ddot{1}$ & 0 & 0 & $\onethird$ & 0 & $\onethird$ & 0 & $\onethird$ & $\onethird$ \\
    $\dot{2}$ & $\onethird$ & 0 & 0 & 0 & $\onethird$ & 0 & $\onethird$  & $\onethird$ \\
    $\dot{3}$ & $\onethird$ & 0 & $\onethird$ & 0 & 0 & 0 & $\onethird$  & $\onethird$ \\
    $\dot{4}$ & $\onethird$ & 0 & $\onethird$ & 0 & $\onethird$ & 0 & 0 & $\onethird$ \\
  \end{tabular}
  \subcaption{}
  \label{tbl:example_part}
\end{subtable}
\hfill{\ }
\caption{
  An example of an execution of the partitioning algorithm.
  (a) An optimal fractional solution $x^\ast,y^\ast$.
  (b) The partitioned solution. $j'$ and $j''$ denote the first and second demand of a client $j$, 
	and $\dot{\imath}$ and $\ddot{\imath}$ denote the first and second facility at site $i$.}
\end{table}
}

It is easily seen that the fractional solution in
Table~\ref{tbl:example_opt} is optimal and complete ($x_{ij}^\ast > 0$
implies $x_{ij}^\ast = y_i^\ast$). The dual optimal solution has all
$\alpha_j^\ast = 4/3$ for $j=1,2,3,4$.

Now we perform Phase 1, the adaptive partitioning, following the
description in Pseudocode~\ref{alg:lpr2}. To streamline the
presentation, we assume that all ties are broken in favor of
lower-numbered clients, demands or facilities.  First we create one
facility at each of the four sites, denoted as $\dot{1}$, $\dot{2}$,
$\dot{3}$ and $\dot{4}$ (Line~2--4, Pseudocode~\ref{alg:lpr2}).  We
then execute the ``while'' loop in Line 5
Pseudocode~\ref{alg:lpr2}. This loop will have seven iterations.
Consider the first iteration. In Line 7--8 we compute $\tcc(j)$ for
each client $j=1,2,3,4$ in $U$. When computing $\wtildeN_1(2)$,
facility $\dot{1}$ will get split into $\dot{1}$ and $\ddot{1}$ with
$\bary_{\dot{1}}=1$ and $\bary_{\ddot{1}} = 1/3$. (This will happen in
Line~4--7 of Pseudocode~\ref{alg:helper}.)  Then, in Line~9 we will
pick client $p=1$ and create a demand denoted as $1'$ (see
Table~\ref{tbl:example_part}). Since there are no primary demands yet,
we make $1'$ a primary demand with $\wbarN(1') = \wtildeN_1(1) =
\{\dot{2}, \dot{3}, \dot{4}\}$. Notice that client $1$ is exhausted
after this iteration and $U$ becomes $\{2,3,4\}$.

In the second iteration we compute $\tcc(j)$ for $j=2,3,4$ and pick
client $p=2$, from which we create a new demand $2'$. We have
$\wtildeN_1(2) = \{\dot{1}\}$, which is disjoint from $\wbarN(1')$. So
we create a demand $2'$ and make it primary, and set $\wbarN(2') =
\{\dot{1}\}$. In the third iteration we compute $\tcc(j)$ for
$j=2,3,4$ and again we pick client $p=2$. Since $\wtildeN_1(2) =
\{\ddot{1}, \dot{3}, \dot{4}\}$ overlaps with $\wbarN(1')$, we create
a demand $2''$ and assign it to $1'$. We also set $\wbarN(2'') =
\wbarN(1') \cap \wtildeN(2) = \{\dot{3}, \dot{4}\}$. After this
iteration client $2$ is exhausted and we have $U = \{3,4\}$.

In the fourth iteration we compute $\tcc(j)$ for client $j=3,4$. We
pick $p=3$ and create demand $3'$. Since $\wtildeN_1(3) = \{\dot{1}\}$
overlaps $\wbarN(2')$, we assign $3'$ to $2'$ and set
$\wbarN(3') = \{\dot{1}\}$. In the fifth iteration we compute
$\tcc(j)$ for client $j=3,4$ and pick $p=3$ again. At this time
$\wtildeN_1(3) = \{\ddot{1},\dot{2},\dot{4}\}$, which overlaps with
$\wbarN(1')$. So we create a demand $3''$ and assign it to $1'$, as
well as set $\wbarN(3'') = \{\dot{2}, \dot{4}\}$.

In the last two iterations we will pick client $p=4$ twice and
create demands $4'$ and $4''$. For $4'$ we have $\wtildeN_1(4) =
\{\dot{1}\}$ so we assign $4'$ to $2'$ and set $\wbarN(4') =
\{\dot{1}\}$. For $4''$ we have $\wtildeN_1(4) = \{\ddot{1}, \dot{2},
\dot{3}\}$ and we assign it to $1'$, as well as set $\wbarN(4'') =
\{\dot{2}, \dot{3}\}$.

Now that all clients are exhausted we perform Phase 2, the augmenting
phase, to construct a fractional solution in which all demands have
total connection value equal to $1$.  We iterate through each of the
seven demands created, that is $1',2',2'',3',3'',4',4''$.  $1'$ and $2'$
already have neighborhoods with total connection value of $1$, so
nothing will change in the first two iterations.
$2''$ has $\dot{3},\dot{4}$ in its neighborhood, with total connection value of
$2/3$, and $\wtildeN(2) = \{\ddot{1}\}$ at this time, so we add
$\ddot{1}$ into $\wbarN(2'')$ to make $\wbarN(2'') = \{\ddot{1},
\dot{3}, \dot{4}\}$ and now $2''$ has total connection value of
$1$. Similarly, $3''$ and $4''$ each get $\ddot{1}$ added to their
neighborhood and end up with total connection value of $1$. The other
two demands, namely $3'$ and $4'$, each have $\dot{1}$ in its
neighborhood so each of them has already its total connection value
equal $1$. This completes Phase 2.

The final partitioned fractional solution is given in
Table~\ref{tbl:example_part}. We have created a total of five
facilities $\dot{1}, \ddot{1}, \dot{2}, \dot{3}, \dot{4}$, and seven
demands, $1',2',2'',3',3'',4',4''$. It can be verified that all the
stated properties are satisfied.



\medskip

\emparagraph{Correctness.}  We now show that all the
required properties (PS), (CO), (PD) and (SI) are satisfied
by the above construction.

Properties~(PS) and (CO) follow directly from the
algorithm. (CO) is implied by the completeness condition
(c1) that the algorithm maintains after each
iteration. Condition~(PS.\ref{PS:one}) is a result of
calling Procedure~$\AugmentToUnit()$ in Line~21. To see that
(PS.\ref{PS:xij}) holds, note that
at each step the algorithm maintains the
invariant that, for every $i\in\sitesset$ and
$j\in\clientset$, we have $\sum_{\mu\in i}\sum_{\nu \in j}
\barx_{\mu \nu} + \sum_{\mu\in i} \tildex_{\mu j} =
x_{ij}^\ast$. In the end, we will create $r_j$ demands for
each client $j$, with each demand $\nu\in j$ satisfying
(PS.\ref{PS:one}), and thus $\sum_{\nu\in
  j}\sum_{\mu\in\facilityset}\barx_{\mu\nu}=r_j$.  This
implies that $\tildex_{\mu j}=0$ for every facility
$\mu\in\facilityset$, and (PS.\ref{PS:xij}) follows.
(PS.\ref{PS:yi}) holds because every time we split a
facility $\mu$ into $\mu'$ and $\mu''$, the sum of
$\bary_{\mu'}$ and $\bary_{\mu''}$ is equal to the old value of
$\bary_{\mu}$.

Now we deal with properties in group (PD).  First,
(PD.\ref{PD:disjoint}) follows directly from the algorithm,
Pseudocode~\ref{alg:lpr2} (Lines 14--16), since every
primary demand has its neighborhood fixed when created, and
that neighborhood is disjoint from those of the existing primary
demands.

Property (PD.\ref{PD:yi}) follows from (PD.\ref{PD:disjoint}), (CO) and
(PS.\ref{PS:yi}). In more detail, it can be justified as
follows. By (PD.\ref{PD:disjoint}), for each $\mu\in i$ there
is at most one $\kappa\in P$ with $\barx_{\mu\kappa} > 0$
and we have $\barx_{\mu\kappa} = \bary_{\mu}$ due do (CO).
Let $K\subseteq i$ be the set of those $\mu$'s for which
such $\kappa\in P$ exists, and denote this $\kappa$ by
$\kappa_\mu$. Then, using conditions (CO) and
(PS.\ref{PS:yi}), we have $ \sum_{\mu\in i}\sum_{\kappa\in
  P}\barx_{\mu\kappa} = \sum_{\mu\in K}\barx_{\mu\kappa_\mu}
= \sum_{\mu\in K}\bary_{\mu} \leq \sum_{\mu\in i}
\bary_{\mu} = y_i^\ast$.

Property (PD.\ref{PD:assign:overlap}) follows from the way the algorithm
assigns primary demands.  When demand $\nu$ of
client $p$ is assigned to a primary demand $\kappa$ in
Lines~11--13 of Pseudocode~\ref{alg:lpr2}, we move all
facilities in $\wtildeN(p)\cap \wbarN(\kappa)$ (the
intersection is nonempty) into $\wbarN(\nu)$, and we never
remove a facility from $\wbarN(\nu)$.  We postpone the proof 
for (PD.\ref{PD:assign:cost}) to Lemma~\ref{lem: PD:assign:cost holds}.

Finally we argue that the properties in group (SI)
hold. (SI.\ref{SI:siblings disjoint}) is easy, since for any client
$j$, each facility $\mu$ is added to the neighborhood of at most one
demand $\nu\in j$, by setting $\barx_{\mu\nu}$ to $\bary_\mu$, while
other siblings $\nu'$ of $\nu$ have $\barx_{\mu\nu'}=0$. Note that
right after a demand $\nu\in p$ is created, its neighborhood is
disjoint from the neighborhood of $p$, that is $\wbarN(\nu)\cap
\wtildeN(p) = \emptyset$, by Lines~11--13 of the algorithm. Thus all
demands of $p$ created later will have neighborhoods disjoint from the
set $\wbarN(\nu)$ before the augmenting phase 2. Furthermore,
Procedure~$\AugmentToUnit()$ preserves this property, because when it
adds a facility to $\wbarN(\nu)$ then it removes it from
$\wtildeN(p)$, and in case of splitting, one resulting facility is
added to $\wbarN(\nu)$ and the other to $\wtildeN(p)$. Property
(SI.\ref{SI:primary disjoint}) is shown below in Lemma~\ref{lem:
  property SI:primary disjoint holds}.

It remains to show Properties~(PD.\ref{PD:assign:cost}) and
(SI.\ref{SI:primary disjoint}). We show them in the lemmas
below, thus completing the description of our adaptive
partition process.


\begin{lemma}\label{lem: property SI:primary disjoint holds}
  Property~(SI.\ref{SI:primary disjoint}) holds after the
  Adaptive Partitioning stage.
\end{lemma}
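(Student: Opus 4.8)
The plan is to track how sibling demands of a fixed client $p$ acquire their neighborhoods, and to show that once a demand $\nu \in p$ is assigned to a primary demand $\kappa$, no later sibling $\nu' \in p$ can pick up a facility from $\wbarN(\kappa)$. The key observation I would start from is the one already noted informally in the correctness discussion: when a demand $\nu \in p$ is created in Lines~11--13 of Pseudocode~\ref{alg:lpr2}, the algorithm sets $\tildex_{\mu p} \assign 0$ for every $\mu \in \wtildeN(p) \cap \wbarN(\kappa)$, so immediately after that iteration $\wtildeN(p) \cap \wbarN(\kappa) = \emptyset$. Thus the invariant I want is: \emph{for every primary demand $\kappa$ to which some demand of $p$ has been assigned, $\wtildeN(p) \cap \wbarN(\kappa) = \emptyset$ holds from that point onward.}

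First I would verify this invariant is preserved by all subsequent operations. There are three things that could disturb it: (1) a later iteration creates another demand $\nu'' \in p$ assigned to some \emph{other} primary $\kappa''$ — this only removes facilities from $\wtildeN(p)$, so it cannot create a new intersection with $\wbarN(\kappa)$, and since $\wbarN(\kappa)$ was frozen when $\kappa$ was created (Property~(PD.\ref{PD:disjoint}) and Lines~14--16), $\wbarN(\kappa)$ does not grow either; (2) a facility split in $\NearestUnitChunk$ or $\AugmentToUnit$ triggered by \emph{another} client $j' \neq p$ — here a facility $\mu$ with $\mu \in \wtildeN(p)$ might split into $\mu, \sigma$, but then the split code copies $\tildex_{\mu p}$ to both copies proportionally to $\bary$, so $\wtildeN(p)$ gains $\sigma$; however $\mu \notin \wbarN(\kappa)$ (by the invariant) and $\sigma$ is brand new, so neither is in $\wbarN(\kappa)$; similarly a split triggered by a \emph{demand} of $\kappa$ in $\AugmentToUnit$ would add a new facility $\sigma$ to $\wbarN(\kappa)$, but $\sigma$ is brand new and hence not in $\wtildeN(p)$; (3) the augmenting phase for a demand of $p$ itself, which only adds facilities to $\wbarN(\nu)$ for $\nu \in p$ while removing them from $\wtildeN(p)$ — again this shrinks $\wtildeN(p)$ and does not touch $\wbarN(\kappa)$ except possibly via a split that adds a fresh facility to $\wbarN(\kappa)$, which is not in $\wtildeN(p)$.

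With the invariant in hand, the lemma follows: suppose $\nu \in p$ is assigned to $\kappa$, and let $\nu' \in p$ be any sibling. If $\nu'$ was created \emph{after} $\nu$, then at the moment $\nu'$ is created we have $\wbarN(\nu') \subseteq \wtildeN_1(p) \subseteq \wtildeN(p)$ (in the primary case) or $\wbarN(\nu') \subseteq \wtildeN(p) \cap \wbarN(\kappa')$ for its assigned primary $\kappa'$ — in either case $\wbarN(\nu') \subseteq \wtildeN(p)$ at creation time, which is disjoint from $\wbarN(\kappa)$ by the invariant; and the only facilities later added to $\wbarN(\nu')$ (in Phase~2) come from the then-current $\wtildeN(p)$, still disjoint from $\wbarN(\kappa)$, or are freshly split copies of such facilities, hence also not in $\wbarN(\kappa)$. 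If instead $\nu'$ was created \emph{before} $\nu$, apply the same argument with the roles reversed: $\nu'$ is assigned to some primary $\kappa'$, and by Lines~11--13 (or 14--16) together with (SI.\ref{SI:siblings disjoint}) the set $\wbarN(\nu) \subseteq \wtildeN(p)$ at the time $\nu$ is created was disjoint from $\wbarN(\kappa')$; but $\nu$ is assigned to $\kappa$ and $\wbarN(\nu) \cap \wbarN(\kappa) \neq \emptyset$, so by (PD.\ref{PD:disjoint}) we must have $\kappa \neq \kappa'$ (otherwise $\wbarN(\nu)$ would meet $\wbarN(\kappa')$), and then $\wbarN(\nu') \cap \wbarN(\kappa) = \emptyset$ because $\wbarN(\nu') \subseteq \wbarN(\kappa')$ (plus fresh split copies) and $\wbarN(\kappa') \cap \wbarN(\kappa) = \emptyset$ by (PD.\ref{PD:disjoint}). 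Finally, the ``in particular'' clause — different siblings go to different primaries — is immediate: if $\nu, \nu'$ were both assigned to the same $\kappa$, then $\wbarN(\nu') \cap \wbarN(\kappa) \supseteq \wbarN(\nu') \neq \emptyset$ (using (PD.\ref{PD:assign:overlap}) applied to $\nu'$, which forces $\wbarN(\nu') \cap \wbarN(\kappa) \neq \emptyset$), contradicting what we just proved.

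I expect the main obstacle to be the bookkeeping around facility splits: one must check carefully that \emph{every} splitting event (in both $\NearestUnitChunk$ and $\AugmentToUnit$, triggered by any client or demand) either adds only brand-new facilities to the relevant $\wbarN(\kappa)$ (so they can't be in $\wtildeN(p)$) or, when it grows $\wtildeN(p)$, does so only with facilities that are copies of facilities already outside $\wbarN(\kappa)$ together with brand-new ones — in both cases keeping the intersection empty. This is routine but must be stated explicitly, since it is the only place the invariant could plausibly break.
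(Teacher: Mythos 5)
Your proof follows essentially the same route as the paper's: the paper observes that after Phase~1 each $\wbarN(\nu_s)\subseteq\wbarN(\kappa_s)$ (so sibling neighborhoods inherit disjointness from (PD.\ref{PD:disjoint})), and that no facility in $\wtildeN(j)$ lies in any $\wbarN(\kappa_s)$ by the time Phase~2 runs, which is exactly your invariant $\wtildeN(p)\cap\wbarN(\kappa)=\emptyset$. You are more explicit than the paper about invariant preservation under facility splitting, which is a welcome addition since the paper dispenses with that by reference to ``the way we allocate facilities in Lines~13 and 16.''

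One small inaccuracy worth fixing: in your case where $\nu'$ was created before $\nu$, you conclude with ``$\wbarN(\nu')\subseteq\wbarN(\kappa')$ (plus fresh split copies)'' — but after Phase~2 this containment fails, because $\AugmentToUnit$ adds to $\wbarN(\nu')$ facilities taken from $\wtildeN(p)$, and these need not be (copies of) facilities in $\wbarN(\kappa')$. The conclusion is still right, but the clean justification is the one you already gave in the other case: $\wbarN(\nu')\subseteq\wbarN(\kappa')$ holds at the end of Phase~1, and the Phase-2 additions to $\wbarN(\nu')$ are drawn from $\wtildeN(p)$, which by your invariant (already in force once $\nu$ is created and assigned to $\kappa$) is disjoint from $\wbarN(\kappa)$. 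Stating it that way makes both orderings of $\nu,\nu'$ go through uniformly, which is essentially how the paper phrases it.
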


\begin{proof}
  Let $\nu_1,\ldots,\nu_{r_j}$ be the demands of a client
  $j\in\clientset$, listed in the order of creation, and, for each
  $q=1,2,\ldots,r_j$, denote by $\kappa_q$ the primary demand that
  $\nu_q$ is assigned to. After the completion of Phase~1 of
  Pseudocode~\ref{alg:lpr2} (Lines 5--18), we have
  $\wbarN(\nu_s)\subseteq \wbarN(\kappa_s)$ for  $s=1,\ldots,r_j$. 
Since any two primary demands have disjoint
  neighborhoods, we have $\wbarN(\nu_s) \cap \wbarN(\kappa_q) =
  \emptyset$ for any $s\neq q$, that is
	Property~(SI.\ref{SI:primary disjoint}) holds right after Phase~1.

        After Phase~1 all neighborhoods $\wbarN(\kappa_s),
        s=1,\ldots,r_j$ have already been fixed and they do not change
        in Phase~2.  None of the facilities in $\wtildeN(j)$ appear in
        any of $\wbarN(\kappa_s)$ for $s=1,\ldots,r_j$, by the way we
        allocate facilities in Lines~13 and 16.  Therefore during the
        augmentation process in Phase~2, when we add facilities from
        $\wtildeN(j)$ to $\wbarN(\nu)$, for some $\nu\in j$
        (Line~19--21 of Pseudocode~\ref{alg:lpr2}), all the required
        disjointness conditions will be preserved.
\end{proof}


We need one more lemma before proving our last property
(PD.\ref{PD:assign:cost}).  For a client $j$ and a demand
$\nu$, we use notation $\tcc^{\nu}(j)$ for the value of
$\tcc(j)$ at the time when $\nu$ was created. (It is not
necessary that $\nu\in j$ but we assume that $j$ is not
exhausted at that time.)

\begin{lemma}\label{lem: tcc optimal}
  Let $\eta$ and $\nu$ be two demands, with $\eta$ created
  no later than $\nu$, and let $j\in\clientset$ be a client
  that is not exhausted when $\nu$ is created. Then we have
\begin{description}
	\item{(a)} $\mbox{\tcc}^\eta(j) \le \mbox{\tcc}^{\nu}(j)$, and 
	\item{(b)} if $\nu\in j$ then $\mbox{\tcc}^\eta(j) \le \concost_{\nu}$.
\end{description}
\end{lemma}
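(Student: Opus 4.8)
The plan is to reduce both parts to one structural fact about the algorithm of Pseudocode~\ref{alg:lpr2}: at every moment of its execution, for a non‑exhausted client $j$, the value $\tcc(j)$ equals the optimum of the ``unit‑chunk'' linear program
\[
  \tcc(j)\;=\;\min\Bigl\{\, \sum_{\mu\in\facilityset} d_{\mu j}\,z_\mu \;:\; 0\le z_\mu\le\tildex_{\mu j}\ \forall\mu,\ \ \sum_{\mu\in\facilityset} z_\mu=1 \,\Bigr\},
\]
since Procedure~$\NearestUnitChunk$ returns exactly the greedy closest‑first solution, which is optimal for this fractional‑knapsack‑type program. (Feasibility holds whenever $j$ is not exhausted: the invariant $\sum_{\mu\in i}\sum_{\nu\in j}\barx_{\mu\nu}+\sum_{\mu\in i}\tildex_{\mu j}=x^\ast_{ij}$ from the correctness discussion gives $\sum_{\mu}\tildex_{\mu j}=r_j-\sum_{\nu\in j}\connsum(\nu)\ge\tilder_j\ge1$, because each demand of $j$ created so far has $\connsum\le1$ during Phase~1.) The first real step is a \emph{monotonicity of residuals} claim: for every client $j$ and every distance threshold $t$, the quantity $\sum_{\mu\,:\,d_{\mu j}\le t}\tildex_{\mu j}$ is non‑increasing throughout the execution. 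Indeed, $\tildex_{\cdot j}$ changes only through (i) facility splits, which merely split a facility's $\tildex_{\cdot j}$‑value among copies at the same site, hence at the same distance to $j$, leaving all threshold sums unchanged, and (ii) the zeroings $\tildex_{\mu j}\assign 0$ done when a demand of $j$ is created or augmented, which only decrease these sums; steps triggered by other clients affect $\tildex_{\cdot j}$ only via splits.

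Part~(a) then follows: decreasing the residual upper bounds $\tildex_{\cdot j}$ in this threshold order shrinks the feasible region of the unit‑chunk LP while keeping it feasible, so the optimum can only grow, and a split does not change the optimum (merging the $z$‑values of a facility's copies is a cost‑preserving bijection between the two feasible sets). Since $\eta$ is created no later than $\nu$ and $j$ is not exhausted at either time, we conclude $\tcc^\eta(j)\le\tcc^\nu(j)$.

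For part~(b), by~(a) it suffices to prove $\tcc^\nu(j)\le\concost_\nu$ when $\nu\in j$. The idea is that all of $\nu$'s connection value is carved out of $\tildex_{\cdot j}$. Every facility in the final set $\facilityset$ is a split‑descendant of a unique facility present at the moment $\nu$ is created, and all descendants of a facility $\mu_0$ stay at distance $d_{\mu_0 j}$ from $j$. For each facility $\mu_0$ present at $\nu$'s creation, set $z_{\mu_0}$ to be the total value $\barx_{\cdot\nu}$ that $\nu$ eventually holds on descendants of $\mu_0$ (acquired either in Lines~13 or~16 at $\nu$'s creation, or later in the Phase‑2 call $\AugmentToUnit(\nu,\ldots)$). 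Since $\nu$ only ever draws that value out of $\tildex_{\cdot j}$, and the residual mass sitting under $\mu_0$ never exceeds its value $\tildex_{\mu_0 j}$ at $\nu$'s creation (it only decreases, splits being neutral), we get $0\le z_{\mu_0}\le\tildex_{\mu_0 j}$; moreover $\sum_{\mu_0}z_{\mu_0}=\connsum(\nu)=1$ by~(PS.\ref{PS:one}) and $\concost_\nu=\sum_{\mu_0}d_{\mu_0 j}\,z_{\mu_0}$. Thus $z$ is feasible for the unit‑chunk LP of $j$ read off at $\nu$'s creation, whence $\concost_\nu\ge\tcc^\nu(j)$.

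The step I expect to be the main obstacle is the bookkeeping around facility splitting: one has to check carefully that every split is genuinely neutral, both for the $\tcc$‑value (it changes no residual threshold sum and no unit‑chunk LP optimum) and for the ancestor‑accounting of part~(b) (descendants of an ancestor remain equidistant to $j$, and the residual mass under any ancestor never grows). A secondary detail to nail down is that in the assignment case (Lines~11--13) the demand $\nu$ picks up strictly less than one unit at creation, so that the Phase‑2 top‑up fuses with it into exactly one unit of $\tildex_{\cdot j}$‑mass; this holds because $\sum_{\mu\in\wbarN(\kappa)}\bary_\mu=1$ is an invariant for every primary demand $\kappa$ while $\tildex_{\mu p}\le\bary_\mu$ by condition~(c2).
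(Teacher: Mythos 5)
Your proposal is correct and follows essentially the same route as the paper: both identify $\tcc(j)$ as the optimum of a fractional knapsack-covering LP whose box constraints are the current residuals $\tildex_{\cdot j}$, observe that splits leave these constraints unchanged (up to a cost-preserving bijection) while demand creation only tightens them, and then exhibit $\tcc^\nu(j)$ and $\concost_\nu$ as feasible points of the LP at the earlier time. The only organizational difference is that the paper proves both parts at once by working at $\eta$'s creation time, whereas you prove (a) first via the monotonicity-of-residuals invariant and then reduce (b) to showing $\concost_\nu \ge \tcc^\nu(j)$ at $\nu$'s creation time; this is a cosmetic reshuffling, not a different argument, and both steps are handled with the same split-neutrality bookkeeping the paper uses.
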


\begin{proof}
  We focus first on the time when demand $\eta$ is about to be created,
  right after the call to $\NearestUnitChunk()$ in
  Pseudocode~\ref{alg:lpr2}, Line~7.  Let $\wtildeN(j) =
  \{\mu_1,...,\mu_q\}$ with all facilities $\mu_s$ ordered
  according to nondecreasing distance from $j$.  Consider
  the following linear program:
\begin{alignat*}{1}
	\textrm{minimize} \quad & \sum_s d_{\mu_s j}z_s
			\\
	\textrm{subject to} \quad & \sum_s z_s  \ge 1
			\\
 	0 &\le z_s \le \tildex_{\mu_s j} \quad \textrm{for all}\ s
\end{alignat*}
  This is a fractional
  minimum knapsack covering problem (with knapsack size equal $1$) and its optimal fractional
  solution is the greedy solution, whose value is exactly
  $\tcc^\eta(j)$.  

On the other hand, we claim that
  $\tcc^{\nu}(j)$ can be thought of as the value of some feasible
  solution to this linear program, and that the same is true for $\concost_{\nu}$ if $\nu\in j$.
  Indeed, each of these
  quantities involves some later values $\tildex_{\mu j}$,
  where $\mu$ could be one of the facilities $\mu_s$ or a
  new facility obtained from splitting. For each $s$,
  however, the sum of all values $\tildex_{\mu j}$,
  over the facilities $\mu$ that were split from $\mu_s$, cannot exceed
 the value $\tildex_{\mu_s j}$ at the time when
  $\eta$ was created, because splitting facilities preserves this sum and
 creating new demands for $j$ can only decrease it.
Therefore both quantities
  $\tcc^{\nu}(j)$ and $\concost_{\nu}$ (for $\nu\in j$) correspond to some
  choice of the $z_s$ variables (adding up to $1$), and the
  lemma follows.
\end{proof}


\begin{lemma}\label{lem: PD:assign:cost holds}
Property~(PD.\ref{PD:assign:cost}) holds after the Adaptive Partitioning stage.
\end{lemma}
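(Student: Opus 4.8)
The plan is to split into the two cases that arise when the partitioning loop processes the demand $\nu$ it creates for a client $p$. If $\nu$ falls into the \textbf{else} branch (Lines~15--16 of Pseudocode~\ref{alg:lpr2}) it is made primary and assigned to itself, so (PD.\ref{PD:assign:cost}) reads $\concost_\nu + \alpha_\nu^\ast \ge \concost_\nu + \alpha_\nu^\ast$ and holds trivially. The substantive case is when $\nu$ is assigned to an already existing primary demand $\kappa$, which is a primary demand of some client $p'$ created at an earlier iteration.

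First I would record two facts about $\kappa$. (i) When $\kappa$ was made primary the algorithm set $\wbarN(\kappa)=\wtildeN_1(p')$ and $\barx_{\mu\kappa}=\tildex_{\mu p'}$ on it, so at that moment $\concost_\kappa = \tcc^\kappa(p')$; moreover $\concost_\kappa$ never changes afterwards, because splitting a facility of $\wbarN(\kappa)$ only replaces it by copies at the same site whose $\barx$-values to $\kappa$ sum to the old value, and primary demands are untouched in Phase~2 (they already satisfy (PS.\ref{PS:one})). (ii) At the iteration creating $\kappa$, client $p'$ was the minimizer of $\tcc(j)+\alpha_j^\ast$ over $j\in U$ (Line~9), and client $p$ belongs to $U$ at that moment, because $\nu\in p$ is created later and hence $p$ is not yet exhausted. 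Combining (i) and (ii),
\[
  \concost_\kappa + \alpha_\kappa^\ast \;=\; \tcc^\kappa(p') + \alpha_{p'}^\ast \;\le\; \tcc^\kappa(p) + \alpha_p^\ast \;=\; \tcc^\kappa(p) + \alpha_\nu^\ast .
\]

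To finish, I would invoke Lemma~\ref{lem: tcc optimal}(b) with $\eta=\kappa$ and $j=p$: indeed $\kappa$ is created no later than $\nu$, client $p$ is not exhausted when $\nu$ is created, and $\nu\in p$, so the lemma gives $\tcc^\kappa(p)\le\concost_\nu$. Chaining this with the displayed inequality yields
\[
  \concost_\nu + \alpha_\nu^\ast \;\ge\; \tcc^\kappa(p) + \alpha_\nu^\ast \;\ge\; \concost_\kappa + \alpha_\kappa^\ast ,
\]
which is exactly Property~(PD.\ref{PD:assign:cost}).

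I expect the only delicate point to be the inequality $\tcc^\kappa(p)\le\concost_\nu$, i.e. precisely the content of Lemma~\ref{lem: tcc optimal}: since $\concost_\nu$ is assembled from the residual connection values $\tildex_{\mu p}$ available much later (and from facilities split off from those counted in $\wtildeN(p)$ at the time $\kappa$ was created), one must see that it still realizes a feasible point of the fractional minimum-knapsack-cover LP whose greedy optimum equals $\tcc^\kappa(p)$. Everything else — stability of $\concost_\kappa$ under later splits, and the membership $p\in U$ at the iteration creating $\kappa$ — is routine verification against the pseudocode.
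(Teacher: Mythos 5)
Your proof is correct and follows essentially the same route as the paper's. The key chain
\[
  \concost_\kappa + \alpha_\kappa^\ast \;=\; \tcc^\kappa(p') + \alpha_{p'}^\ast \;\le\; \tcc^\kappa(p) + \alpha_p^\ast \;\le\; \concost_\nu + \alpha_\nu^\ast
\]
is exactly the paper's derivation (with client names swapped), resting on the same two ingredients: the minimizer choice in Line~9 of Pseudocode~\ref{alg:lpr2}, and Lemma~\ref{lem: tcc optimal}(b) applied with $\eta=\kappa$ and the client of $\nu$. Your explicit handling of the trivial case $\nu=\kappa$, and the remark that $\concost_\kappa$ is invariant under later facility splits (so it equals $\tcc^\kappa(p')$ forever), are small pieces of bookkeeping that the paper leaves implicit but that you supply correctly.
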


\begin{proof}
Suppose that demand $\nu\in j$ is assigned to some primary demand $\kappa\in p$.
Then
\begin{eqnarray*}
 \concost_{\kappa} + \alpha_{\kappa}^\ast \;=\; \tcc^{\kappa}(p) + \alpha^\ast_p
 					\;\le\; \tcc^{\kappa}(j) + \alpha^\ast_j   
					\;\le\; \concost_{\nu} + \alpha^\ast_\nu.
\end{eqnarray*}
We now justify this derivation. By definition we have
$\alpha_{\kappa}^\ast = \alpha^\ast_p$.  Further, by the
algorithm, if $\kappa$ is a primary demand of client $p$,
then $\concost_{\kappa}$ is equal to $\tcc(p)$ computed when
$\kappa$ is created, which is exactly $\tcc^{\kappa}(p)$. Thus
the first equation is true. The first inequality follows
from the choice of $p$ in Line~9 in
Pseudocode~\ref{alg:lpr2}. The last inequality holds
because $\alpha^\ast_j = \alpha^\ast_\nu$ (due to $\nu\in
j$), and because $\tcc^{\kappa}(j) \le \concost_{\nu}$, which
follows from Lemma~\ref{lem: tcc optimal}.
\end{proof}

We have thus proved that all properties (PS), (CO), (PD) and (SI) hold
for our partitioned fractional solution $(\barbfx,\barbfy)$. In the
following sections we show how to use these properties to round the
fractional solution to an approximate integral solution. For the
$3$-approximation algorithm (Section~\ref{sec: 3-approximation}) and
the $1.736$-approximation algorithm (Section~\ref{sec:
  1.736-approximation}), the first phase of the algorithm is exactly
the same partition process as described above. However, the
$1.575$-approximation algorithm (Section~\ref{sec:
  1.575-approximation}) demands a more sophisticated partitioning
process as the interplay between close and far neighborhood of sibling
demands result in more delicate properties that our partitioned
fractional solution must satisfy.

\section{Algorithm~{\EGUP} with Ratio $3$}
\label{sec: 3-approximation}

With the partitioned FTFP instance and its associated fractional
solution in place, we now begin to introduce our rounding algorithms.
The algorithm we describe in this section achieves ratio $3$. Although
this is still quite far from our best ratio $1.575$ that we derive
later, we include this algorithm in the paper to illustrate, in a
relatively simple setting, how the properties of our partitioned
fractional solution are used in rounding it to an integral solution
with cost not too far away from an optimal solution.  The rounding
approach we use here is an extension of the corresponding method for
UFL described in~\cite{gupta08}.

\paragraph{Algorithm~{\EGUP.}}
At a high level, we would open exactly one facility for each
primary demand $\kappa$, and each non-primary demand is
connected to the facility opened for the primary demand it
was assigned to.

More precisely, we apply a rounding process, guided by the
fractional values $(\bary_{\mu})$ and $(\barx_{\mu\nu})$,
that produces an integral solution. This integral solution
is obtained by choosing a subset of facilities in
$\facilityset$ to open, and for each demand in $\demandset$,
specifying an open facility that this demand will be
connected to.  For each primary demand $\kappa\in P$, we
want to open one facility $\phi(\kappa) \in
\wbarN(\kappa)$. To this end, we use randomization: for each
$\mu\in\wbarN(\kappa)$, we choose $\phi(\kappa) = \mu$ with
probability $\barx_{\mu\kappa}$, ensuring that exactly one
$\mu \in \wbarN(\kappa)$ is chosen. Note that
$\sum_{\mu\in\wbarN(\kappa)}\barx_{\mu\kappa}=1$, so this
distribution is well-defined.  We open this facility
$\phi(\kappa)$ and connect to $\phi(\kappa)$ all demands
that are assigned to $\kappa$.

In our description above, the algorithm is presented as a
randomized algorithm. It can be de-randomized using the
method of conditional expectations, which is commonly used
in approximation algorithms for facility location problems
and standard enough that presenting it here would be
redundant. Readers less familiar with this field are
recommended to consult \cite{ChudakS04}, where the method of
conditional expectations is applied in a context very
similar to ours.


\paragraph{Analysis.}
We now bound the expected facility cost and connection cost
by establishing the two lemmas below.


\begin{lemma}\label{lemma:3fac}
The expectation of facility cost $F_{\smallEGUP}$ of our solution is
  at most $F^\ast$.
\end{lemma}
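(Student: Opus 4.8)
The plan is to compute the expected facility cost contributed by each primary demand separately and sum these contributions, using Property (PD.\ref{PD:disjoint}) to avoid double-counting and Property (PD.\ref{PD:yi}) to bound the total.

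First I would observe that the algorithm opens exactly one facility $\phi(\kappa)\in\wbarN(\kappa)$ for each primary demand $\kappa\in P$, and that these are the only facilities opened. For a fixed primary demand $\kappa$, the facility $\mu\in\wbarN(\kappa)$ is chosen with probability $\barx_{\mu\kappa}$, so the expected facility cost ``charged to $\kappa$'' is exactly $\sum_{\mu\in\wbarN(\kappa)} f_\mu \barx_{\mu\kappa}$. Since the neighborhoods $\wbarN(\kappa)$ for distinct $\kappa\in P$ are pairwise disjoint by (PD.\ref{PD:disjoint}), no facility is opened on behalf of two different primary demands, and therefore
\begin{align*}
\Exp[F_{\smallEGUP}] \;=\; \sum_{\kappa\in P}\sum_{\mu\in\wbarN(\kappa)} f_\mu \barx_{\mu\kappa}
\;=\; \sum_{\mu\in\facilityset} f_\mu \sum_{\kappa\in P}\barx_{\mu\kappa}.
\end{align*}
Here the last equality just regroups the sum by facility; for $\mu\notin\bigcup_{\kappa}\wbarN(\kappa)$ the inner sum is $0$, so nothing is lost.

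Next I would split the outer sum by site. Using $f_\mu = f_i$ for $\mu\in i$, the right-hand side equals $\sum_{i\in\sitesset} f_i \sum_{\mu\in i}\sum_{\kappa\in P}\barx_{\mu\kappa}$. By Property (PD.\ref{PD:yi}) we have $\sum_{\mu\in i}\sum_{\kappa\in P}\barx_{\mu\kappa} \le y_i^\ast$ for every site $i$, and since $f_i\ge 0$ this gives
\begin{align*}
\Exp[F_{\smallEGUP}] \;\le\; \sum_{i\in\sitesset} f_i\, y_i^\ast \;=\; F^\ast,
\end{align*}
which is the desired bound.

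I do not anticipate a serious obstacle here: the argument is essentially bookkeeping, and all the structural facts it relies on — pairwise-disjoint primary neighborhoods, completeness, and the per-site bound $\sum_{\mu\in i}\sum_{\kappa\in P}\barx_{\mu\kappa}\le y_i^\ast$ — were already established as Properties (PD.\ref{PD:disjoint}) and (PD.\ref{PD:yi}) in the Adaptive Partitioning section. The only point requiring a moment's care is making explicit that the opened facilities across different primary demands are genuinely distinct (so the costs add rather than potentially overlap), which is exactly what disjointness of the neighborhoods $\wbarN(\kappa)$ guarantees; the randomization within a single $\wbarN(\kappa)$ picks one $\mu$, and it contributes $f_\mu$ with probability $\barx_{\mu\kappa}$, independently of what happens for other primary demands (though independence across primary demands is not even needed, since linearity of expectation suffices).
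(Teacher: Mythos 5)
Your proof is correct and takes essentially the same route as the paper's: expand $\Exp[F_{\smallEGUP}]$ over primary demands via linearity of expectation, use (PD.\ref{PD:disjoint}) to justify that the per-$\kappa$ contributions do not overlap, regroup the double sum by site, and finish with (PD.\ref{PD:yi}). The only cosmetic difference is that you spell out the intermediate regrouping and the disjointness reasoning a bit more explicitly than the paper does.
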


\begin{proof}
  By Property~(PD.\ref{PD:disjoint}), the neighborhoods of
  primary demands are disjoint. Also, for any primary demand
  $\kappa\in P$, the probability that a facility
  $\mu\in\wbarN(\kappa)$ is chosen as the open facility
  $\phi(\kappa)$ is $\barx_{\mu\kappa}$. Hence the expected
  total facility cost is
\begin{align*}
    \Exp[F_{\smallEGUP}]
	&= \textstyle{\sum_{\kappa\in P}\sum_{\mu\in\wbarN(\kappa)}} f_{\mu} \barx_{\mu\kappa}
	\\
	&= \textstyle{\sum_{\kappa\in P}\sum_{\mu\in\facilityset}} f_{\mu} \barx_{\mu\kappa} 
	\\
	&= \textstyle{\sum_{i\in\sitesset}} f_i \textstyle{\sum_{\mu\in i}\sum_{\kappa\in P}} \barx_{\mu\kappa} 
	\\
	&\leq \textstyle{\sum_{i\in\sitesset}} f_i y_i^\ast 
	= F^\ast,
\end{align*}
where the inequality follows from Property~(PD.\ref{PD:yi}).
\end{proof}


\begin{lemma}\label{lemma:3dist}
The expectation of connection cost $C_{\smallEGUP}$ of our solution
is at most  $C^\ast+2\cdot\LP^\ast$.
\end{lemma}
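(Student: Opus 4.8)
The plan is to bound the expected connection cost demand-by-demand, splitting $\demandset$ into primary and non-primary demands. For a primary demand $\kappa\in P$, the algorithm connects $\kappa$ to the randomly chosen $\phi(\kappa)\in\wbarN(\kappa)$, and since $\phi(\kappa)=\mu$ with probability $\barx_{\mu\kappa}$, the expected connection cost of $\kappa$ is exactly $\sum_{\mu\in\wbarN(\kappa)}d_{\mu\kappa}\barx_{\mu\kappa} = \concost_{\kappa}$. Summing over primary demands gives a contribution of $\sum_{\kappa\in P}\concost_{\kappa}$, which is at most $\sum_{\nu\in\demandset}\concost_{\nu}$, and by Property~(PS.\ref{PS:xij}) this sum equals $C^\ast$. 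So the primary part alone is bounded by $C^\ast$ (in fact contributes to it).

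For a non-primary demand $\nu$ assigned to primary demand $\kappa$, the algorithm connects $\nu$ to $\phi(\kappa)$. I would estimate $d_{\nu\phi(\kappa)}$ using the triangle inequality through some facility $\mu\in\wbarN(\nu)\cap\wbarN(\kappa)$, which is nonempty by Property~(PD.\ref{PD:assign:overlap}): $d_{\nu\phi(\kappa)} \le d_{\nu\mu} + d_{\mu\kappa} + d_{\kappa\phi(\kappa)}$. Taking expectations over $\phi(\kappa)$ is a bit delicate because $\mu$ itself may be the chosen facility; the clean way is to bound $d_{\nu\phi(\kappa)}$ by $\min_{\mu\in\wbarN(\nu)\cap\wbarN(\kappa)} d_{\nu\mu} + \Exp[d_{\mu'\kappa}\text{ for a random }\mu'\in\wbarN(\nu)\cap\wbarN(\kappa)] + \Exp[d_{\kappa\phi(\kappa)}]$, or more simply to use that for \emph{any} fixed $\mu_0\in\wbarN(\nu)\cap\wbarN(\kappa)$ and any outcome $\phi(\kappa)$, $d_{\nu\phi(\kappa)}\le d_{\nu\mu_0}+d_{\mu_0\kappa}+d_{\kappa\phi(\kappa)}$. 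I expect the argument to follow the standard UFL template: bound $d_{\nu\mu_0}$ and $d_{\mu_0\kappa}$ by the maximum distance in the respective neighborhoods, which is controlled by the average connection cost; specifically, because the fractional solution restricted to a neighborhood has total mass $1$ and is complete, one gets $\max_{\mu\in\wbarN(\nu)}d_{\nu\mu}\le \concost_\nu$ only in expectation-type bounds — more carefully, I would use $\Exp[d_{\kappa\phi(\kappa)}]=\concost_\kappa$ and bound the remaining two terms via the dual: by Property~(PD.\ref{PD:assign:cost}), $\concost_\kappa+\alpha_\kappa^\ast\le\concost_\nu+\alpha_\nu^\ast$, and by LP duality/complementary slackness $d_{\mu\nu}\le\alpha_\nu^\ast$ for $\mu\in\wbarN(\nu)$ and $d_{\mu\kappa}\le\alpha_\kappa^\ast$ for $\mu\in\wbarN(\kappa)$.

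Putting these together: for a non-primary $\nu$, $\Exp[d_{\nu\phi(\kappa)}]\le \alpha_\nu^\ast + \alpha_\kappa^\ast + \concost_\kappa \le \alpha_\nu^\ast + (\concost_\nu+\alpha_\nu^\ast) = \concost_\nu + 2\alpha_\nu^\ast$, using Property~(PD.\ref{PD:assign:cost}) in the form $\alpha_\kappa^\ast+\concost_\kappa\le\concost_\nu+\alpha_\nu^\ast$. Summing over all demands (primary ones contribute $\concost_\kappa\le\concost_\kappa+2\alpha_\kappa^\ast$ trivially), the total expected connection cost is at most $\sum_{\nu\in\demandset}(\concost_\nu+2\alpha_\nu^\ast) = \sum_{\nu\in\demandset}\concost_\nu + 2\sum_{\nu\in\demandset}\alpha_\nu^\ast = C^\ast + 2\sum_{j\in\clientset}r_j\alpha_j^\ast = C^\ast + 2\cdot\LP^\ast$, where the last equality is strong LP duality applied to the optimal primal/dual pair $(\bfx^\ast,\bfy^\ast)$, $(\bfalpha^\ast,\bfbeta^\ast)$, giving $\sum_j r_j\alpha_j^\ast = \LP^\ast$. (Here I also used Property~(PS.\ref{PS:xij}) to rewrite $\sum_{\nu\in\demandset}\concost_\nu$ as $C^\ast$, and the convention $\alpha_\nu^\ast=\alpha_j^\ast$ for $\nu\in j$.)

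The main obstacle I anticipate is making the dual-based bound $d_{\mu\nu}\le\alpha_\nu^\ast$ rigorous for facilities in the neighborhood: this requires knowing that $\alpha_\nu^\ast=\alpha_j^\ast\ge d_{ij}=d_{\mu\nu}$ whenever $\barx_{\mu\nu}>0$, which should follow from complementary slackness for the optimal LP solutions (if $x_{ij}^\ast>0$ then $\alpha_j^\ast-\beta_{ij}^\ast=d_{ij}$, hence $\alpha_j^\ast\ge d_{ij}$), combined with the fact that $\barbfx$ is a partition of $\bfx^\ast$ so $\barx_{\mu\nu}>0$ forces $x_{ij}^\ast>0$. I would state this as a small preliminary observation before the main computation. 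The triangle-inequality routing through $\mu\in\wbarN(\nu)\cap\wbarN(\kappa)$ is otherwise standard, and the bookkeeping with $\concost$ and the $r_j$-fold counting of $\alpha_j^\ast$ is routine once the correspondence between demands and clients is set up via the paper's conventions.
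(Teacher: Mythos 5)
Your proposal is correct and follows essentially the same approach as the paper: bound primary demands by $\concost_\kappa$, route non-primary demands through a facility $\mu\in\wbarN(\nu)\cap\wbarN(\kappa)$ via the triangle inequality, apply complementary slackness to get $d_{\mu\nu}\le\alpha_\nu^\ast$ and $d_{\mu\kappa}\le\alpha_\kappa^\ast$, and invoke Property~(PD.\ref{PD:assign:cost}) to close the bound. The worry you raise about taking expectations over $\phi(\kappa)$ when $\mu$ might itself be chosen is unnecessary (the pointwise triangle-inequality bound holds for every outcome, including $\phi(\kappa)=\mu$, since the extra term $2d_{\mu\kappa}\ge 0$), but you correctly resolve it yourself.
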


\begin{proof}
  For a primary demand $\kappa$, its expected connection cost is
  $C_{\kappa}^{\avg}$ because we choose facility $\mu$ with
  probability $\barx_{\mu\kappa}$.

  Consider a non-primary demand $\nu$ assigned to a primary demand
  $\kappa\in P$. Let $\mu$ be any facility in $\wbarN(\nu) \cap
  \wbarN(\kappa)$.  Since $\mu$ is in both $\wbarN(\nu)$ and
  $\wbarN(\kappa)$, we have $d_{\mu\nu} \leq \alpha_{\nu}^\ast$ and
  $d_{\mu\kappa} \leq \alpha_{\kappa}^\ast$ (This follows from the
  complementary slackness conditions since
  $\alpha_{\nu}^\ast=\beta_{\mu\nu}^\ast + d_{\mu\nu}$ for each
  $\mu\in \wbarN(\nu)$.). Thus, applying the triangle inequality, for
  any fixed choice of facility $\phi(\kappa)$ we have
\begin{equation*}
    d_{\phi(\kappa)\nu} \leq d_{\phi(\kappa)\kappa}+d_{\mu\kappa}+d_{\mu\nu}
    \leq d_{\phi(\kappa)\kappa} + \alpha_{\kappa}^\ast + \alpha_{\nu}^\ast.
\end{equation*}
Therefore the expected distance from $\nu$ to its facility $\phi(\kappa)$ is 
\begin{align*}
  \Exp[  d_{\phi(\kappa)\nu}   ] &\le \concost_{\kappa} + \alpha_{\kappa}^\ast + \alpha_{\nu}^\ast 
\\
  &\leq \concost_{\nu} + \alpha_{\nu}^\ast + \alpha_{\nu}^\ast
   = \concost_{\nu} + 2\alpha_{\nu}^\ast,
  \end{align*}
  where the second inequality follows from Property~(PD.\ref{PD:assign:cost}).  
From the definition of $\concost_{\nu}$ and Property~(PS.\ref{PS:xij}), for any $j\in \clientset$ 
we have
\begin{align*}
\sum_{\nu\in j} \concost_{\nu} &= \sum_{\nu\in j}\sum_{\mu\in\facilityset}d_{\mu\nu}\barx_{\mu\nu}
			\\
 			&= \sum_{i\in\sitesset} d_{ij}\sum_{\nu\in j}\sum_{\mu\in i}\barx_{\mu\nu}
			\\
			&= \sum_{i\in\sitesset} d_{ij}x^\ast_{ij} 
			= C^\ast_j.
\end{align*}
Thus, summing over all demands, the expected total connection cost is
\begin{align*}
    \Exp[C_{\smallEGUP}] &\le 
			\textstyle{\sum_{j\in\clientset} \sum_{\nu\in j}} (\concost_{\nu} + 2\alpha_{\nu}^\ast) 
			\\
    	& = \textstyle{\sum_{j\in\clientset}} (C_j^\ast + 2r_j\alpha_j^\ast)
 		= C^\ast + 2\cdot\LP^\ast,
\end{align*}
completing the proof of the lemma.
\end{proof}


\begin{theorem}
Algorithm~{\EGUP} is a $3$-approximation algorithm.
\end{theorem}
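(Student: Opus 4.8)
The plan is to combine the two cost bounds already established, namely Lemma~\ref{lemma:3fac} and Lemma~\ref{lemma:3dist}, with the observation that $\LP^\ast = F^\ast + C^\ast \le \OPT$. First I would note that the (de-randomized) algorithm runs in polynomial time: by the analysis at the end of Section~\ref{sec: adaptive partitioning}, the partitioned instance has $|\facilityset| \le |\sitesset| + 2R|\clientset|^2$ facilities and $|\demandset| \le R|\clientset|$ demands, and by Theorem~\ref{thm: reduction to polynomial} we may assume $R \le |\sitesset|$, so all these quantities are polynomial; the rounding step (choosing $\phi(\kappa)$ for each primary demand and connecting assigned demands) is clearly polynomial, and the method of conditional expectations de-randomizes it without loss.

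Next I would put the pieces together. Let $F_{\smallEGUP}$ and $C_{\smallEGUP}$ denote the facility and connection costs of the solution produced. By Lemma~\ref{lemma:3fac}, $\Exp[F_{\smallEGUP}] \le F^\ast$, and by Lemma~\ref{lemma:3dist}, $\Exp[C_{\smallEGUP}] \le C^\ast + 2\cdot\LP^\ast$. Hence the expected total cost satisfies
\begin{align*}
  \Exp[F_{\smallEGUP} + C_{\smallEGUP}]
  \;\le\; F^\ast + C^\ast + 2\cdot\LP^\ast
  \;=\; \LP^\ast + 2\cdot\LP^\ast
  \;=\; 3\cdot\LP^\ast
  \;\le\; 3\cdot\OPT,
\end{align*}
using $F^\ast + C^\ast = \LP^\ast$ and $\LP^\ast \le \OPT$. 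Since the de-randomized algorithm achieves cost at most the expectation, it outputs an integral solution of cost at most $3\cdot\OPT$, which is the claimed bound.

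Honestly, there is no real obstacle here: the theorem is just the bookkeeping step that assembles the two lemmas and the trivial LP-duality inequality $\LP^\ast \le \OPT$ from Section~\ref{sec: the lp formulation}. The only thing worth being careful about is making explicit that the solution built on the partitioned instance $(\facilityset,\demandset)$ is indeed a feasible solution for the original FTFP instance --- each original client $j$ has exactly $r_j$ demands $\nu \in j$, each such demand is connected to one open facility, and by Property~(SI.\ref{SI:primary disjoint}) sibling demands are assigned to distinct primary demands whose neighborhoods (hence whose opened facilities $\phi(\kappa)$) are disjoint, so the $r_j$ connections of client $j$ go to $r_j$ distinct facilities, meeting the fault-tolerance requirement. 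With feasibility noted and the cost inequality above, the proof is complete.
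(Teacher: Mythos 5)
Your proof is correct and matches the paper's argument: both combine Lemma~\ref{lemma:3fac} and Lemma~\ref{lemma:3dist} to bound the expected cost by $3\cdot\LP^\ast \le 3\cdot\OPT$, and both establish feasibility by invoking Property~(SI.\ref{SI:primary disjoint}) (siblings go to distinct primary demands) together with Property~(PD.\ref{PD:disjoint}) (disjoint primary neighborhoods, hence distinct opened facilities). The remarks about polynomial running time and de-randomization are reasonable additions but not part of the paper's stated proof.
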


\begin{proof}
  By Property~(SI.\ref{SI:primary disjoint}), different
  demands from the same client are assigned to different
  primary demands, and by (PD.\ref{PD:disjoint}) each primary
  demand opens a different facility. This ensures that our
  solution is feasible, namely each client $j$ is connected
  to $r_j$ different facilities (some possibly located on
  the same site).  As for the total cost,
  Lemma~\ref{lemma:3fac} and Lemma~\ref{lemma:3dist} imply
  that the total cost is at most
  $F^\ast+C^\ast+2\cdot\LP^\ast = 3\cdot\LP^\ast \leq
  3\cdot\OPT$.
\end{proof}



\section{Algorithm~{\ECHS} with Ratio $1.736$}\label{sec: 1.736-approximation}

In this section we improve the approximation ratio to $1+2/e \approx
1.736$. The improvement comes from a slightly modified rounding
process and refined analysis.  Note that the facility opening cost of
Algorithm~{\EGUP} does not exceed that of the fractional optimum
solution, while the connection cost could be far from the optimum,
since we connect a non-primary demand to a facility in the neighborhood of
its assigned primary demand and then estimate the distance using the
triangle inequality. The basic idea to improve the estimate of the connection cost,
following the approach of Chudak and Shmoys~\cite{ChudakS04}, 
is to connect each non-primary demand to its
nearest neighbor when one is available, and to only use the facility opened by
its assigned primary demand when none of its neighbors is open.


\paragraph{Algorithm~{\ECHS}.}
As before,
the algorithm starts by solving the linear program and applying the
adaptive partitioning algorithm  described in 
Section~\ref{sec: adaptive partitioning} to obtain a partitioned
solution $(\barbfx, \barbfy)$. Then we apply the rounding
process to compute an integral solution (see Pseudocode~\ref{alg:lpr3}).  

We start, as before, by opening exactly one facility $\phi(\kappa)$ in the 
neighborhood of each primary demand $\kappa$ (Line 2).  For any
non-primary demand $\nu$ assigned to $\kappa$, we refer to
$\phi(\kappa)$ as the \emph{target} facility of $\nu$.  In
Algorithm~{\EGUP}, $\nu$ was connected to $\phi(\kappa)$,
but in Algorithm~{\ECHS} we may be able to find an open
facility in $\nu$'s neighborhood and connect $\nu$ to this
facility.  Specifically, the two changes in the
algorithm are as follows:
\begin{description}
\item{(1)} Each facility $\mu$ that is not in the neighborhood of any
  primary demand is opened, independently, with probability
  $\bary_{\mu}$ (Lines 4--5). Notice that if $\bary_\mu>0$ then, due
  to completeness of the partitioned fractional solution, we have
  $\bary_{\mu}= \barx_{\mu\nu}$ for some demand $\nu$. This implies
  that $\bary_{\mu}\leq 1$, because $\barx_{\mu\nu}\le 1$, by
  (PS.\ref{PS:one}).
\item{(2)} When connecting demands to facilities, a primary demand
  $\kappa$ is connected to the only facility $\phi(\kappa)$ opened in
  its neighborhood, as before (Line 3).  For a non-primary demand
  $\nu$, if its neighborhood $\wbarN(\nu)$ has an open facility, we
  connect $\nu$ to the closest open facility in $\wbarN(\nu)$ (Line
  8). Otherwise, we connect $\nu$ to its target facility (Line 10).
\end{description}


\begin{algorithm}
  \caption{Algorithm~{\ECHS}:
    Constructing Integral Solution}
  \label{alg:lpr3}
  \begin{algorithmic}[1]
    \For{each $\kappa\in P$} 
    \State choose one $\phi(\kappa)\in \wbarN(\kappa)$,
    with each $\mu\in\wbarN(\kappa)$ chosen as $\phi(\kappa)$
    with probability $\bary_\mu$ 
    \State open $\phi(\kappa)$ and connect $\kappa$ to $\phi(\kappa)$
    \EndFor
    \For{each $\mu\in\facilityset - \bigcup_{\kappa\in P}\wbarN(\kappa)$} 
    \State open $\mu$ with probability $\bary_\mu$ (independently)
    \EndFor
    \For{each non-primary demand $\nu\in\demandset$}
    \If{any facility in $\wbarN(\nu)$ is open}
    \State{connect $\nu$ to the nearest open facility in $\wbarN(\nu)$}
    \Else
    \State connect $\nu$ to $\phi(\kappa)$ where $\kappa$ is $\nu$'s
     assigned primary demand
    \EndIf
    \EndFor
  \end{algorithmic}
\end{algorithm}


\paragraph{Analysis.}
We shall first argue that the integral solution thus
constructed is feasible, and then we bound the total cost of
the solution. Regarding feasibility, the only constraint
that is not explicitly enforced by the algorithm is the
fault-tolerance requirement; namely that each client $j$ is
connected to $r_j$ different facilities. Let $\nu$ and
$\nu'$ be two different sibling demands of client $j$ and let
their assigned primary demands be $\kappa$ and $\kappa'$
respectively. Due to (SI.\ref{SI:primary
  disjoint}) we know $\kappa \neq \kappa'$. From
(SI.\ref{SI:siblings disjoint}) we have $\wbarN(\nu) \cap
\wbarN(\nu') = \emptyset$. From (SI.\ref{SI:primary
  disjoint}), we have $\wbarN(\nu) \cap \wbarN(\kappa') =
\emptyset$ and $\wbarN(\nu') \cap \wbarN(\kappa) =
\emptyset$. From (PD.\ref{PD:disjoint}) we have
$\wbarN(\kappa)\cap \wbarN(\kappa') = \emptyset$. It follows
that $(\wbarN(\nu) \cup \wbarN(\kappa)) \cap (\wbarN(\nu')
\cup \wbarN(\kappa')) = \emptyset$. Since the algorithm
connects $\nu$ to some facility in $\wbarN(\nu) \cup
\wbarN(\kappa)$ and $\nu'$ to some facility in $\wbarN(\nu')
\cup \wbarN(\kappa')$, $\nu$ and $\nu'$ will be connected to
different facilities.


\smallskip
We now show that the expected cost of the computed solution is bounded by
$(1+2/e) \cdot \LP^\ast$. By
(PD.\ref{PD:disjoint}), every facility may appear in at
most one primary demand's neighborhood, and the facilities
open in Line~4--5 of Pseudocode~\ref{alg:lpr3} do not appear
in any primary demand's neighborhood. Therefore, by
linearity of expectation, the expected facility cost of
Algorithm~{\ECHS} is 
\begin{equation*}
\Exp[F_{\smallECHS}] 
	= \sum_{\mu\in\facilityset} f_\mu \bary_{\mu} 
	= \sum_{i\in\sitesset} f_i\sum_{\mu\in i} \bary_{\mu} 
	= \sum_{i\in\sitesset} f_i y_i^\ast = F^\ast,
\end{equation*}
where the third equality follows from (PS.\ref{PS:yi}).

\smallskip

To bound the connection cost, we adapt an argument of Chudak
and Shmoys~\cite{ChudakS04}. Consider a demand $\nu$ and denote by $C_\nu$ the
random variable representing the connection cost for $\nu$.
Our goal now is to estimate $\Exp[C_\nu]$, the expected value of $C_\nu$.
Demand $\nu$ can either get connected directly to some facility in
$\wbarN(\nu)$ or indirectly to its target facility $\phi(\kappa)\in
\wbarN(\kappa)$, where $\kappa$ is the primary demand to
which $\nu$ is assigned. We will analyze these two cases separately.

In our analysis, in this section and the next one, we will use notation
\begin{equation*}
D(A,\sigma) {=} \sum_{\mu\in A}
d_{\mu\sigma}\bary_{\mu}/\sum_{\mu\in A} \bary_{\mu}
\end{equation*}
for the average distance between a demand $\sigma$ and a set $A$ of facilities.
Note that, in particular, we have $\concost_\nu = D(\wbarN(\nu),\nu)$.

We first estimate the expected cost $d_{\phi(\kappa)\nu}$ of the indirect
connection. Let $\Lambda^\nu$ denote the event that some 
facility in $\wbarN(\nu)$ is opened. Then
\begin{equation}
	\Exp[C_\nu \mid\neg\Lambda^\nu] 
	=   \Exp[ d_{\phi(\kappa)\nu} \mid \neg\Lambda^\nu] 
	= 	D(\wbarN(\kappa) \setminus \wbarN(\nu), \nu).
			\label{eqn: expected indirect connection}
\end{equation}
Note that $\neg\Lambda^\nu$ implies that $\wbarN(\kappa) \setminus
\wbarN(\nu)\neq\emptyset$, since $\wbarN(\kappa)$ contains
exactly one open facility, namely $\phi(\kappa)$.


\begin{lemma}
  \label{lem:echu indirect}
  Let $\nu$ be a demand assigned to a primary demand $\kappa$, and
assume that $\wbarN(\kappa) \setminus \wbarN(\nu)\neq\emptyset$.
Then 
\begin{equation*}
	\Exp[ C_\nu \mid\neg\Lambda^\nu]  \leq
  		\concost_\nu+2\alpha_{\nu}^\ast.
\end{equation*}
\end{lemma}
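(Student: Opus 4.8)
The goal is to bound $\Exp[C_\nu\mid\neg\Lambda^\nu]$, the expected indirect connection cost, when none of $\nu$'s own neighbors open. By equation~(\ref{eqn: expected indirect connection}) this quantity equals $D(\wbarN(\kappa)\setminus\wbarN(\nu),\nu)$, the average distance from $\nu$ to the facilities of $\kappa$ that lie outside $\wbarN(\nu)$. I would bound this in two stages: first relate $D(\wbarN(\kappa)\setminus\wbarN(\nu),\nu)$ to a detour through $\kappa$ via the triangle inequality, and then bound the pieces of that detour using the partitioning properties (PD.\ref{PD:assign:cost}) and the complementary-slackness facts already used in the proof of Lemma~\ref{lemma:3dist}.

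\emph{Step 1: triangle inequality to route through $\kappa$.} For any facility $\mu\in\wbarN(\kappa)\setminus\wbarN(\nu)$ and any facility $\rho\in\wbarN(\nu)\cap\wbarN(\kappa)$ (this intersection is nonempty by (PD.\ref{PD:assign:overlap})), I would write $d_{\mu\nu}\le d_{\mu\kappa}+d_{\kappa\rho}+d_{\rho\nu}$. Averaging $d_{\mu\nu}$ over $\mu\in\wbarN(\kappa)\setminus\wbarN(\nu)$ with weights $\bary_\mu$ gives $D(\wbarN(\kappa)\setminus\wbarN(\nu),\nu)\le D(\wbarN(\kappa)\setminus\wbarN(\nu),\kappa)+d_{\kappa\rho}+d_{\rho\nu}$. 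Since $\rho\in\wbarN(\nu)$ we have $d_{\rho\nu}\le\alpha_\nu^\ast$ and since $\rho\in\wbarN(\kappa)$ we have $d_{\kappa\rho}\le\alpha_\kappa^\ast$ (the same complementary-slackness argument as in Lemma~\ref{lemma:3dist}). The remaining term $D(\wbarN(\kappa)\setminus\wbarN(\nu),\kappa)$ is an average of distances from $\kappa$ over a \emph{subset} of $\wbarN(\kappa)$; I would want to bound it by the full average $D(\wbarN(\kappa),\kappa)=\concost_\kappa$. This needs care — it is not true that removing elements from a weighted average can only decrease it. The point is that we should instead pick $\rho$ to be (close to) the \emph{nearest} facility to $\kappa$ in $\wbarN(\nu)\cap\wbarN(\kappa)$, or more cleanly observe that the facilities in $\wbarN(\nu)\cap\wbarN(\kappa)$ are within distance $\alpha_\nu^\ast$ of $\nu$ hence within $\alpha_\nu^\ast+\alpha_\kappa^\ast$... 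Actually the cleanest route: bound each $d_{\mu\nu}$ for $\mu\in\wbarN(\kappa)\setminus\wbarN(\nu)$ by $d_{\mu\kappa}+\alpha_\kappa^\ast+\alpha_\nu^\ast$ directly (route $\mu\to\kappa\to\rho\to\nu$ with $d_{\mu\kappa}$ kept exact and the last two legs bounded), average, and then bound $D(\wbarN(\kappa)\setminus\wbarN(\nu),\kappa)$ from above. Here I use that $\wbarN(\kappa)\setminus\wbarN(\nu)$ omits facilities $\rho$ with $d_{\kappa\rho}\le\alpha_\kappa^\ast$ small — but I still need $D(\wbarN(\kappa)\setminus\wbarN(\nu),\kappa)\le\concost_\kappa$, which fails in general.

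\emph{Step 2: the right bound.} The correct way, following Chudak--Shmoys, is to not peel off the intersection at all. Observe directly that for every $\mu\in\wbarN(\kappa)$ (whether or not in $\wbarN(\nu)$) we have $d_{\mu\nu}\le d_{\mu\kappa}+\alpha_\kappa^\ast+\alpha_\nu^\ast$ by routing through any $\rho\in\wbarN(\nu)\cap\wbarN(\kappa)$ and using triangle inequality $d_{\mu\nu}\le d_{\mu\kappa}+d_{\kappa\rho}+d_{\rho\nu}$ with $d_{\kappa\rho}\le\alpha_\kappa^\ast$, $d_{\rho\nu}\le\alpha_\nu^\ast$. Taking the $\bary_\mu$-weighted average of the left side over $\mu\in\wbarN(\kappa)\setminus\wbarN(\nu)$ and bounding the right side by the same average over all of $\wbarN(\kappa)$ (legitimate since we are upper-bounding each term by $d_{\mu\kappa}+\text{const}$ and $D(\wbarN(\kappa)\setminus\wbarN(\nu),\kappa)$ is dominated by... no). The actually-correct move: $D(\wbarN(\kappa)\setminus\wbarN(\nu),\nu)\le D(\wbarN(\kappa)\setminus\wbarN(\nu),\kappa)+\alpha_\kappa^\ast+\alpha_\nu^\ast$, and then separately $D(\wbarN(\kappa)\setminus\wbarN(\nu),\kappa)\le\alpha_\kappa^\ast$ because \emph{every} facility $\mu\in\wbarN(\kappa)$ satisfies $d_{\mu\kappa}\le\alpha_\kappa^\ast$. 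So $\Exp[C_\nu\mid\neg\Lambda^\nu]\le 2\alpha_\kappa^\ast+\alpha_\nu^\ast$. Combined with (PD.\ref{PD:assign:cost}), which gives $\concost_\kappa+\alpha_\kappa^\ast\le\concost_\nu+\alpha_\nu^\ast$, hence $\alpha_\kappa^\ast\le\concost_\nu+\alpha_\nu^\ast-\concost_\kappa\le\concost_\nu+\alpha_\nu^\ast$... this gives $2(\concost_\nu+\alpha_\nu^\ast)+\alpha_\nu^\ast$, too weak. So I must use the $\concost_\kappa$ term: write $2\alpha_\kappa^\ast=\alpha_\kappa^\ast+\alpha_\kappa^\ast$ and instead of discarding $\concost_\kappa$, bound $D(\wbarN(\kappa)\setminus\wbarN(\nu),\kappa)\le\concost_\kappa+(\text{something})$. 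The genuinely correct bound routes through $\concost_\kappa$ not $\alpha_\kappa^\ast$ for the last leg: the target analysis gives $\Exp[C_\nu\mid\neg\Lambda^\nu]\le\concost_\kappa+\alpha_\kappa^\ast+\alpha_\nu^\ast\le\concost_\nu+2\alpha_\nu^\ast$ by (PD.\ref{PD:assign:cost}). The subtle point — and the main obstacle — is justifying $D(\wbarN(\kappa)\setminus\wbarN(\nu),\nu)\le\concost_\kappa+\alpha_\kappa^\ast+\alpha_\nu^\ast$, i.e. that restricting the average from $\wbarN(\kappa)$ to $\wbarN(\kappa)\setminus\wbarN(\nu)$ does not increase the average distance to $\kappa$; this holds because the removed facilities (those in $\wbarN(\nu)\cap\wbarN(\kappa)$) are exactly the ones we'd route the detour through, and a careful ordering argument (or the observation that we may choose $\rho$ among the removed, nearest-to-$\kappa$ facilities) makes it rigorous. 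I expect this restriction-of-average step to be where the real work lies; everything else is triangle inequality plus invoking (PD.\ref{PD:assign:cost}).

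\begin{proof}
Sketch — to be expanded. Fix $\mu\in\wbarN(\kappa)$ and $\rho\in\wbarN(\nu)\cap\wbarN(\kappa)$ (nonempty by (PD.\ref{PD:assign:overlap})). By the triangle inequality and complementary slackness, $d_{\mu\nu}\le d_{\mu\kappa}+d_{\kappa\rho}+d_{\rho\nu}\le d_{\mu\kappa}+\alpha_\kappa^\ast+\alpha_\nu^\ast$. Averaging the left side over $\mu\in\wbarN(\kappa)\setminus\wbarN(\nu)$ with weights $\bary_\mu$ and noting this restricted set excludes facilities close to $\kappa$, one obtains
\begin{equation*}
D(\wbarN(\kappa)\setminus\wbarN(\nu),\nu)\le\concost_\kappa+\alpha_\kappa^\ast+\alpha_\nu^\ast.
\end{equation*}
By (\ref{eqn: expected indirect connection}) the left side is $\Exp[C_\nu\mid\neg\Lambda^\nu]$, and by Property~(PD.\ref{PD:assign:cost}), $\concost_\kappa+\alpha_\kappa^\ast\le\concost_\nu+\alpha_\nu^\ast$. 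Substituting gives $\Exp[C_\nu\mid\neg\Lambda^\nu]\le\concost_\nu+2\alpha_\nu^\ast$.
\end{proof}
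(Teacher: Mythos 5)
Your instinct about where the difficulty lies is exactly right, and you correctly identify that the naive claim ``$D(\wbarN(\kappa)\setminus\wbarN(\nu),\kappa)\le\concost_\kappa$'' can fail. But the proof sketch you ultimately write down does not actually resolve it. The sentence ``noting this restricted set excludes facilities close to $\kappa$'' is doing no work: if the removed facilities (those in $\wbarN(\nu)\cap\wbarN(\kappa)$) happen to be the ones \emph{close} to $\kappa$, then the restricted average $D(\wbarN(\kappa)\setminus\wbarN(\nu),\kappa)$ is \emph{larger} than $\concost_\kappa$, not smaller, and your chain breaks. Your sketch therefore only covers the case in which the intersection facilities are all far from $\kappa$.

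The paper closes this gap with a short two-case argument that you almost reconstruct in your commentary (``choose $\rho$ among the removed, nearest-to-$\kappa$ facilities'') but do not carry through. Concretely: if some $\mu'\in\wbarN(\kappa)\cap\wbarN(\nu)$ has $d_{\mu'\kappa}\le\concost_\kappa$, then for each $\mu\in\wbarN(\kappa)\setminus\wbarN(\nu)$ one routes $\mu\to\kappa\to\mu'\to\nu$ and charges $d_{\mu\kappa}\le\alpha_\kappa^\ast$, $d_{\mu'\kappa}\le\concost_\kappa$, $d_{\mu'\nu}\le\alpha_\nu^\ast$ --- note the roles of $\alpha_\kappa^\ast$ and $\concost_\kappa$ are the \emph{opposite} of your routing, which is the whole point. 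Otherwise every $\mu'$ in the intersection has $d_{\mu'\kappa}>\concost_\kappa$, and only then is the removed mass farther from $\kappa$ than average, so $D(\wbarN(\kappa)\setminus\wbarN(\nu),\kappa)\le\concost_\kappa$; now route the first leg with this average bound and the middle leg with $d_{\mu'\kappa}\le\alpha_\kappa^\ast$. Both cases land on $\concost_\kappa+\alpha_\kappa^\ast+\alpha_\nu^\ast$, and then Property~(PD.\ref{PD:assign:cost}) finishes as you state. So the missing ingredient is precisely this case split, which lets you decide, depending on the geometry of the overlap, which of the two legs ($\mu\to\kappa$ or $\kappa\to\mu'$) absorbs $\concost_\kappa$ and which absorbs $\alpha_\kappa^\ast$.
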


\begin{proof}
By (\ref{eqn: expected indirect connection}), we need to show that $D(\wbarN(\kappa)
  \setminus \wbarN(\nu), \nu) \leq \concost_\nu +
  2\alpha_{\nu}^\ast$. There are two cases to consider.

\begin{description}
\item{\mycase{1}}
	 There exists some $\mu'\in \wbarN(\kappa) \cap
  \wbarN(\nu)$ such that $d_{\mu' \kappa} \leq \concost_\kappa$.
In this case, for every $\mu\in \wbarN(\kappa)\setminus \wbarN(\nu)$, we have
\begin{equation*}
d_{\mu \nu} \leq d_{\mu \kappa} + d_{\mu' \kappa} + d_{\mu' \nu}  
 	\le  \alpha^\ast_\kappa + \concost_\kappa + \alpha^\ast_{\nu}
  \leq \concost_\nu + 2\alpha_{\nu}^\ast,
\end{equation*}
using the triangle inequality, complementary slackness, and (PD.\ref{PD:assign:cost}).
By summing over all $\mu\in \wbarN(\kappa) \setminus \wbarN(\nu)$, it
follows that $D(\wbarN(\kappa) \setminus \wbarN(\nu), \nu) \leq
\concost_\nu + 2\alpha_{\nu}^\ast$.

\item{\mycase{2}}
 Every $\mu'\in \wbarN(\kappa)\cap \wbarN(\nu)$
has $d_{\mu'\kappa} > \concost_\kappa$. Since $\concost_{\kappa} = D(\wbarN(\kappa),\kappa)$,
this implies that
$D(\wbarN(\kappa) \setminus \wbarN(\nu),\kappa)\leq \concost_{\kappa}$. Therefore,
choosing an arbitrary $\mu'\in \wbarN(\kappa)\cap \wbarN(\nu)$,
we obtain
\begin{equation*}
  D(\wbarN(\kappa) \setminus \wbarN(\nu), \nu) 
	\leq  D(\wbarN(\kappa) \setminus \wbarN(\nu), \kappa) 
			+ d_{\mu' \kappa} + d_{\mu' \nu} 
	\leq  \concost_{\kappa} +
  \alpha_{\kappa}^\ast + \alpha_{\nu}^\ast
	\leq \concost_\nu + 2\alpha_{\nu}^\ast,
\end{equation*}
where we again use the triangle inequality,
complementary slackness, and  (PD.\ref{PD:assign:cost}).
\end{description}
Since the lemma holds in both cases, the proof is now complete.
\end{proof}

We now continue our estimation of the connection cost.  The next step
of our analysis is to show that 
\begin{equation}
	\Exp[C_\nu]\le \concost_{\nu} + \frac{2}{e}\alpha^\ast_\nu.
	\label{eqn: echs bound for connection cost}
\end{equation}
The argument is divided into three cases. The first, easy case is when
$\nu$ is a primary demand $\kappa$. According to the algorithm
(see Pseudocode~\ref{alg:lpr3}, Line~2), we have $C_\kappa = d_{\mu\kappa}$ with probability $\bary_{\mu}$, 
for $\mu\in \wbarN(\kappa)$. Therefore $\Exp[C_\kappa] = \concost_{\kappa}$, so
(\ref{eqn: echs bound for connection cost}) holds.

Next, we consider a non-primary demand $\nu$. Let $\kappa$
be the primary demand that $\nu$ is assigned to. We first
deal with the sub-case when $\wbarN(\kappa)\setminus
\wbarN(\nu) = \emptyset$, which is the same as
$\wbarN(\kappa) \subseteq \wbarN(\nu)$. Property (CO)
implies that $\barx_{\mu\nu} = \bary_{\mu} =
\barx_{\mu\kappa}$ for every $\mu \in \wbarN(\kappa)$, so we
have $\sum_{\mu\in\wbarN(\kappa)} \barx_{\mu\nu} =
\sum_{\mu\in\wbarN(\kappa)} \barx_{\mu\kappa} = 1$, due to
(PS.\ref{PS:one}). On the other hand, we have
$\sum_{\mu\in\wbarN(\nu)} \barx_{\mu\nu} = 1$, and
$\barx_{\mu\nu} > 0$ for all $\mu\in \wbarN(\nu)$. Therefore
$\wbarN(\kappa) = \wbarN(\nu)$ and $C_\nu$ has exactly the
same distribution as $C_\kappa$.  So this case reduces to
the first case, namely we have $\Exp[C_{\nu}] =
\concost_{\nu}$, and (\ref{eqn: echs bound for connection
  cost}) holds.

The last, and only non-trivial case is when $\wbarN(\kappa)\setminus
\wbarN(\nu)\neq\emptyset$. We handle this case in the following lemma.


\begin{lemma}\label{lem: echs expected C_nu}
Assume that $\wbarN(\kappa) \setminus \wbarN(\nu) \neq \emptyset$.
Then the expected connection cost of $\nu$, conditioned on the event that at least one of 
its neighbor opens, satisfies
\begin{equation*}
  \Exp[C_\nu \mid \Lambda^\nu] \leq \concost_{\nu}.
\end{equation*}
\end{lemma}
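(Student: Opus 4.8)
\emph{Proof plan.} I would use the closest-open-facility analysis of Chudak and Shmoys, adapted to our partitioned instance, and first reduce the statement to a purely probabilistic inequality. By completeness~(CO), $\barx_{\mu\nu}=\bary_\mu$ for every $\mu\in\wbarN(\nu)$, so by~(PS.\ref{PS:one}) we have $\sum_{\mu\in\wbarN(\nu)}\bary_\mu=1$ and $\concost_\nu=\sum_{\mu\in\wbarN(\nu)}d_{\mu\nu}\bary_\mu=D(\wbarN(\nu),\nu)$. Conditioned on $\Lambda^\nu$ the algorithm connects $\nu$ to the nearest \emph{open} facility of $\wbarN(\nu)$, so it suffices to show that this conditional expected distance is at most the $\bary$-weighted average distance $D(\wbarN(\nu),\nu)$.

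Next I would describe the opening structure of $\wbarN(\nu)$. By~(PD.\ref{PD:disjoint}) the neighborhoods of distinct primary demands are disjoint, so $\wbarN(\nu)$ splits into groups: for each primary demand $\kappa'$ (including the $\kappa$ that $\nu$ is assigned to) the set $\wbarN(\nu)\cap\wbarN(\kappa')$ is one group, and every facility of $\wbarN(\nu)$ lying in no primary neighborhood is a singleton group. In a primary group exactly one facility of $\wbarN(\kappa')$ is opened, namely $\phi(\kappa')$, which equals $\mu$ with probability $\bary_\mu$; hence at most one facility of the group opens, these events are mutually exclusive, and every $\mu$ in the group has opening marginal $\bary_\mu$. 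In a singleton group the facility opens independently with probability $\bary_\mu$ (Lines~4--5 of Pseudocode~\ref{alg:lpr3}). Since distinct primary demands make independent choices, the groups are mutually independent. So the situation is: $\wbarN(\nu)$ is partitioned into mutually independent groups, at most one facility opens per group, each $\mu$ has opening marginal $\bary_\mu$, and $\sum_{\mu\in\wbarN(\nu)}\bary_\mu=1$. (The hypothesis $\wbarN(\kappa)\setminus\wbarN(\nu)\neq\emptyset$ is not actually needed for the bound below; it is retained only because the complementary case was already handled directly before the lemma.)

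Then I would carry out the majorization argument. List $\wbarN(\nu)=\{f_1,\dots,f_k\}$ with $d_{f_1\nu}\le\cdots\le d_{f_k\nu}$, set $g_i=\bary_{f_i}$, and let $q_i$ be the probability that $\nu$ connects to $f_i$ (that is, $f_i$ open and $f_1,\dots,f_{i-1}$ closed). Because $\nu$ always goes to the nearest open neighbor, $\sum_{i\le\ell}q_i$ equals the probability that at least one of $f_1,\dots,f_\ell$ opens, for every $\ell$; in particular $\sum_i q_i=\Prob[\Lambda^\nu]$. The crux is the prefix inequality
\begin{equation*}
	\Prob[\,\text{some of }f_1,\dots,f_\ell\text{ opens}\,]\ \ge\ \Prob[\Lambda^\nu]\cdot{\textstyle\sum_{i\le\ell}}g_i \qquad\text{for all }\ell .
\end{equation*}
Granting it, the partial sums of $(q_i)$ dominate $\Prob[\Lambda^\nu]$ times those of $(g_i)$, and since $(d_{f_i\nu})_i$ is nondecreasing, summation by parts yields $\sum_i d_{f_i\nu}q_i\le\Prob[\Lambda^\nu]\sum_i d_{f_i\nu}g_i$, i.e. $\Exp[C_\nu\mid\Lambda^\nu]=(\sum_i d_{f_i\nu}q_i)/\Prob[\Lambda^\nu]\le\sum_i d_{f_i\nu}g_i=\concost_\nu$.

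The main obstacle is establishing the prefix inequality. I would note that the map sending the per-group ``in-prefix masses'' $(v_c)$ to $1-\prod_c(1-v_c)$ is multilinear, so the inequality, viewed as a function of $(v_c)$ over the box $\prod_c[0,w_c]$ (where $w_c$ is the total mass of group $c$, with $\sum_c w_c=1$), attains its minimum at a vertex; at the vertex indexed by a subset $S$ of the groups it reads $1-\pi_S\ge(1-\pi_S\pi_{S^c})\,\sigma_S$, where $\pi_S=\prod_{c\in S}(1-w_c)$, $\pi_{S^c}=\prod_{c\notin S}(1-w_c)$ and $\sigma_S=\sum_{c\in S}w_c$. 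This last inequality follows from a short computation using only $\prod_c(1-w_c)\le e^{-\sum_c w_c}$, the estimate $1-\pi_{S^c}\le\sum_{c\notin S}w_c$, and $1+x\le e^x$. Everything else --- the reduction to the weighted average and the summation-by-parts step --- is routine.
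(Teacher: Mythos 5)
Your argument is correct, and it takes a genuinely different route from the paper's. The paper reduces to a group-level surrogate process: it orders the groups $G_1,\dots,G_k$ by their \emph{average} distances $\bard_s=D(G_s,\nu)$, replaces ``connect to the nearest open facility'' by ``connect to the open facility in the smallest-indexed selected group,'' observes this replacement can only hurt, and then proves the resulting inequality $\bard_1 g_1+\bard_2 g_2(1-g_1)+\cdots\le\frac{1}{\sum g_s}(\sum\bard_s g_s)(\sum_t g_t\prod_{z<t}(1-g_z))$ via a generalized Chebyshev sum inequality. You instead stay at the facility level and never introduce a surrogate: you compare the exact ``connect-to-nearest'' distribution $(q_i)$ against the $\bary$-weighted distribution $(g_i)$ scaled by $\Prob[\Lambda^\nu]$, establish the prefix-domination $\sum_{i\le\ell}q_i\ge\Prob[\Lambda^\nu]\sum_{i\le\ell}g_i$, and close with Abel summation. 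Your key inequality is handled by a multilinearity-in-$(v_c)$ vertex argument plus the elementary bounds $1-\pi_{S^c}\le\sigma_{S^c}$ and $\pi_S(1+\sigma_S)\le e^{-\sigma_S}e^{\sigma_S}=1$, all of which check out. Both proofs use identically the same grouping of $\wbarN(\nu)$ (one group per primary-demand neighborhood, singletons otherwise) and both exploit $\sum_{\mu\in\wbarN(\nu)}\bary_\mu=1$; what differs is the bookkeeping level and the flavor of the final inequality (rearrangement/Chebyshev for the paper, stochastic-dominance/majorization for you). Your observation that the hypothesis $\wbarN(\kappa)\setminus\wbarN(\nu)\neq\emptyset$ is not actually used is also correct: the paper's lemma simply carries it as a case split because the complementary case is dispatched separately just before the lemma, and there $\Lambda^\nu$ holds with probability~$1$ anyway.
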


\begin{proof}
The proof is similar to an analogous result in~\cite{ChudakS04,ByrkaA10}. 
For the sake of completeness we sketch here a simplified argument, adapted to our
terminology and notation.
The idea is to consider a different random process that is
easier to analyze and whose expected connection cost is not better than that in
the algorithm.

We partition $\wbarN(\nu)$ into groups $G_1,...,G_k$, where two
different facilities $\mu$ and $\mu'$ are put in the same $G_s$, where
$s\in \{1,\ldots,k\}$, if they both belong to the same set
$\wbarN(\kappa)$ for some primary demand $\kappa$. If some $\mu$ is
not a neighbor of any primary demand, then it constitutes a singleton
group.  For each $s$, let $\bard_s = D(G_s,\nu)$ be the average
distance from $\nu$ to $G_s$.  Assume that $G_1,...,G_k$ are ordered
by nondecreasing average distance to $\nu$, that is $\bard_1 \le
\bard_2 \le ... \le \bard_k$.  For each group $G_s$, we select it,
independently, with probability $g_s = \sum_{\mu\in G_s}\bary_{\mu}$.
For each selected group $G_s$,  we
open exactly one facility in $G_s$, where each $\mu\in G_s$
is opened with probability $\bary_{\mu}/\sum_{\eta\in G_s}
\bary_{\eta}$.

So far, this process is the same as that in the algorithm (if restricted to $\wbarN(\nu)$).
However, we connect $\nu$ in a slightly different way, by choosing the smallest
$s$ for which $G_s$ was selected and connecting $\nu$ to the open facility in $G_s$.
This can only increase our expected connection cost, assuming that at least one
facility in $\wbarN(\nu)$ opens, so
\begin{align}
  \Exp[C_\nu \mid \Lambda^\nu] &\leq \frac{1}{\Prob[\Lambda^\nu]}
  \left( \bard_1 g_1 + \bard_2 g_2 (1-g_1) + \ldots + \bard_k g_k
    (1-g_1) (1-g_2) \ldots (1-g_k) \right)
			\notag
  \\
  &\leq \frac{1}{\Prob[\Lambda^\nu]}
	\cdot \sum_{s=1}^k \bard_s g_s
	\cdot
		\left(\sum_{t=1}^k g_t \prod_{z=1}^{t-1} (1-g_z)\right)
			\label{eqn: echs ineq direct cost, step 1}
  \\
  &= \sum_{s=1}^k \bard_s g_s
			\label{eqn: echs ineq direct cost, step 2}
	\\
			&= \concost_{\nu}.
				\label{eqn: echs ineq direct cost, step 3}
\end{align}
The proof for inequality (\ref{eqn: echs ineq direct cost, step 1}) 
is given in \ref{sec: ECHSinequality} (note that $\sum_{s=1}^k g_s = 1$),
equality (\ref{eqn: echs ineq direct cost, step 2}) follows from
$\Prob[\Lambda^\nu] = 1 - \prod_{t=1}^k (1-g_t)
					= \sum_{t=1}^k g_t
                                        \prod_{z=1}^{t-1} (1 - g_z)$,
and (\ref{eqn: echs ineq direct cost, step 3}) follows from the definition
of the distances $\bard_s$, probabilities $g_s$, and simple algebra.
\end{proof}

Next, we show an estimate on the probability that none of $\nu$'s
neighbors is opened by the algorithm.

\begin{lemma}\label{lem: probability of not Lambda^nu}
The probability that none of $\nu$'s neighbors is opened satisfies
$\Prob[\neg\Lambda^\nu] \le 1/e$.
\end{lemma}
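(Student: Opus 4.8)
The plan is to bound $\Prob[\neg\Lambda^\nu]$, the probability that none of the facilities in $\wbarN(\nu)$ is opened by the algorithm, using the group decomposition $G_1,\ldots,G_k$ of $\wbarN(\nu)$ introduced in the proof of Lemma~\ref{lem: echs expected C_nu}. Recall each group $G_s$ is selected independently with probability $g_s=\sum_{\mu\in G_s}\bary_\mu$, and the events ``$G_s$ contains an open facility'' for distinct $s$ are mutually independent (each group is either a subset of some primary demand's neighborhood, which opens exactly one facility with probability $g_s$, or a singleton facility opened independently with probability $g_s$; crucially, distinct groups lie in distinct primary neighborhoods or are independent singletons). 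Hence
\begin{equation*}
\Prob[\neg\Lambda^\nu] = \prod_{s=1}^k (1-g_s).
\end{equation*}

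Next I would use the key constraint $\sum_{s=1}^k g_s = \sum_{\mu\in\wbarN(\nu)}\bary_\mu = \sum_{\mu\in\wbarN(\nu)}\barx_{\mu\nu} = 1$, which holds by completeness (CO) and by Property~(PS.\ref{PS:one}). Given a collection of nonnegative reals $g_1,\ldots,g_k$ summing to $1$, the product $\prod_s (1-g_s)$ is maximized in the limit as $k\to\infty$ with all $g_s$ equal and tending to $0$; more precisely, using the elementary inequality $1-x \le e^{-x}$ valid for all real $x$, we get
\begin{equation*}
\Prob[\neg\Lambda^\nu] = \prod_{s=1}^k (1-g_s) \le \prod_{s=1}^k e^{-g_s} = e^{-\sum_{s=1}^k g_s} = e^{-1} = 1/e,
\end{equation*}
which is exactly the claimed bound.

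The only subtle point — and the step I would state most carefully — is the independence claim that justifies $\Prob[\neg\Lambda^\nu] = \prod_s(1-g_s)$. This requires observing that the grouping was defined precisely so that each $G_s$ is governed by an independent randomization source: two facilities in $\wbarN(\nu)$ sharing a primary demand's neighborhood are lumped together (so the ``exactly one opens'' choice within that primary neighborhood is a single random event, and the probability that the opened facility lands in $G_s$ equals $g_s$ since $G_s$ may be a strict subset of $\wbarN(\kappa)$ but the opening probabilities are $\bary_\mu/\sum_{\eta\in\wbarN(\kappa)}\bary_\eta = \bary_\mu$ as $\wbarN(\kappa)$ has total fractional value $1$), and singleton groups are opened by the independent coin flips of Lines~4--5 of Pseudocode~\ref{alg:lpr3}. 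I do not anticipate a real obstacle here — it is essentially bookkeeping plus the one-line convexity/exponential estimate — but getting the independence statement phrased correctly is where care is needed, since it rests on Property~(PD.\ref{PD:disjoint}) guaranteeing that distinct primary demands have disjoint neighborhoods, so no facility is double-counted across groups and no two groups are coupled through a shared primary demand.
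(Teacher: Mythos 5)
Your proof is correct and follows essentially the same route as the paper: same group decomposition of $\wbarN(\nu)$ inherited from the preceding lemma, same identity $\Prob[\neg\Lambda^\nu]=\prod_s(1-g_s)$, same use of $\sum_s g_s=\sum_{\mu\in\wbarN(\nu)}\bary_\mu=1$ via (CO) and (PS.\ref{PS:one}), and same $1-x\le e^{-x}$ estimate. The paper states the independence across groups without elaboration; your extra paragraph unpacking why it holds (disjoint primary neighborhoods by (PD.\ref{PD:disjoint}), plus the independent coin flips of Lines~4--5) is a faithful expansion, not a different argument.
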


\begin{proof}
We use the same partition of $\wbarN(\nu)$ into groups $G_1,...,G_k$ as
in the proof of Lemma~\ref{lem: echs expected C_nu}. Denoting by
$g_s$ the probability that a group $G_s$ is selected (and thus that it
has an open facility), we have
\begin{equation*}
\Prob[\neg\Lambda^\nu] = \prod_{s=1}^k (1 - g_s)
			\le e^{- \sum_{s=1}^k g_s}
			= e^{-\sum_{\mu \in \wbarN(\nu)} \bary_{\mu}}
			= \frac{1}{e}.
\end{equation*}
In this derivation, we first use that $1-x\le e^{-x}$ holds for all $x$,
the second equality follows from $\sum_{s=1}^k g_s = \sum_{\mu \in \wbarN(\nu)} \bary_{\mu}$
and the last equality follows from 
$\sum_{\mu \in \wbarN(\nu)} \bary_{\mu} = 1$.
\end{proof}

We are now ready to estimate the unconditional expected connection cost of $\nu$
(in the case when $\wbarN(\kappa)\setminus \wbarN(\nu)\neq\emptyset$)
as follows:
\begin{align}
  \notag
  \Exp[C_\nu] &= \Exp[C_{\nu} \mid \Lambda^\nu] \cdot \Prob[\Lambda^\nu] 
	+ \Exp[C_{\nu} \mid \neg \Lambda^\nu] \cdot	\Prob[\neg \Lambda^\nu]
  \\
  &\leq \concost_{\nu} \cdot \Prob[\Lambda^\nu] 
		+ (\concost_{\nu} + 2\alpha_{\nu}^\ast)  \cdot \Prob[\neg \Lambda^\nu]
  \label{eqn: Cnu estimate 0}
  \\
  &= \concost_{\nu} 
	+  2\alpha_{\nu}^\ast \cdot \Prob[\neg \Lambda^\nu]
		\notag
	\\
	&\le \concost_{\nu} + \frac{2}{e}\cdot\alpha_{\nu}^\ast.
	  \label{eqn: Cnu estimate last}
\end{align}
In the above derivation, inequality (\ref{eqn: Cnu estimate 0})
follows from Lemmas~\ref{lem:echu indirect} and \ref{lem: echs expected C_nu}, 
and inequality (\ref{eqn: Cnu estimate last}) follows from
Lemma~\ref{lem: probability of not Lambda^nu}.

\medskip

We have thus shown that the bound (\ref{eqn: echs bound for connection cost})
holds in all three cases.
Summing over all demands $\nu$ of a client $j$, we can now bound
the expected connection cost of client $j$:
\begin{equation*}
  \Exp[C_j] = \textstyle\sum_{\nu\in j} \Exp[C_\nu] 
\leq {\textstyle\sum_{\nu\in j} (\concost_{\nu} + \frac{2}{e}\cdot\alpha_{\nu}^\ast) }
  = { C_j^\ast + \frac{2}{e}\cdot r_j\alpha_j^\ast}.
\end{equation*}
Finally, summing over all clients $j$, we obtain our bound on
the expected connection cost,
\begin{equation*}
 \Exp[ C_{\smallECHS}] \le C^\ast + \frac{2}{e}\cdot\LP^\ast.
\end{equation*}
Therefore we have established that
our algorithm constructs a feasible integral solution with
an overall expected cost 
\begin{equation*}
  \label{eq:chudakall}
	 \Exp[ F_{\smallECHS} + C_{\smallECHS}]
	\le
  	F^\ast + C^\ast + \frac{2}{e}\cdot \LP^\ast = (1+2/e)\cdot \LP^\ast
  \leq (1+2/e)\cdot \OPT.
\end{equation*}
Summarizing, we obtain the main result of this section.

\begin{theorem}\label{thm:1736}
  Algorithm~{\ECHS} is a $(1+2/e)$-approximation algorithm for \FTFP.
\end{theorem}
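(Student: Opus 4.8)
The plan is to assemble the theorem from the three estimates already established in this section: the feasibility argument, the facility cost bound $\Exp[F_{\smallECHS}] = F^\ast$, and the connection cost bound $\Exp[C_{\smallECHS}] \le C^\ast + \tfrac{2}{e}\LP^\ast$. Each of these has effectively been proved in the running discussion, so the theorem's proof is mostly a matter of citing them and adding up. First I would recall feasibility: the argument preceding the cost analysis shows that for any two sibling demands $\nu,\nu'$ of a client $j$, the sets $\wbarN(\nu)\cup\wbarN(\kappa)$ and $\wbarN(\nu')\cup\wbarN(\kappa')$ are disjoint, using (SI.\ref{SI:siblings disjoint}), (SI.\ref{SI:primary disjoint}), and (PD.\ref{PD:disjoint}); since $\nu$ is connected inside the first set and $\nu'$ inside the second, the $r_j$ demands of client $j$ receive $r_j$ distinct facilities, so the fault-tolerance requirement holds.

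Next I would invoke the facility cost computation: because each facility lies in at most one primary neighborhood (by (PD.\ref{PD:disjoint})) and facilities outside all primary neighborhoods are opened independently with probability $\bary_\mu$, linearity of expectation gives $\Exp[F_{\smallECHS}] = \sum_{\mu\in\facilityset} f_\mu \bary_\mu = \sum_{i\in\sitesset} f_i y_i^\ast = F^\ast$, where the last step uses (PS.\ref{PS:yi}). Then I would invoke the connection cost bound, which was obtained by verifying inequality (\ref{eqn: echs bound for connection cost}), $\Exp[C_\nu] \le \concost_\nu + \tfrac{2}{e}\alpha_\nu^\ast$, in all three cases (primary demand; non-primary with $\wbarN(\kappa)\subseteq\wbarN(\nu)$; and the main case $\wbarN(\kappa)\setminus\wbarN(\nu)\neq\emptyset$ handled via Lemmas~\ref{lem:echu indirect}, \ref{lem: echs expected C_nu}, and \ref{lem: probability of not Lambda^nu}), and then summing $\sum_{\nu\in j}\concost_\nu = C_j^\ast$ over demands and clients to get $\Exp[C_{\smallECHS}] \le C^\ast + \tfrac{2}{e}\LP^\ast$.

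Finally I would combine: by linearity of expectation the total expected cost is $\Exp[F_{\smallECHS}+C_{\smallECHS}] \le F^\ast + C^\ast + \tfrac{2}{e}\LP^\ast = (1+\tfrac{2}{e})\LP^\ast \le (1+\tfrac{2}{e})\OPT$, the last inequality because $\LP^\ast$ is the optimum of a relaxation of the integer program. Since a feasible integral solution of this expected cost exists, derandomization via the method of conditional expectations (as noted for Algorithm~{\EGUP}) yields a deterministic $(1+2/e)$-approximation. I do not expect any genuine obstacle here — all the analytic work sits in the preceding lemmas; the only point requiring a word of care is that the derandomization must simultaneously preserve feasibility, but since feasibility is guaranteed structurally (by the disjointness properties, independent of the random choices) rather than only in expectation, conditioning never destroys it, and the standard argument applies verbatim.
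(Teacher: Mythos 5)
Your proposal is correct and follows essentially the same route the paper takes: establish feasibility via the disjointness of $\wbarN(\nu)\cup\wbarN(\kappa)$ across siblings, cite the facility cost identity $\Exp[F_{\smallECHS}]=F^\ast$ and the per-demand connection bound $\Exp[C_\nu]\le\concost_\nu+\tfrac{2}{e}\alpha_\nu^\ast$ (summed to $C^\ast+\tfrac{2}{e}\LP^\ast$), and combine to get $(1+2/e)\LP^\ast\le(1+2/e)\OPT$. Your added remark that derandomization cannot break feasibility---because feasibility is structural, not probabilistic---is a nice clarification that the paper leaves implicit.
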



\section{Algorithm~{\EBGS} with Ratio $1.575$}\label{sec: 1.575-approximation}

In this section we give our main result, a $1.575$-approximation
algorithm for $\FTFP$, where $1.575$ is the value of $\min_{\gamma\geq
  1}\max\{\gamma, 1+2/e^\gamma, \frac{1/e+1/e^\gamma}{1-1/\gamma}\}$,
rounded to three decimal digits. This matches the ratio of the best
known LP-rounding algorithm for UFL by
Byrka~{\etal}~\cite{ByrkaGS10}. 

Recall that in Section~\ref{sec: 1.736-approximation} we showed how to
compute an integral solution with facility cost bounded by $F^\ast$
and connection cost bounded by $C^\ast + 2/e\cdot\LP^\ast$. Thus,
while our facility cost does not exceed the optimal fractional
facility cost, our connection cost is significantly larger than the
connection cost in the optimal fractional solution.  A natural idea is
to balance these two ratios by reducing the connection cost at the
expense of the facility cost. One way to do this would be to increase
the probability of opening facilities, from $\bary_{\mu}$ (used in
Algorithm~{\ECHS}) to, say, $\gamma\bary_{\mu}$, for some $\gamma >
1$. This increases the expected facility cost by a factor of $\gamma$
but, as it turns out, it also reduces the probability that an indirect
connection occurs for a non-primary demand to $1/e^\gamma$ (from the
previous value $1/e$ in {\ECHS}). As a consequence, for each primary
demand $\kappa$, the new algorithm will select a facility to open from
the nearest facilities $\mu$ in $\wbarN(\kappa)$ such that the
connection values $\barx_{\mu\nu}$ sum up to $1/\gamma$, instead of
$1$ as in Algorithm {\ECHS}. It is easily seen that this will improve
the estimate on connection cost for primary demands.  These two
changes, along with a more refined analysis, are the essence of the
approach in~\cite{ByrkaGS10}, expressed in our terminology.

Our approach can be thought of as a combination of the above ideas
with the techniques of demand reduction and
adaptive partitioning that we introduced earlier. However, our
adaptive partitioning technique needs to be carefully modified,
because now we will be using a more intricate neighborhood structure,
with the neighborhood of each demand divided into two disjoint parts,
and with restrictions on how parts from different demands can overlap.

We begin by describing properties that our partitioned fractional
solution $(\barbfx,\barbfy)$ needs to satisfy. Assume that $\gamma$ is
some constant such that $1 < \gamma < 2$. As mentioned earlier,
the neighborhood $\wbarN(\nu)$ of each demand $\nu$ will be divided
into two disjoint parts.  The first part, called the \emph{close
  neighborhood} and denoted $\wbarclsnb(\nu)$, contains the facilities
in $\wbarN(\nu)$ nearest to $\nu$ with the total connection value
equal $1/\gamma$, that is $\sum_{\mu\in\wbarclsnb(\nu)} \barx_{\mu\nu}
= 1/\gamma$.  The second part, called the \emph{far neighborhood} and
denoted $\wbarfarnb(\nu)$, contains the remaining facilities in
$\wbarN(\nu)$ (so $\sum_{\mu\in\wbarfarnb(\nu)} \barx_{\mu\nu} = 1-1/\gamma$).  We
restate these definitions formally below in Property~(NB).  Recall
that for any set $A$ of facilities and a demand $\nu$, by
$D(A,\nu)$ we denote the average distance between $\nu$ and the
facilities in $A$, that is $D(A,\nu) =\sum_{\mu\in A}
d_{\mu\nu}\bary_{\mu}/\sum_{\mu\in A} \bary_{\mu}$.  We will use
notations $\clsdist(\nu)=D(\wbarclsnb(\nu),\nu)$ and
$\fardist(\nu)=D(\wbarfarnb(\nu),\nu)$ for the average distances from
$\nu$ to its close and far neighborhoods, respectively.  By the
definition of these sets and the completeness property (CO), these
distances can be expressed as
\begin{equation*}
\clsdist(\nu)=\gamma\sum_{\mu\in\wbarclsnb(\nu)}
			d_{\mu\nu}\barx_{\mu\nu} \quad\text{and}\quad
\fardist(\nu)=\frac{\gamma}{\gamma-1}\sum_{\mu\in\wbarfarnb(\nu)}
d_{\mu\nu}\barx_{\mu\nu}. 
\end{equation*}
We will also use notation $\clsmax(\nu)=\max_{\mu\in\wbarclsnb(\nu)}
d_{\mu\nu}$ for the maximum distance from $\nu$ to its close
neighborhood. The average distance from a demand $\nu$ to its overall
neighborhood $\wbarN(\nu)$ is denoted as $\concost(\nu) =
D(\wbarN(\nu), \nu) = \sum_{\mu \in \wbarN(\nu)} d_{\mu\nu}
\barx_{\mu\nu}$. It is easy to see that
\begin{equation}
  \concost(\nu) = \frac{1}{\gamma} \clsdist(\nu) + \frac{\gamma -
    1}{\gamma} \fardist(\nu).
  \label{eqn:avg dist cls dist far dist}
\end{equation}

Our partitioned solution $(\barbfx,\barbfy)$ must satisfy the same
partitioning and completeness properties as before, namely properties
(PS) and (CO) in Section~\ref{sec: adaptive partitioning}.  In
addition, it must satisfy a new neighborhood property (NB) and modified
properties (PD') and (SI'), listed below.

\begin{description}
	
      \renewcommand{\theenumii}{(\alph{enumii})}
      \renewcommand{\labelenumii}{\theenumii}

\item{(NB)} \label{NB}
	\emph{Neighborhoods.}
	For each demand $\nu \in \demandset$, its neighborhood is divided into \emph{close} and
	\emph{far} neighborhood, that is $\wbarN(\nu) = \wbarclsnb(\nu) \cup \wbarfarnb(\nu)$, where
	\begin{itemize}
	\item $\wbarclsnb(\nu) \cap \wbarfarnb(\nu) = \emptyset$,
	\item $\sum_{\mu\in\wbarclsnb(\nu)} \barx_{\mu\nu} =1/\gamma$, and 
	\item if $\mu\in \wbarclsnb(\nu)$ and $\mu'\in \wbarfarnb(\nu)$ 
				then $d_{\mu\nu}\le d_{\mu'\nu}$.   
	\end{itemize}
	Note that the first two conditions, together with
        (PS.\ref{PS:one}), imply that $\sum_{\mu\in\wbarfarnb(\nu)}
        \barx_{\mu\nu} =1-1/\gamma$. When defining $\wbarclsnb(\nu)$,
        in case of ties, which can occur when some facilities in
        $\wbarN(\nu)$ are at the same distance from $\nu$, we use a
        tie-breaking rule that is explained in the proof of
        Lemma~\ref{lem: PD1: primary overlap} (the only place where
        the rule is needed).

\item{(PD')} \emph{Primary demands.}
	Primary demands satisfy the following conditions:

	\begin{enumerate}
		
	\item\label{PD1:disjoint}  For any two different primary demands $\kappa,\kappa'\in P$ we have
				$\wbarclsnb(\kappa)\cap \wbarclsnb(\kappa') = \emptyset$.

	\item \label{PD1:yi} For each site $i\in\sitesset$, 
		$ \sum_{\kappa\in P}\sum_{\mu\in
                  i\cap\wbarclsnb(\kappa)}\barx_{\mu\kappa} \leq
                y_i^\ast$. In the summation, as before, we overload notation $i$ to stand for the set of
						facilities created on site $i$.
		
	\item \label{PD1:assign} Each demand $\nu\in\demandset$ is assigned
        to one primary demand $\kappa\in P$ such that

  			\begin{enumerate}
	
				\item \label{PD1:assign:overlap} $\wbarclsnb(\nu) \cap \wbarclsnb(\kappa) \neq \emptyset$, and
				\item \label{PD1:assign:cost}
          $\clsdist(\nu)+\clsmax(\nu) \geq
          \clsdist(\kappa)+\clsmax(\kappa)$.
			\end{enumerate}

	\end{enumerate}
	
\item{(SI')} \emph{Siblings}. For any pair $\nu,\nu'\in\demandset$ of different siblings we have
  \begin{enumerate}

	\item \label{SI1:siblings disjoint}
		  $\wbarN(\nu)\cap \wbarN(\nu') = \emptyset$.
		
	\item \label{SI1:primary disjoint} If $\nu$ is assigned to a primary demand $\kappa$ then
 		$\wbarN(\nu')\cap \wbarclsnb(\kappa) = \emptyset$. In particular, by Property~(PD'.\ref{PD1:assign:overlap}),
		this implies that different sibling demands are assigned to different primary demands, since $\wbarclsnb(\nu')$ is a subset of $\wbarN(\nu')$.

	\end{enumerate}
	
\end{description}


\paragraph{Modified adaptive partitioning.}
To obtain a fractional solution with the above properties, we employ a
modified adaptive partitioning algorithm. As in Section~\ref{sec:
  adaptive partitioning}, we have two phases.  In Phase~1 we split
clients into demands and create facilities on sites, while in Phase~2
we augment each demand's connection values $\barx_{\mu\nu}$ so that the total connection
value of each demand $\nu$ is $1$. As the partitioning algorithm proceeds, for any demand $\nu$,
$\wbarN(\nu)$ denotes the set of facilities with $\barx_{\mu\nu} > 0$;
hence the notation $\wbarN(\nu)$ actually represents a dynamic set which gets fixed 
once the partitioning algorithm concludes both Phase 2. On the
other hand, $\wbarclsnb(\nu)$ and $\wbarfarnb(\nu)$ refer to the close
and far neighborhoods at the time when $\wbarN(\nu)$ is fixed.

Similar to the algorithm in Section~\ref{sec: adaptive partitioning},
Phase~1 runs in iterations. Fix some iteration and consider any client
$j$.  As before, $\wtildeN(j)$ is the neighborhood of $j$ with respect
to the yet unpartitioned solution, namely the set of facilities $\mu$
such that $\tildex_{\mu j}>0$. Order the facilities in this set as
$\wtildeN(j) = \braced{\mu_1,...,\mu_q}$ with non-decreasing distance
from $j$, that is $d_{\mu_1 j} \leq d_{\mu_2 j} \leq \ldots \leq
d_{\mu_q j}$. Without loss of generality,
there is an index $l$ for which $\sum_{s=1}^l \tildex_{\mu_s j} =
1/\gamma$, since we can always split one facility to achieve
this. Then we define $\wtildeclsnb(j) = \braced{\mu_1,...,\mu_l}$. 
(Unlike close neighborhoods of demands, $\wtildeclsnb(j)$ can vary over time.)
We also use notation
\begin{equation*}
\tcccls(j) =  D(\wtildeclsnb(j), j) = \gamma\sum_{\mu\in\wtildeclsnb(j)} d_{\mu j} \tildex_{\mu j}
			\quad\textrm{ and }\quad
 \dmaxcls(j) = \max_{\mu \in \wtildeclsnb(j)} d_{\mu j}. 
\end{equation*}

When the iteration starts, we first find a not-yet-exhausted client
$p$ that minimizes the value of $\tcccls(p) + \dmaxcls(p)$ and create
a new demand $\nu$ for $p$.  Now we have two cases:
\begin{description}
\item{\mycase{1}} $\wtildeclsnb(p) \cap \wbarN(\kappa)\neq\emptyset$
  for some existing primary demand $\kappa\in P$.  In this case we
  assign $\nu$ to $\kappa$. As before, if there are multiple such
  $\kappa$, we pick any of them. We also fix $\barx_{\mu \nu} \assign
  \tildex_{\mu p}$ and $\tildex_{\mu p}\assign 0$ for each $\mu \in
  \wtildeN(p)\cap \wbarN(\kappa)$. Note that although we
  check for overlap between $\wtildeclsnb(p)$ and $\wbarN(\kappa)$,
  the facilities we actually move into $\wbarN(\nu)$ include all
  facilities in the intersection of $\wtildeN(p)$, a bigger set, with
  $\wbarN(\kappa)$.

  At this time, the total connection value 
	between $\nu$ and $\mu\in \wbarN(\nu)$ is at most $1/\gamma$,
	 since $\sum_{\mu \in \wbarN(\kappa)}\bary_{\mu} = 1/\gamma$ 
	(this follows from the definition of neighborhoods for new primary demands in Case~2 below) 
	and  we have $\wbarN(\nu) \subseteq \wbarN(\kappa)$ at this point. Later
  in Phase 2 we will add additional facilities from $\wtildeN(p)$ to
  $\wbarN(\nu)$ to make $\nu$'s total connection value equal to $1$.

\item{\mycase{2}} $\wtildeclsnb(p) \cap \wbarN(\kappa) = \emptyset$
  for all existing primary demands $\kappa\in P$.  In this case we
  make $\nu$ a primary demand (that is, add it to $P$) and assign it
  to itself.  We then move the facilities from $\wtildeclsnb(p)$ to
  $\wbarN(\nu)$, that is for $\mu \in \wtildeclsnb(p)$ we set
  $\barx_{\mu \nu}\assign \tildex_{\mu p}$ and $\tildex_{\mu p}\set
  0$.

  It is easy to see that the total connection value of $\nu$ to
  $\wbarN(\nu)$ is now exactly $1/\gamma$, that is
	$\sum_{\mu \in \wbarN(\nu)}\bary_{\mu} = 1/\gamma$.
Moreover, facilities
  remaining in $\wtildeN(p)$ are all farther away from $\nu$ than
  those in $\wbarN(\nu)$. As we add only facilities from $\wtildeN(p)$
  to $\wbarN(\nu)$ in Phase~2, the final $\wbarclsnb(\nu)$ contains
  the same set of facilities as the current set $\wbarN(\nu)$.
  (More precisely, $\wbarclsnb(\nu)$ consists of the facilities that
	either are currently in $\wbarN(\nu)$ or were obtained from splitting
	the facilities currently in $\wbarN(\nu)$.)
\end{description}
Once all clients are exhausted, that is, each client $j$ has $r_j$
demands created, Phase~1 concludes. We then run Phase~2, the
augmenting phase, following the same steps as in Section~\ref{sec:
  adaptive partitioning}.  For each client $j$ and each demand $\nu\in
j$ with total connection value to $\wbarN(\nu)$ less than $1$
(that is, $\sum_{\mu\in\wbarN(\nu)} \barx_{\mu\nu} < 1$),
we use our $\AugmentToUnit()$
procedure to add additional facilities (possibly split, if necessary)
from $\wtildeN(j)$ to $\wbarN(\nu)$ to make the total connection value
between $\nu$ and $\wbarN(\nu)$ equal $1$.

\medskip

This completes the description of the partitioning
algorithm. Summarizing, for each client $j\in\clientset$ we 
created $r_j$ demands on the same point as $j$, and we created a number
of facilities at each site $i\in\sitesset$. Thus computed sets of
demands and facilities are denoted $\demandset$ and $\facilityset$,
respectively.  For each facility $\mu\in i$ we defined its fractional
opening value $\bary_\mu$, $0\le \bary_\mu\le 1$, and for each demand
$\nu\in j$ we defined its fractional connection value
$\barx_{\mu\nu}\in \braced{0,\bary_\mu}$.  The connections with
$\barx_{\mu\nu} > 0$ define the neighborhood $\wbarN(\nu)$. The facilities in
$\wbarN(\nu)$ that are closest to $\nu$ and have total connection value from $\nu$ equal
$1/\gamma$ form the close neighborhood $\wbarclsnb(\nu)$, while the remaining facilities
in $\wbarN(\nu)$ form the far neighborhood
$\wbarfarnb(\nu)$. It remains to show that this partitioning satisfies all the desired
properties.


\medskip
\paragraph{Correctness of partitioning.}
We now argue that our partitioned fractional solution $(\barbfx,\barbfy)$
satisfies all the stated properties. Properties~(PS), (CO) and (NB) are
directly enforced by the algorithm.

(PD'.\ref{PD1:disjoint}) holds because for each primary demand
$\kappa\in p$, $\wbarclsnb(\kappa)$ is the same set as
$\wtildeclsnb(p)$ at the time when $\kappa$ was created, and
$\wtildeclsnb(p)$ is removed from $\wtildeN(p)$ right after this
step. Further, the partitioning algorithm makes $\kappa$ a primary
demand only if $\wtildeclsnb(p)$ is disjoint from the set
$\wbarN(\kappa')$ of all existing primary demands $\kappa'$ at that
iteration, but these neighborhoods are the same as the final close
neighborhoods $\wbarclsnb(\kappa')$.

The justification of (PD'.\ref{PD1:yi}) is similar to that for
(PD.\ref{PD:yi}) from Section~\ref{sec: adaptive partitioning}. All
close neighborhoods of primary demands are disjoint, due to
(PD'.\ref{PD1:disjoint}), so each facility $\mu \in i$ can appear in
at most one $\wbarclsnb(\kappa)$, for some $\kappa\in P$. Condition
(CO) implies that $\bary_{\mu} = \barx_{\mu\kappa}$ for $\mu \in \wbarclsnb(\kappa)$.
As a result, the summation on
the left-hand side is not larger than $\sum_{\mu\in i}\bary_{\mu} = y_i^\ast$.

Regarding (PD'.\ref{PD1:assign:overlap}), at first glance this
property seems to follow directly from the algorithm, as we only
assign a demand $\nu$ to a primary demand $\kappa$ when $\wbarN(\nu)$
at that iteration overlaps with $\wbarN(\kappa)$ (which is equal to
the final value of $\wbarclsnb(\kappa)$).  However, it is a little
more subtle, as the final $\wbarclsnb(\nu)$ may contain facilities
added to $\wbarN(\nu)$ in Phase 2. Those facilities may turn out to be
closer to $\nu$ than some facilities in $\wbarN(\kappa) \cap
\wtildeN(j) $ (not $\wtildeN_{\cls}(j)$) that we added to
$\wbarN(\nu)$ in Phase 1. If the final $\wbarclsnb(\nu)$ consists only of
facilities added in Phase 2, we no longer have the desired overlap of
$\wbarclsnb(\kappa)$ and $\wbarclsnb(\nu)$. Luckily this bad scenario
never occurs. We postpone the proof of this property to
Lemma~\ref{lem: PD1: primary overlap}.  The proof of
(PD'.\ref{PD1:assign:cost}) is similar to that of Lemma~\ref{lem:
  PD:assign:cost holds}, and we defer it to Lemma~\ref{lem: PD1:
  primary optimal}.

(SI'.\ref{SI1:siblings disjoint}) follows directly from the algorithm
because for each demand $\nu\in j$, all facilities added to
$\wbarN(\nu)$ are immediately removed from $\wtildeN(j)$ and each
facility is added to $\wbarN(\nu)$ of exactly one demand $\nu \in j$.
Splitting facilities obviously preserves (SI'.\ref{SI1:siblings disjoint}).

The proof of (SI'.\ref{SI1:primary disjoint}) is similar to that of
Lemma~\ref{lem: property SI:primary disjoint holds}. If $\kappa=\nu$
then (SI'.\ref{SI1:primary disjoint}) follows from
(SI'.\ref{SI1:siblings disjoint}), so we can assume that
$\kappa\neq\nu$.  Suppose that $\nu'\in j$ is assigned to $\kappa'\in
P$ and consider the situation after Phase~1. By the way we reassign
facilities in Case~1, at this time we have $\wbarN(\nu)\subseteq
\wbarN(\kappa) = \wbarclsnb(\kappa)$ and $\wbarN(\nu')\subseteq
\wbarN(\kappa') =\wbarclsnb(\kappa')$, so $\wbarN(\nu')\cap
\wbarclsnb(\kappa) = \emptyset$, by (PD'.\ref{PD1:disjoint}).
Moreover, we have $\wtildeN(j) \cap \wbarclsnb(\kappa) = \emptyset$
after this iteration, because any facilities that were also in
$\wbarclsnb(\kappa)$ were removed from $\wtildeN(j)$ when $\nu$ was
created. In Phase~2, augmentation does not change $\wbarclsnb(\kappa)$
and all facilities added to $\wbarN(\nu')$ are from the set
$\wtildeN(j)$ at the end of Phase 1, which is a subset of the set
$\wtildeN(j)$ after this iteration, since $\wtildeN(j)$ can only shrink. 
So the condition (SI'.\ref{SI1:primary disjoint}) will
remain true.


\begin{lemma} \label{lem: PD1: primary overlap}
  Property (PD'.\ref{PD1:assign:overlap}) holds.
\end{lemma}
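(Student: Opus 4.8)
The plan is to follow the history of a non-primary demand $\nu$ of a client $p$ and of the primary demand $\kappa$ it gets assigned to. (If $\nu$ is itself primary then it is assigned to itself and $\wbarclsnb(\nu)\cap\wbarclsnb(\nu)=\wbarclsnb(\nu)\ne\emptyset$, so assume $\nu$ is non-primary.) Say $\nu$ was assigned to $\kappa$ during Case~1 of some iteration of Phase~1; I fix that iteration and let $R=\dmaxcls(p)$ be its value there. The Case~1 test gives a facility $\mu^\ast\in\wtildeclsnb(p)\cap\wbarN(\kappa)$; since $\nu$ lies at $p$ we have $d_{\mu^\ast\nu}=d_{\mu^\ast p}\le R$, and $\mu^\ast\in\wbarN(\nu)$ because the whole set $\wtildeN(p)\cap\wbarN(\kappa)$ is moved into $\wbarN(\nu)$ at that step.

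Next I would establish two facts. \emph{(i)} The set $\wbarN(\kappa)$ at the fixed iteration equals the final close neighborhood $\wbarclsnb(\kappa)$ (up to facility splitting): $\wbarN(\kappa)$ was fixed when $\kappa$ was created in Case~2, with total $\bary$-value $1/\gamma$; afterwards it changes only by splitting; and the facilities appended to it in Phase~2 all sit at distance $\ge\clsmax(\kappa)$, so — by the tie-break stated below — none of them enters $\wbarclsnb(\kappa)$. In particular the facilities moved into $\wbarN(\nu)$ during Phase~1 all belong to $\wbarclsnb(\kappa)$. \emph{(ii)} Writing $V_{<}$ for the total $\bary$-value of the facilities of the \emph{final} $\wbarN(\nu)$ at distance $<R$ from $\nu$, we have $V_{<}<1/\gamma$. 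Indeed every facility ever placed into $\wbarN(\nu)$ — those moved in during Phase~1, which lie in $\wtildeN(p)\cap\wbarN(\kappa)$, and those appended in Phase~2 by $\AugmentToUnit$, which come from the residual set $\wtildeN(p)$ — descends from a facility of $\wtildeN(p)$ present at the fixed iteration; any such ancestor at distance $<R$ must lie in $\wtildeclsnb(p)$, since $\wtildeclsnb(p)$ contains every facility of $\wtildeN(p)$ closer than its last (farthest) member $\mu_l$, which sits at distance exactly $R$ with $\bary_{\mu_l}>0$. Since the Phase~1 batch and the residual $\wtildeN(p)$ are disjoint subsets of $\wtildeN(p)$, the value $V_{<}$ is at most the $\bary$-value of $\{\mu\in\wtildeclsnb(p):d_{\mu p}<R\}$, which is $1/\gamma-\bary_{\mu_l}<1/\gamma$.

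Now I combine these. By \emph{(ii)}, the close neighborhood $\wbarclsnb(\nu)$ — the $1/\gamma$ nearest facilities of $\wbarN(\nu)$ — contains every facility of $\wbarN(\nu)$ at distance $<R$, and then it still needs a strictly positive amount of $\bary$-value, which it takes from facilities at distance exactly $R$. If some facility of $\wbarN(\nu)\cap\wbarN(\kappa)$ is at distance $<R$ from $\nu$, it is automatically in $\wbarclsnb(\nu)$ and we are done. Otherwise $\mu^\ast$ itself sits at distance exactly $R$; here I invoke the tie-breaking rule, which I would state as: \emph{when forming $\wbarclsnb(\sigma)$ for any demand $\sigma$, among equidistant facilities of $\wbarN(\sigma)$ include first those already present in $\wbarN(\sigma)$ at the end of Phase~1.} By \emph{(i)}, the distance-$R$ facilities of $\wbarN(\nu)$ present at the end of Phase~1 form a nonempty subset of $\wbarclsnb(\kappa)$ (it contains $\mu^\ast$), and since $\wbarclsnb(\nu)$ must absorb positive value at distance $R$ and does so from this subset first, it contains a facility of $\wbarclsnb(\kappa)$. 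Either way $\wbarclsnb(\nu)\cap\wbarclsnb(\kappa)\ne\emptyset$.

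Finally I would verify that this tie-break is consistent and causes no damage: it is applied uniformly to every demand; for a primary demand $\kappa$ it leaves $\wbarclsnb(\kappa)$ equal to the value-$1/\gamma$ set that Phase~1 produced (so fact~\emph{(i)} and Property~(PD'.\ref{PD1:disjoint}) hold); and it never changes the full neighborhoods $\wbarN(\cdot)$, so the properties of group~(SI') are unaffected. I expect the main obstacle to be fact~\emph{(ii)}: one has to carry out the provenance-and-disjointness bookkeeping carefully enough to guarantee that $\AugmentToUnit$, which in Phase~2 may pull facilities from the residual $\wtildeN(p)$ in an unconstrained order, cannot accumulate in $\wbarN(\nu)$ more than $1/\gamma$ worth of facilities strictly closer to $\nu$ than $R$ — this, together with pinning down the right tie-break, is the heart of the lemma.
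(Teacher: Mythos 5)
Your proof is correct, and it rests on exactly the same core idea as the paper's: bound from above the $\bary$-value of facilities in the final $\wbarN(\nu)$ that are strictly closer to $\nu$ than the farthest facility of the Phase-1 batch, conclude that $\wbarclsnb(\nu)$ must spill over into that boundary distance, and then invoke a tie-breaking rule to force a facility from $\wbarclsnb(\kappa)$ into $\wbarclsnb(\nu)$. The presentation differs in two ways worth noting. First, you frame the budget bound via the explicit distance threshold $R=\dmaxcls(p)$ and track provenance of facilities at distance $<R$, whereas the paper partitions the descendants of $\wtildeN(j)$ into sets $B$ (from $\wtildeclsnb(j)\cap\wbarclsnb(\kappa)$), $E^-$ (from $\wtildeclsnb(j)\setminus\wbarclsnb(\kappa)$), and $E^+$ (from $\wtildeN(j)\setminus\wtildeclsnb(j)$), and argues set-theoretically that $B\subseteq\wbarclsnb(\nu)$; these are two views of the same counting argument, with the paper's version being slightly stronger (all of $B$ lands inside) and yours being slightly more explicit about the distance bookkeeping. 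Second, and more substantively, the tie-breaking rules are stated differently: the paper says ``break ties in favor of facilities in $B$,'' which is phrased per demand $\nu$ and its assigned $\kappa$, whereas you state a single global rule (``among equidistant facilities, prefer those already in $\wbarN(\sigma)$ at the end of Phase~1''). Your rule is cleaner as a specification — it is applied uniformly without reference to the primary-demand assignment — and you correctly check that it leaves $\wbarclsnb(\kappa)$ for primary demands equal to the Phase-1 close set, which is needed both for your fact~(i) and for Property~(PD'.\ref{PD1:disjoint}) and Lemma~\ref{lem: PD1: primary optimal}. One small imprecision: the $\bary$-value of $\{\mu\in\wtildeclsnb(p):d_{\mu p}<R\}$ is at most $1/\gamma-\bary_{\mu_l}$, not necessarily equal to it (several of the $\mu_s$ may sit at distance exactly $R$), but this only strengthens the strict inequality $V_<<1/\gamma$ that you need.
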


\begin{proof}
  Let $j$ be the client for which $\nu\in j$. We consider an iteration
  when we create $\nu$ from $j$ and assign it to $\kappa$, and
  within this proof, notation $\wtildeclsnb(j)$ and $\wtildeN(j)$
  will refer to the value of the sets at this particular time.  
At this time, $\wbarN(\nu)$ is initialized to $\wtildeN(j)\cap
  \wbarN(\kappa)$.  Recall that $\wbarN(\kappa)$ is now equal to the
  final $\wbarclsnb(\kappa)$ (taking into account facility splitting). We
  would like to show that the set $\wtildeclsnb(j)\cap
  \wbarclsnb(\kappa)$ (which is not empty) will be included in
  $\wbarclsnb(\nu)$ at the end. Technically speaking, this will not be
  true due to facility splitting, so we need to rephrase this claim
  and the proof in terms of the set of facilities obtained after the
  algorithm completes.

\begin{figure}[ht]
\begin{center}
\includegraphics[width=3.2in]{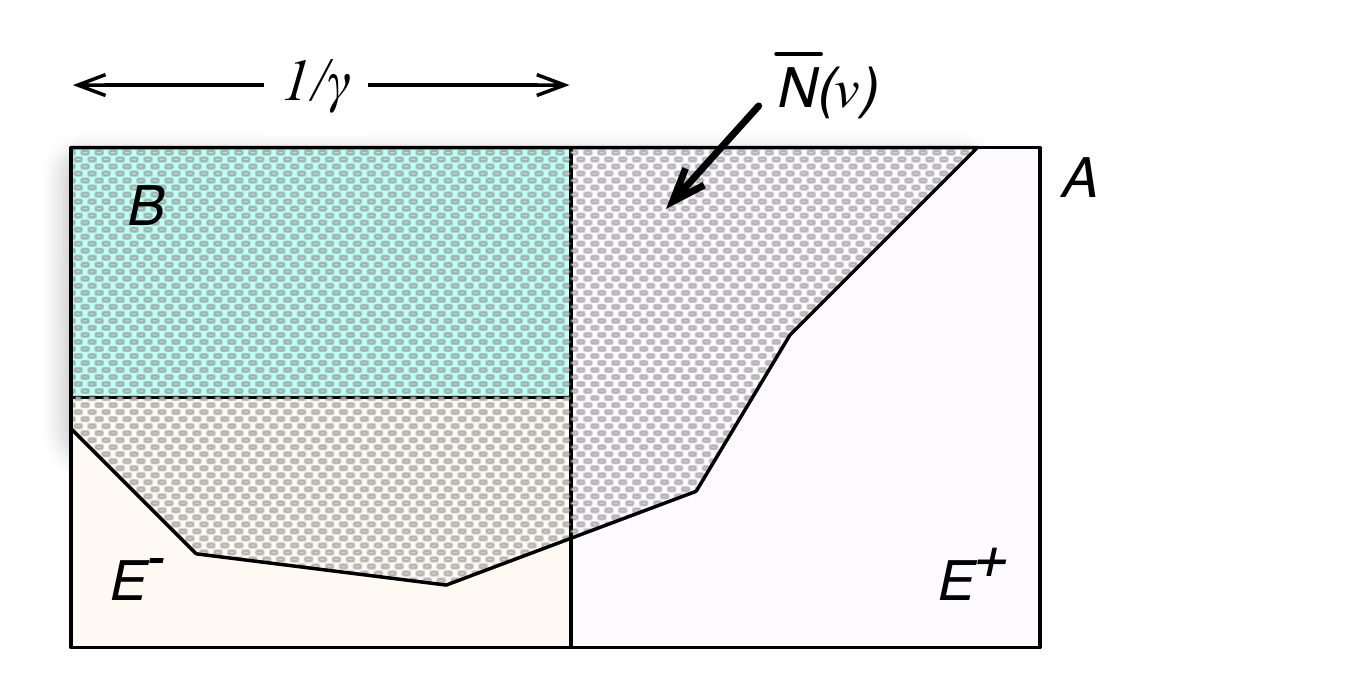}
\caption{Illustration of the sets $\wbarN(\nu)$, $A$, $B$,
  $E^-$ and $E^+$ in the proof of Lemma~\ref{lem: PD1:
    primary overlap}. Let $X \Subset Y$ mean that the facility
	sets $X$ is obtained from $Y$ by splitting facilities.
	We then have $A \Subset \wtildeN(j)$, 
	$B \Subset  \wtildeclsnb(j) \cap \wbarclsnb(\kappa)$, 
	$E^- \Subset  \wtildeclsnb(j) - \wbarclsnb(\kappa)$, 
	$E^+ \Subset \wtildeN(j) - \wtildeclsnb(j)$.}
\label{fig: sets lemma PD'3a}
\end{center}
\end{figure}

  We define the sets $A$, $B$, $E^-$ and $E^+$ as the subsets of
  $\facilityset$ (the final set of facilities) that were obtained from
  splitting facilities in the sets $\wtildeN(j)$, $\wtildeclsnb(j)\cap
  \wbarclsnb(\kappa)$, $\wtildeclsnb(j) - \wbarclsnb(\kappa)$ and
  $\wtildeN(j) - \wtildeclsnb(j)$, respectively.  (See
  Figure~\ref{fig: sets lemma PD'3a}.)  We claim that at the end
  $B\subseteq \wbarclsnb(\nu)$, with the caveat that the ties in the
  definition of $\wbarclsnb(\nu)$ are broken in favor of the
  facilities in $B$.  (This is the tie-breaking rule that we mentioned
  in the definition of $\wbarclsnb(\nu)$.)  This will be sufficient to
  prove the lemma because $B\neq\emptyset$, by the algorithm.

  We now prove this claim. In this paragraph $\wbarN(\nu)$ denotes the
  final set $\wbarN(\nu)$ after both phases are completed. Thus the total
connection value of $\wbarN(\nu)$ to $\nu$ is $1$.
	Note first that
  $B\subseteq \wbarN(\nu) \subseteq A$, because we never remove
  facilities from $\wbarN(\nu)$ and we only add facilities from
  $\wtildeN(j)$.  Also, $B\cup E^-$ represents the facilities obtained
  from $\wtildeclsnb(j)$, so $\sum_{\mu\in B\cup E^-} \bary_{\mu} =
  1/\gamma$.  This and $B\subseteq \wbarN(\nu)$ implies that the total
  connection value of $B\cup (\wbarN(\nu)\cap E^-)$ to $\nu$ is at
  most $1/\gamma$. But all facilities in $B\cup (\wbarN(\nu)\cap E^-)$
  are closer to $\nu$ (taking into account our tie breaking in property (NB))
 	than those in $E^+\cap \wbarN(\nu)$. It follows
  that $B\subseteq \wbarclsnb(\nu)$, completing the proof.
\end{proof}


\begin{lemma}\label{lem: PD1: primary optimal}
  Property (PD'.\ref{PD:assign:cost}) holds.
\end{lemma}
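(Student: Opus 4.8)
The plan is to follow the proof of Lemma~\ref{lem: PD:assign:cost holds}, with the quantities $\tcc$, $\alpha^\ast$ and $\concost_\nu+\alpha^\ast_\nu$ replaced by their close-neighborhood counterparts $\tcccls$, $\dmaxcls$ and $\clsdist(\nu)+\clsmax(\nu)$. Suppose $\nu\in j$ is assigned to a primary demand $\kappa\in p$. If $\nu=\kappa$ the inequality is trivial, so assume $\nu$ is created in a strictly later iteration than $\kappa$; then $j$ is not exhausted at the iteration in which $\kappa$ is created. Writing $\tcccls^\eta(j)$ and $\dmaxcls^\eta(j)$ for the values of $\tcccls(j)$ and $\dmaxcls(j)$ at the iteration when demand $\eta$ is created, I would establish the chain
\begin{align*}
  \clsdist(\kappa)+\clsmax(\kappa)
  &= \tcccls^\kappa(p)+\dmaxcls^\kappa(p) \\
  &\le \tcccls^\kappa(j)+\dmaxcls^\kappa(j) \\
  &\le \tcccls^\nu(j)+\dmaxcls^\nu(j) \\
  &\le \clsdist(\nu)+\clsmax(\nu),
\end{align*}
which is exactly Property~(PD'.\ref{PD1:assign:cost}).

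The first equality holds because, when $\kappa$ is created as a primary demand of $p$ (Case~2 of the modified partitioning), $\wbarN(\kappa)$ is initialized to $\wtildeclsnb(p)$, and — as already observed in the description of Case~2 — every facility subsequently added to $\wbarN(\kappa)$ in Phase~2 is farther from $\kappa$ than those in $\wtildeclsnb(p)$; hence the final close neighborhood $\wbarclsnb(\kappa)$ consists precisely of the facilities of $\wtildeclsnb(p)$, up to splitting, which alters neither the average distance $\tcccls$ nor the maximum distance $\dmaxcls$. The first inequality is immediate, since the algorithm chooses $p$ to minimize $\tcccls(\cdot)+\dmaxcls(\cdot)$ over the not-yet-exhausted clients and $j$ is one of them at that iteration.

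For the last inequality I would use that $\nu\in j$. In both Case~1 and Case~2, every facility of the final set $\wbarN(\nu)$ belongs to $\wtildeN(j)$ at the time $\nu$ is created, or is a descendant of one obtained by splitting: in Phase~1 the facilities moved into $\wbarN(\nu)$ come from $\wtildeN(p)=\wtildeN(j)$, and in Phase~2 procedure $\AugmentToUnit()$ only adds facilities from $\wtildeN(j)$, a set that only shrinks. Thus, after coalescing split copies, $\wbarN(\nu)$ is a sub-collection of the value of $\wtildeN(j)$ at $\nu$'s creation. Now $\wbarclsnb(\nu)$ is the cheapest way to collect total connection value $1/\gamma$ inside $\wbarN(\nu)$, so its average distance $\clsdist(\nu)$ is at least the cheapest such average inside the larger collection, which is $\tcccls^\nu(j)$; and, for any threshold $d\le\dmaxcls^\nu(j)$, the total $\tildex_{\cdot j}$-value within distance $<d$ in that larger collection — a fortiori the total $\barx_{\cdot\nu}$-value within distance $<d$ in $\wbarN(\nu)$ — is $<1/\gamma$, so $\wbarclsnb(\nu)$ must contain a facility at distance $\ge\dmaxcls^\nu(j)$, i.e. $\clsmax(\nu)\ge\dmaxcls^\nu(j)$.

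The middle inequality is a monotonicity statement: for a fixed not-yet-exhausted client $j$, both $\tcccls(j)$ and $\dmaxcls(j)$ are nondecreasing as the partitioning progresses. This is the analog, for the modified partitioning, of Lemma~\ref{lem: tcc optimal}(a): between the two iterations $\tildex_{\cdot j}$ changes only by facility splitting (which preserves both total value and distances) and by demands of $j$ claiming connection value (which only decreases it), so for every threshold $d$ the quantity $\sum_{\mu\,:\,d_{\mu j}<d}\tildex_{\mu j}$ is nonincreasing in time. This at once gives $\dmaxcls^\kappa(j)\le\dmaxcls^\nu(j)$, while the fractional-knapsack-cover argument of Lemma~\ref{lem: tcc optimal}, with knapsack size $1/\gamma$ in place of $1$, gives $\tcccls^\kappa(j)\le\tcccls^\nu(j)$. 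I expect the main obstacle to be purely bookkeeping: making precise the "sub-collection up to splitting" relation between $\wbarN(\nu)$ (resp. $\wbarclsnb(\kappa)$) and the relevant snapshot of $\wtildeN(j)$, the threshold monotonicity, and checking that the tie-breaking rule fixed in the proof of Lemma~\ref{lem: PD1: primary overlap} is consistent with the definitions of $\wbarclsnb(\nu)$ and $\wtildeclsnb(j)$ used above. Once that is set up, all three inequalities reduce to the same reasoning already used for $\tcc$ and $\alpha^\ast$ in Section~\ref{sec: adaptive partitioning}.
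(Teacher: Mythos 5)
Your proof follows the paper's argument essentially verbatim: the same four-term chain of inequalities, the first equality justified by the Case 2 construction of $\wbarclsnb(\kappa)$ from $\wtildeclsnb(p)$, the first inequality by the minimization in the choice of $p$, and the last two by the same knapsack/monotonicity argument transferred from Lemma~\ref{lem: tcc optimal}. The additional detail you supply on the last inequality and the monotonicity step just makes explicit what the paper dismisses as "an argument analogous to that in the proof of Lemma~\ref{lem: tcc optimal}," so this is the same proof.
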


\begin{proof}
This proof is similar to that for Lemma~\ref{lem: PD:assign:cost holds}.
For a client $j$ and demand $\eta$, we will write
$\tcccls^\eta(j)$ and $\dmaxcls^\eta(j)$ to denote the values of
$\tcccls(j)$ and $\dmaxcls(j)$ at the time when $\eta$
was created. (Here $\eta$ may or may not be a demand of client $j$).

Suppose $\nu \in j$ is assigned to a primary demand $\kappa \in p$.
By the way primary demands are constructed in the partitioning
algorithm, $\wtildeclsnb(p)$ becomes $\wbarN(\kappa)$, which is equal
to the final value of $\wbarclsnb(\kappa)$. So we have
$\clsdist(\kappa) = \tcccls^\kappa (p)$ and $\clsmax(\kappa) =
\dmaxcls^\kappa(p)$. Further, since we choose $p$ to minimize
$\tcccls(p) + \dmaxcls(p)$, we have that $\tcccls^\kappa(p) +
\dmaxcls^\kappa(p) \leq \tcccls^\kappa(j) + \dmaxcls^\kappa(j)$.

Using an argument analogous to that in the proof of Lemma~\ref{lem: tcc optimal}, 
our modified partitioning algorithm guarantees that
  $\tcccls^{\kappa}(j) \leq \tcccls^{\nu}(j) \leq \clsdist(\nu)$ and
  $\dmaxcls^{\kappa}(j) \leq \dmaxcls^{\nu}(j) \leq \clsmax(\nu)$ since $\nu$ was
  created later.
  Therefore, we have
  \begin{align*}
    \clsdist(\kappa) + \clsmax(\kappa) &= \tcccls^{\kappa}(p) +	\dmaxcls^{\kappa}(p) 
					\\
					&\leq \tcccls^{\kappa}(j) + \dmaxcls^{\kappa}(j) 
					\leq \tcccls^{\nu}(j) + \dmaxcls^{\nu}(j) 
					\leq \clsdist(\nu) + \clsmax(\nu),
  \end{align*}
completing the proof.
\end{proof}


Now we have completed the proof that the computed partitioning satisfies
all the required properties.

\paragraph{Algorithm~{\EBGS}.}
The complete algorithm starts with solving the LP(\ref{eqn:fac_primal}) and
computing the partitioning described earlier in this section.  Given
the partitioned fractional solution $(\barbfx, \barbfy)$ with the
desired properties, we start the process of opening facilities and
making connections to obtain an integral solution. To this end, for
each primary demand $\kappa\in P$, we open exactly one facility
$\phi(\kappa)$ in $\wbarclsnb(\kappa)$, where each
$\mu\in\wbarclsnb(\kappa)$ is chosen as $\phi(\kappa)$ with
probability $\gamma\bary_{\mu}$. For all facilities
$\mu\in\facilityset - \bigcup_{\kappa\in P}\wbarclsnb(\kappa)$, we
open them independently, each with probability
$\gamma\bary_{\mu}$. 

We claim that all probabilities are well-defined, that is
$\gamma\bary_{\mu} \le 1$ for all $\mu$. Indeed, if $\bary_{\mu}>0$ then
$\bary_{\mu} = \barx_{\mu\nu}$ for some $\nu$, by Property~(CO).
If $\mu\in \wbarclsnb(\nu)$ then the definition of close
neighborhoods implies that $\barx_{\mu\nu} \le 1/\gamma$.
If $\mu\in \wbarfarnb(\nu)$ then
$\barx_{\mu\nu} \le 1-1/\gamma \le 1/\gamma$, because $\gamma < 2$.
Thus $\gamma\bary_{\mu} \le 1$, as claimed.

Next, we connect demands to facilities.  Each primary demand
$\kappa\in P$ will connect to the only open facility $\phi(\kappa)$ in
$\wbarclsnb(\kappa)$.  For each non-primary demand $\nu\in \demandset
- P$, if there is an open facility in $\wbarclsnb(\nu)$ then we
connect $\nu$ to the nearest such facility. Otherwise, we connect
$\nu$ to the nearest far facility in $\wbarfarnb(\nu)$ if one is
open. Otherwise, we connect $\nu$ to its \emph{target facility}
$\phi(\kappa)$, where $\kappa$ is the primary demand that $\nu$ is
assigned to.


\paragraph{Analysis.}
By the algorithm, for each client $j$, all its $r_j$ demands are connected to
open facilities. If two different siblings $\nu,\nu'\in j$ are assigned, respectively,
to primary demands $\kappa$, $\kappa'$ then, by
Properties~(SI'.\ref{SI1:siblings disjoint}), (SI'.\ref{SI1:primary
  disjoint}), and (PD'.\ref{PD1:disjoint}) we have
\begin{equation*}
( \wbarN(\nu) \cup \wbarclsnb(\kappa)) \cap (\wbarN(\nu')\cup \wbarclsnb(\kappa')) = \emptyset.
\end{equation*}
This condition guarantees that $\nu$ and $\nu'$ are assigned to different facilities,
regardless whether they are connected to a neighbor facility or to its target facility.
Therefore the computed solution is feasible.

\medskip

We now estimate the cost of the solution computed by Algorithm {\EBGS}. The lemma
below bounds the expected facility cost.


\begin{lemma} \label{lem: EBGS facility cost}
The expectation of facility cost $F_{\smallEBGS}$ of Algorithm~{\EBGS} is at most $\gamma F^\ast$.
\end{lemma}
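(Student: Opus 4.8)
The plan is to compute $\Exp[F_{\smallEBGS}]$ directly by linearity of expectation, after observing that every facility is opened with probability exactly $\gamma\bary_\mu$. First I would split $\facilityset$ into the facilities lying in some close neighborhood $\wbarclsnb(\kappa)$ of a primary demand and the remaining facilities. By Property~(PD'.\ref{PD1:disjoint}) the sets $\wbarclsnb(\kappa)$, $\kappa\in P$, are pairwise disjoint, so each facility $\mu$ belongs to at most one of them; hence this really is a partition of $\facilityset$ and there is no double counting. For $\mu\in\wbarclsnb(\kappa)$ the algorithm opens $\mu$ precisely when $\phi(\kappa)=\mu$, which happens with probability $\gamma\bary_\mu$ (and these probabilities sum to $\gamma\sum_{\mu\in\wbarclsnb(\kappa)}\bary_\mu=\gamma\cdot\tfrac1\gamma=1$ by Property~(NB) together with (CO), so the choice of $\phi(\kappa)$ is well defined). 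For $\mu\notin\bigcup_{\kappa}\wbarclsnb(\kappa)$ the algorithm opens $\mu$ independently with probability $\gamma\bary_\mu$. In either case $\Prob[\mu\text{ open}]=\gamma\bary_\mu$, and this is at most $1$ as already argued just before the lemma.

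Next I would apply linearity of expectation over all facilities:
\begin{equation*}
\Exp[F_{\smallEBGS}] \;=\; \sum_{\mu\in\facilityset} f_\mu\,\Prob[\mu\text{ open}] \;=\; \gamma\sum_{\mu\in\facilityset} f_\mu\,\bary_\mu .
\end{equation*}
Then I would regroup the sum by sites, using the convention $f_\mu=f_i$ for $\mu\in i$, and invoke Property~(PS.\ref{PS:yi}), $\sum_{\mu\in i}\bary_\mu=y_i^\ast$:
\begin{equation*}
\gamma\sum_{\mu\in\facilityset} f_\mu\,\bary_\mu \;=\; \gamma\sum_{i\in\sitesset} f_i\sum_{\mu\in i}\bary_\mu \;=\; \gamma\sum_{i\in\sitesset} f_i\, y_i^\ast \;=\; \gamma F^\ast,
\end{equation*}
which gives the claimed bound (in fact an equality).

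There is no real obstacle here: the argument is essentially the same as in Lemma~\ref{lemma:3fac}, with the only new ingredient being that the open probability is scaled by $\gamma$ and that facilities outside all close neighborhoods are handled by the independent-opening rule. The one point that deserves an explicit sentence is the use of (PD'.\ref{PD1:disjoint}) to guarantee that the ``primary'' facilities and the ``independent'' facilities together account for each facility exactly once, so that summing the per-facility expectations recovers $\gamma\sum_\mu f_\mu\bary_\mu$ with no facility omitted or counted twice.
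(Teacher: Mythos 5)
Your proposal is correct and takes essentially the same approach as the paper: both observe that every facility opens with probability exactly $\gamma\bary_\mu$ (whether inside a primary close neighborhood or not), apply linearity of expectation, regroup by site, and invoke (PS.\ref{PS:yi}). You simply spell out the partition of $\facilityset$ and the role of (PD'.\ref{PD1:disjoint}) a bit more explicitly than the paper does.
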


\begin{proof}
By the algorithm, each facility $\mu\in \facilityset$ is opened with
probability $\gamma \bary_{\mu}$, independently of whether it belongs to the
close neighborhood of a primary demand or not. Therefore, by
  linearity of expectation, we have that the expected facility cost is
\begin{equation*}
	\Exp[F_{\smallEBGS}] = \sum_{\mu \in \facilityset} f_\mu \gamma \bary_{\mu} 
			= \gamma \sum_{i\in \sitesset} f_i \sum_{\mu\in i} \bary_{\mu} 
			= \gamma \sum_{i \in \sitesset} f_i y_i^\ast = \gamma F^\ast,
\end{equation*}
where the third equality follows from (PS.\ref{PS:yi}).
\end{proof}


\medskip

In the remainder of this section we focus on the connection cost. Let $C_{\nu}$ be the
random variable representing the connection cost of a demand $\nu$. Our objective is
to show that the expectation of $\nu$ satisfies
\begin{equation}
\Exp[C_\nu]	\leq \concost(\nu) \cdot \max\left\{\frac{1/e+1/e^\gamma}{1-1/\gamma}, 1 + \frac{2}{e^\gamma}\right\}.
		\label{eqn: expectation of C_nu for EBGS}
\end{equation}
If $\nu$ is a primary demand then, due to the algorithm, we have $\Exp[C_{\nu}] =
\clsdist(\nu) \le \concost(\nu)$, so (\ref{eqn: expectation of C_nu for EBGS}) is
easily satisfied.

Thus for the rest of the argument we will focus on the case when $\nu$
is a non-primary demand.  Recall that the
algorithm connects $\nu$ to the nearest open facility in
$\wbarclsnb(\nu)$ if at least one facility in $\wbarclsnb(\nu)$ is
open. Otherwise the algorithm connects $\nu$ to the nearest open
facility in $\wbarfarnb(\nu)$, if any. In the event that no facility in
$\wbarN(\nu)$ opens, the algorithm will connect $\nu$ to its target
facility $\phi(\kappa)$, where $\kappa$ is the primary demand that
$\nu$ was assigned to, and $\phi(\kappa)$ is the only facility open in
$\wbarclsnb(\kappa)$. Let $\Lambda^\nu$ denote the event that at least
one facility in $\wbarN(\nu)$ is open and $\Lambda^\nu_{\cls}$ be the
event that at least one facility in $\wbarclsnb(\nu)$ is open.
$\neg \Lambda^\nu$ denotes the complement event of $\Lambda^\nu$, that is,
the event that none of $\nu$'s neighbors opens. 
We want to estimate the following three conditional expectations: 
\begin{equation*}
  \Exp[C_{\nu} \mid
  \Lambda^\nu_{\cls}],\quad \Exp[C_{\nu} \mid \Lambda^\nu \wedge \neg
  \Lambda^\nu_{\cls}], \quad\text{and}\quad \Exp[C_{\nu} \mid \neg \Lambda^\nu], 
\end{equation*}
and their associated probabilities.

We start with a lemma dealing with the third expectation,
$\Exp[C_\nu\mid\neg \Lambda^{\nu}] = \Exp[d_{\phi(\kappa)\nu} \mid
\Lambda^{\nu}]$. The proof of this lemma relies on
Properties~(PD'.\ref{PD1:assign:overlap}) and
(PD'.\ref{PD1:assign:cost}) of modified partitioning and follows the
reasoning in the proof of a similar lemma
in~\cite{ByrkaGS10,ByrkaA10}.  For the sake of completeness, we
include a proof in~\ref{sec: proof of lemma 15}.


\begin{lemma}\label{lem: EBGS target connection cost}
Assuming that no facility in $\wbarN(\nu)$ opens, the expected connection
cost of $\nu$ is
\begin{equation}
  \Exp[C_{\nu} \mid \neg \Lambda^{\nu}] \leq
  \clsdist(\nu) + 2\fardist(\nu).
  \label{eqn: expected connection cost target facility}
\end{equation}
\end{lemma}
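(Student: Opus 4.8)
The plan is to mimic the structure of the indirect-connection bound in the $1.736$-algorithm (Lemma~\ref{lem:echu indirect}), but now carefully distinguishing the close and far neighborhoods of the assigned primary demand $\kappa$. Since no facility in $\wbarN(\nu)$ opens, in particular no facility in $\wbarclsnb(\nu)$ opens; by Property~(PD'.\ref{PD1:assign:overlap}) there is some $\mu'\in\wbarclsnb(\nu)\cap\wbarclsnb(\kappa)$, and this facility $\mu'$ is \emph{not} the open one in $\wbarclsnb(\kappa)$, so $\phi(\kappa)\neq\mu'$. The conditional expectation we must bound is $\Exp[d_{\phi(\kappa)\nu}\mid\neg\Lambda^\nu] = D(\wbarclsnb(\kappa)\setminus\wbarN(\nu),\kappa\!\to\!\nu)$ in the appropriate weighted sense (the open facility $\phi(\kappa)$ is chosen among $\wbarclsnb(\kappa)$ with probability proportional to $\bary_\mu$, conditioned away from $\wbarN(\nu)$). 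First I would fix any $\mu'\in\wbarclsnb(\nu)\cap\wbarclsnb(\kappa)$ and, by the triangle inequality, write $d_{\phi(\kappa)\nu}\le d_{\phi(\kappa)\mu'} \le d_{\phi(\kappa)\kappa}+d_{\mu'\kappa}+d_{\mu'\nu}$ for every possible value of $\phi(\kappa)$.

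Next I would control each of the three terms. The term $d_{\mu'\nu}$ is at most $\clsmax(\nu)$ since $\mu'\in\wbarclsnb(\nu)$; the term $d_{\mu'\kappa}$ is at most $\clsmax(\kappa)$ since $\mu'\in\wbarclsnb(\kappa)$; and the expectation of $d_{\phi(\kappa)\kappa}$ over the (conditional) choice of $\phi(\kappa)$ is at most $\clsdist(\kappa)$ — more precisely, since we condition on $\phi(\kappa)\notin\wbarN(\nu)$, the expectation is a weighted average of $d_{\mu\kappa}$ over $\mu\in\wbarclsnb(\kappa)\setminus\wbarN(\nu)$, and I would argue this is at most $D(\wbarclsnb(\kappa),\kappa)=\clsdist(\kappa)$ exactly as in Case~2 of Lemma~\ref{lem:echu indirect}: either the facilities in $\wbarclsnb(\kappa)\cap\wbarN(\nu)$ that we removed are all at distance $>\clsdist(\kappa)$ from $\kappa$ (so removing them only lowers the average), or some removed facility is within $\clsdist(\kappa)$, in which case I route through that facility instead of $\mu'$ and get an even cleaner bound. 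Either way,
\begin{equation*}
\Exp[C_\nu\mid\neg\Lambda^\nu] \;\le\; \clsdist(\kappa)+\clsmax(\kappa)+\clsmax(\nu).
\end{equation*}

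Finally I would invoke Property~(PD'.\ref{PD1:assign:cost}), which says $\clsdist(\kappa)+\clsmax(\kappa)\le\clsdist(\nu)+\clsmax(\nu)$, to get $\Exp[C_\nu\mid\neg\Lambda^\nu]\le\clsdist(\nu)+2\clsmax(\nu)$, and then bound $\clsmax(\nu)\le\fardist(\nu)$ — this last inequality holds because, by Property~(NB), every facility in $\wbarclsnb(\nu)$ is at least as close to $\nu$ as every facility in $\wbarfarnb(\nu)$, so $\clsmax(\nu)$ is a lower bound on all far distances and hence on their average $\fardist(\nu)$ (one should note $\wbarfarnb(\nu)\neq\emptyset$, which holds since $\gamma>1$ forces $\sum_{\mu\in\wbarfarnb(\nu)}\barx_{\mu\nu}=1-1/\gamma>0$). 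This yields the claimed bound $\clsdist(\nu)+2\fardist(\nu)$. The main obstacle I anticipate is the bookkeeping in the second step: handling the conditioning on $\phi(\kappa)\notin\wbarN(\nu)$ correctly when taking the expectation of $d_{\phi(\kappa)\kappa}$, since the removed facilities are exactly those in $\wbarclsnb(\kappa)\cap\wbarN(\nu)$ and one must verify the averaging argument (the two-case split on whether the removed facilities are near or far from $\kappa$) goes through with the $\clsdist$/$\clsmax$ quantities rather than the plain $\concost$ used in Section~\ref{sec: 1.736-approximation}; the referenced appendix proof presumably carries out precisely this computation.
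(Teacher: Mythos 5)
Your high-level skeleton matches the paper's: both prove the stronger bound $\Exp[C_\nu\mid\neg\Lambda^\nu]\le\clsdist(\kappa)+\clsmax(\kappa)+\fardist(\nu)$ by bounding $D(K,\nu)$ where $K=\wbarclsnb(\kappa)\setminus\wbarN(\nu)$, then invoke (PD'.\ref{PD1:assign:cost}) and $\clsmax(\nu)\le\fardist(\nu)$. However, your two-case dichotomy for the ``averaging'' step has a genuine hole. Write $V_{\cls}=\wbarclsnb(\kappa)\cap\wbarclsnb(\nu)$ and $V_{\far}=\wbarclsnb(\kappa)\cap\wbarfarnb(\nu)$, so that the facilities ``removed'' when passing from $\wbarclsnb(\kappa)$ to $K$ are exactly $V_{\cls}\cup V_{\far}$. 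Your Case~A (every removed facility has $d_{\mu\kappa}>\clsdist(\kappa)$, so $D(K,\kappa)\le\clsdist(\kappa)$, route through $\mu'\in V_{\cls}$) is fine. Your Case~B (``some removed facility $\mu''$ has $d_{\mu''\kappa}\le\clsdist(\kappa)$; route through $\mu''$'') only works if that facility lies in $V_{\cls}$, because only then is $d_{\mu''\nu}$ bounded by $\clsmax(\nu)$. If the only removed facilities close to $\kappa$ are in $V_{\far}$, then $\mu''\in\wbarfarnb(\nu)$, and $d_{\mu''\nu}$ is a \emph{specific} far distance, controlled neither by $\clsmax(\nu)$ nor by the average $\fardist(\nu)$. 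This scenario — $D(K,\kappa)>\clsdist(\kappa)$, every $\mu\in V_{\cls}$ has $d_{\mu\kappa}>\clsdist(\kappa)$, yet $V_{\far}$ contains facilities close to $\kappa$ — is perfectly consistent and is precisely what the paper's Case~3 handles.

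The reason the analogy with Case~2 of Lemma~\ref{lem:echu indirect} breaks down is that in the $1.736$-analysis, complementary slackness gives $d_{\mu'\nu}\le\alpha^\ast_\nu$ for \emph{every} $\mu'\in\wbarN(\kappa)\cap\wbarN(\nu)$, so any bridging facility in the overlap is safe. Here the overlap splits into a controlled part $V_{\cls}$ and an uncontrolled part $V_{\far}$, and the delicate work is showing the bound still holds when one is forced toward $V_{\far}$. The paper does this by observing that (in Case~3) $D(V_{\far},\kappa)<\clsdist(\kappa)$, setting $\delta=\clsdist(\kappa)-D(V_{\far},\kappa)>0$, and then splitting on whether $D(V_{\far},\nu)\le\fardist(\nu)+\delta$: in the easy subcase an averaging argument produces a $\mu\in V_{\far}$ with $d_{\mu\kappa}+d_{\mu\nu}\le\clsdist(\kappa)+\fardist(\nu)$; in the hard subcase a weighting computation shows $\clsmax(\nu)\le\fardist(\nu)-\hat y\delta/(1-\hat y)$ (with $\hat y=\gamma\sum_{\mu\in V_{\far}}\bary_\mu$), which exactly offsets the excess $\hat y\delta/(1-\hat y)$ in $D(K\cup V_{\cls},\kappa)$. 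You flag the averaging step as ``the main obstacle'' and assume the appendix carries it out ``exactly'' as in the easier lemma, but it does not: the Case~3/3.2 analysis is the new content, and your proposal as written skips it.
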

\begin{proof}
See \ref{sec: proof of lemma 15}.
\end{proof}

Next, we derive some estimates for the expected cost of direct
connections.  The next technical lemma is a generalization of
Lemma~\ref{lem: echs expected C_nu}. In Lemma~\ref{lem: echs expected
  C_nu} we bound the expected distance to the closest open facility in
$\wbarN(\nu)$, conditioned on at least one facility in $\wbarN(\nu)$
being open. The lemma below provides a similar estimate for an
arbitrary set $A$ of facilities in $\wbarN(\nu)$, conditioned on that
at least one facility in set $A$ is open.  Recall that $D(A,\nu) =
\sum_{\mu \in A} d_{\mu\nu} \bary_{\mu} / \sum_{\mu \in A}
\bary_{\mu}$ is the average distance from $\nu$ to a facility in $A$. 


\begin{lemma}\label{lem: expected distance in EBGS}
  For any non-empty set $A\subseteq \wbarN(\nu)$, let $\Lambda^\nu_A$ be
  the event that at least one facility in $A$ is opened by Algorithm
  {\EBGS}, and denote by $C_\nu(A)$ the random variable representing
  the distance from $\nu$ to the closest open facility in $A$.  Then
  the expected distance from $\nu$ to the nearest open facility in
  $A$, conditioned on at least one facility in $A$ being opened, is
\begin{equation*}
	\Exp[C_\nu(A) \mid \Lambda^\nu_A ] \le D(A,\nu).
\end{equation*}
\end{lemma}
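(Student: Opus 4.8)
The plan is to mimic the grouping argument from the proof of Lemma~\ref{lem: echs expected C_nu}, but now applied to an arbitrary subset $A\subseteq\wbarN(\nu)$ instead of the whole neighborhood. First I would partition $A$ into groups $G_1,\dots,G_k$ exactly as before: two facilities $\mu,\mu'\in A$ go into the same group if they lie in $\wbarclsnb(\kappa)$ for a common primary demand $\kappa$, and any facility of $A$ not in any primary close neighborhood forms its own singleton group. The key structural fact is that Algorithm~{\EBGS} opens \emph{exactly one} facility from each $\wbarclsnb(\kappa)$ (chosen with probability $\gamma\bary_\mu$ among $\mu\in\wbarclsnb(\kappa)$) and opens the free facilities independently with probability $\gamma\bary_\mu$; hence, restricted to a group $G_s$, the process opens at most one facility, it opens \emph{some} facility of $G_s$ with probability $g_s := \gamma\sum_{\mu\in G_s}\bary_\mu$, and conditioned on that, each $\mu\in G_s$ is the open one with probability $\bary_\mu/\sum_{\eta\in G_s}\bary_\eta$. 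Moreover the group events are mutually independent across $s$ (different groups live in different primary close neighborhoods, or are free facilities). Set $\bard_s := D(G_s,\nu)$ and, after reindexing, assume $\bard_1\le\bard_2\le\dots\le\bard_k$.

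Next I would introduce the same auxiliary random process used in Lemma~\ref{lem: echs expected C_nu}: instead of connecting $\nu$ to the nearest open facility in $A$, connect it to the open facility in the group $G_s$ of \emph{smallest index} $s$ that was selected. This can only increase the conditional expected distance, so
\begin{align*}
  \Exp[C_\nu(A)\mid\Lambda^\nu_A]
  &\le \frac{1}{\Prob[\Lambda^\nu_A]}\Bigl(\bard_1 g_1 + \bard_2 g_2(1-g_1) + \dots + \bard_k g_k\prod_{z=1}^{k-1}(1-g_z)\Bigr)
  \\
  &\le \frac{1}{\Prob[\Lambda^\nu_A]}\cdot\Bigl(\sum_{s=1}^k \bard_s \hat g_s\Bigr)\cdot\Bigl(\sum_{t=1}^k g_t\prod_{z=1}^{t-1}(1-g_z)\Bigr)
  = \sum_{s=1}^k \bard_s \hat g_s = D(A,\nu),
\end{align*}
where $\hat g_s := g_s/\sum_{t}g_t$ are the normalized selection probabilities (so that $\sum_s\hat g_s=1$, which is what the inequality from~\ref{sec: ECHSinequality} requires), and $\Prob[\Lambda^\nu_A] = 1-\prod_{t=1}^k(1-g_t) = \sum_{t=1}^k g_t\prod_{z=1}^{t-1}(1-g_z)$. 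The final equality $\sum_s g_s \bard_s = D(A,\nu)\sum_s g_s$ follows since $g_s = \gamma\sum_{\mu\in G_s}\bary_\mu$ and $\bard_s g_s = \gamma\sum_{\mu\in G_s}d_{\mu\nu}\bary_\mu$, so both sides telescope to $\gamma\sum_{\mu\in A}d_{\mu\nu}\bary_\mu$ and $\gamma\sum_{\mu\in A}\bary_\mu$ respectively.

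The one point that needs a little care — and which I expect to be the main obstacle — is verifying that the Chudak–Shmoys-style inequality cited in~\ref{sec: ECHSinequality} applies verbatim here: that inequality was stated for probabilities $g_s$ summing to $1$, whereas here $\sum_s g_s = \gamma\sum_{\mu\in A}\bary_\mu$ may be less than (or, a priori, differ from) $1$ because $A$ is only a sub-collection of $\wbarN(\nu)$. The fix is to apply it to the normalized values $\hat g_s$ as indicated above; one checks that the weighted-average product term factors out exactly as in the {\ECHS} case because the expression $\sum_{t} g_t\prod_{z<t}(1-g_z)$ equals $\Prob[\Lambda^\nu_A]$ regardless of whether the $g_s$ sum to $1$. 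I would also note for completeness that each $g_s = \gamma\sum_{\mu\in G_s}\bary_\mu \le 1$: a group is either a singleton $\{\mu\}$ with $\gamma\bary_\mu\le 1$ (already shown in the algorithm description), or it equals some $\wbarclsnb(\kappa)$ on which $\sum_{\mu}\bary_\mu = 1/\gamma$, giving $g_s = 1$. Everything else is the same bookkeeping as in Lemma~\ref{lem: echs expected C_nu}.
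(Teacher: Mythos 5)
Your proof is correct and follows essentially the same route as the paper: group $A$ by primary close neighborhoods (with free facilities as singletons), replace the nearest-open-facility rule by the smaller-index-group rule, and invoke the Chebyshev-type inequality from the appendix. The normalization concern you flag is actually already built into the paper's statement of inequality~(\ref{eq:min expected distance}) — its right-hand side carries the factor $1/\sum_{s} g_s$ and does not assume $\sum_s g_s = 1$ (the parenthetical ``$\sum_s g_s=1$'' in Section~\ref{sec: 1.736-approximation} merely explains why that factor disappears in the {\ECHS} instance) — so your $\hat g_s$ rewriting, while correct, rediscovers something that was already there.
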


\begin{proof}
  The proof follows the same reasoning as the proof of Lemma~\ref{lem:
    echs expected C_nu}, so we only sketch it here. We start with a
  similar grouping of facilities in $A$: for each primary demand
  $\kappa$, if $\wbarclsnb(\kappa)\cap A\neq\emptyset$ then
  $\wbarclsnb(\kappa)\cap A$ forms a group. Facilities in $A$ that are
  not in a neighborhood of any primary demand form singleton groups.
  We denote these groups $G_1,...,G_k$. It is clear that the groups
  are disjoint because of (PD'.\ref{PD1:disjoint}). Denoting by
  $\bard_s = D(G_s, \nu)$ the average distance from $\nu$ to a group $G_s$, we
  can assume that these groups are ordered so that $\bard_1\le ... \le
  \bard_k$.

  Each group can have at most one facility open and the events
  representing opening of any two facilities that belong to different
  groups are independent. To estimate the distance from $\nu$ to the
  nearest open facility in $A$, we use an alternative
  random process to make connections, that is easier to
  analyze. Instead of connecting $\nu$ to the nearest open facility in
  $A$, we will choose the smallest $s$ for which $G_s$ has an open
  facility and connect $\nu$ to this facility. (Thus we selected an
  open facility with respect to the minimum $\bard_s$, not the actual
  distance from $\nu$ to this facility.)  This can only increase the
  expected connection cost, thus denoting $g_s = \sum_{\mu\in G_s}
  \gamma\bary_\mu$ for all $s=1,\ldots,k$, and letting $\Prob[\Lambda^\nu_A]$
  be the probability that $A$ has at least one facility open, we have
\begin{align}
    \Exp[C_\nu(A) \mid \Lambda^\nu_A] &\leq \frac{1}{\Prob[\Lambda^\nu_A]} (\bard_1 g_1 +
    \bard_2 g_2 (1 - g_1) + \ldots + \bard_k  g_k(1 -
    g_1)\ldots(1-g_{k-1}))
    \label{eqn: dist set to nu 1}
    \\
    &\leq \frac{1}{\Prob[\Lambda^\nu_A]} \frac{\sum_{s=1}^k \bard_s
      g_s}{\sum_{s=1}^k  g_s} (1 - \prod_{s=1}^k (1 -  g_s))
    \label{eqn: dist set to nu 2}
    \\
    \notag
    &= \frac{\sum_{s=1}^k \bard_s g_s}{\sum_{s=1}^k g_s} =
    \frac{\sum_{\mu \in A} d_{\mu\nu} \gamma \bary_{\mu}}{\sum_{\mu
        \in A} \gamma \bary_{\mu}}
    \\
    \notag
    &= \frac{\sum_{s=1}^k d_{\mu\nu} \bary_{\mu}}{\sum_{\mu \in A}
      \bary_{\mu}} = D(A, \nu).
    \\
    \notag
\end{align}
Inequality (\ref{eqn: dist set to nu 2}) follows from inequality
(\ref{eq:min expected distance}) in~\ref{sec: ECHSinequality}. The rest of the
derivation follows from $\Prob[\Lambda^\nu_A] = 1 - \prod_{s=1}^k (1 -
g_s)$, and the definition of $\bard_s$, $g_s$ and $D(A,\nu)$.
\end{proof}

A consequence of Lemma~\ref{lem: expected distance in EBGS} is the
following corollary which bounds the other two expectations
of $C_\nu$, when at least one facility is opened in $\wbarclsnb(\nu)$,
and when no facility in $\wbarclsnb(\nu)$ opens but a facility in
$\wbarfarnb(\nu)$ is opened.


\begin{corollary} \label{coro: EBGS close and far distance} 
{\rm (a)} $\Exp[C_{\nu} \mid \Lambda_{\cls}^\nu] \leq \clsdist(\nu)$,
and
{\rm (b)} $\Exp[C_{\nu} \mid \Lambda^\nu \wedge \neg \Lambda_{\cls}^\nu]
    			\leq \fardist(\nu)$.
\end{corollary}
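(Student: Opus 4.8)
The plan is to derive both bounds directly from Lemma~\ref{lem: expected distance in EBGS} by instantiating it with two carefully chosen subsets $A\subseteq\wbarN(\nu)$, after first identifying, in each conditioning event, which facility Algorithm~{\EBGS} actually connects $\nu$ to. For part~(a) I would take $A=\wbarclsnb(\nu)$. Then $\Lambda^\nu_A$ is exactly the event $\Lambda^\nu_{\cls}$, and by the description of the rounding step, whenever $\Lambda^\nu_{\cls}$ occurs the algorithm connects $\nu$ to the nearest open facility of $\wbarclsnb(\nu)$, so on this event the random variable $C_\nu$ coincides with $C_\nu(\wbarclsnb(\nu))$. Lemma~\ref{lem: expected distance in EBGS} then immediately gives $\Exp[C_\nu\mid\Lambda^\nu_{\cls}]=\Exp[C_\nu(\wbarclsnb(\nu))\mid\Lambda^\nu_{\cls}]\le D(\wbarclsnb(\nu),\nu)=\clsdist(\nu)$, which is~(a).

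For part~(b) I would take $A=\wbarfarnb(\nu)$. Since $\Lambda^\nu=\Lambda^\nu_{\cls}\cup\Lambda^\nu_A$, the conditioning event $\Lambda^\nu\wedge\neg\Lambda^\nu_{\cls}$ equals $\Lambda^\nu_A\wedge\neg\Lambda^\nu_{\cls}$, and on it the algorithm, having no open facility in $\wbarclsnb(\nu)$, connects $\nu$ to the nearest open facility of $\wbarfarnb(\nu)$, so $C_\nu=C_\nu(\wbarfarnb(\nu))$ there. Lemma~\ref{lem: expected distance in EBGS} yields $\Exp[C_\nu(\wbarfarnb(\nu))\mid\Lambda^\nu_A]\le D(\wbarfarnb(\nu),\nu)=\fardist(\nu)$; it therefore remains to show that additionally conditioning on $\neg\Lambda^\nu_{\cls}$ does not increase this expectation. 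For this I would reuse the grouping from the proof of Lemma~\ref{lem: expected distance in EBGS}: partition $\wbarN(\nu)$ into the sets $\wbarclsnb(\kappa)\cap\wbarN(\nu)$ over primary demands $\kappa$, together with the singleton groups formed by facilities lying in no primary's close neighborhood; by (PD'.\ref{PD1:disjoint}) these groups are disjoint and open independently. If every such group lies entirely inside $\wbarclsnb(\nu)$ or entirely inside $\wbarfarnb(\nu)$, then the event $\neg\Lambda^\nu_{\cls}$ depends only on the groups meeting $\wbarclsnb(\nu)$ and is independent of which facility of $\wbarfarnb(\nu)$ (if any) is opened, so the conditional expectation is unchanged and~(b) follows.

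The main obstacle is precisely this independence claim in~(b): one must verify that no close neighborhood $\wbarclsnb(\kappa)$ of a primary demand straddles the close/far boundary of $\nu$, i.e. that no group contains facilities from both $\wbarclsnb(\nu)$ and $\wbarfarnb(\nu)$; otherwise opening the far part of such a group would be positively correlated with $\neg\Lambda^\nu_{\cls}$, and the simple "independence" shortcut breaks. The fallback, should that be needed, is to rerun the "connect to the smallest-index selected group" estimate from the proof of Lemma~\ref{lem: expected distance in EBGS} directly under the conditioning on $\neg\Lambda^\nu_{\cls}$, using the renormalized conditional group-selection probabilities together with the averaging inequality used there, and checking that the bound still collapses to $\fardist(\nu)$. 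The grouping setup and the telescoping are routine once in place; the conditioning bookkeeping for the far neighborhood is the only delicate point, and it is exactly where I would spend the care.
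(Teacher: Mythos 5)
Your part~(a) is correct and matches the paper's argument: on $\Lambda^\nu_{\cls}$ the algorithm connects $\nu$ to the nearest open facility of $\wbarclsnb(\nu)$, and Lemma~\ref{lem: expected distance in EBGS} with $A=\wbarclsnb(\nu)$ gives the bound directly, since $\Lambda^\nu_A$ coincides with $\Lambda^\nu_{\cls}$ there.

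For part~(b) you have put your finger on a real subtlety that the paper's own two-sentence proof also glosses over: the paper simply ``sets $A=\wbarfarnb(\nu)$'' in Lemma~\ref{lem: expected distance in EBGS}, but that lemma conditions only on $\Lambda^\nu_A$, whereas part~(b) conditions on $\Lambda^\nu_A\wedge\neg\Lambda^\nu_{\cls}$. You are right that identifying the two requires that no primary close-neighborhood $\wbarclsnb(\kappa)$ straddle the boundary between $\wbarclsnb(\nu)$ and $\wbarfarnb(\nu)$, and nothing in (NB), (PD') or (SI') rules that out: a $\wbarclsnb(\kappa)$ for some primary demand $\kappa$ other than $\nu$'s own can intersect both $\wbarclsnb(\nu)$ and $\wbarfarnb(\nu)$. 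So the ``independence shortcut'' cannot simply be asserted.

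The more serious problem is that your fallback does not close the gap. Conditioning on $\neg\Lambda^\nu_{\cls}$ leaves the far groups $G_s=\wbarclsnb(\kappa_s)\cap\wbarfarnb(\nu)$ mutually independent, but replaces each selection probability $g_s$ by $h_s=g_s/(1-g_s')$, where $g_s'$ is the $\gamma\bary$-mass of $\wbarclsnb(\kappa_s)\cap\wbarclsnb(\nu)$; the inflation is strict exactly for the straddling groups. Rerunning the surrogate process and the Chebyshev-type bound with these $h_s$ yields $\sum_s\bard_s h_s/\sum_s h_s$, which is $\fardist(\nu)=\sum_s\bard_s g_s/\sum_s g_s$ reweighted in favor of the straddling far-tails. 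That reweighted average is $\le\fardist(\nu)$ only if those straddling far-tails are systematically among the cheaper groups of $\wbarfarnb(\nu)$, and there is no structural guarantee of this in the stated properties: a straddling cluster's far-tail can easily be more expensive than a non-straddling far group. So ``checking that the bound still collapses to $\fardist(\nu)$'' is not routine bookkeeping; it is exactly the missing step. A correct treatment requires an additional idea --- for instance a comparison or coupling that avoids conditioning on $\neg\Lambda^\nu_{\cls}$ inside the far-neighborhood estimate, or an argument showing the partitioning controls the cost of far-tails of straddling clusters --- and neither your proposal nor, as written, the paper's proof supplies one.
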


\begin{proof}
When there is an open facility in $\wbarclsnb(\nu)$, the algorithm
  connect $\nu$ to the nearest open facility in
  $\wbarclsnb(\nu)$. When no facility in $\wbarclsnb(\nu)$ opens but
  some facility in $\wbarfarnb(\nu)$ opens, the algorithm connects
  $\nu$ to the nearest open facility in $\wbarfarnb(\nu)$. The rest of
  the proof follows from Lemma~\ref{lem: expected distance in
    EBGS}. By setting the set $A$ in Lemma~\ref{lem: expected distance
    in EBGS} to $\wbarclsnb(\nu)$, we have
  \begin{equation*}
    \Exp[C_{\nu} \mid \Lambda_{\cls}^\nu] \leq D(\wbarclsnb(\nu), \nu),
    = \clsdist(\nu),
    \label{eqn: expected connection cost close facility}
  \end{equation*}
proving part (a), and by setting the set $A$ to $\wbarfarnb(\nu)$, we have
  \begin{equation*}
    \Exp[C_{\nu}
    \mid \Lambda^\nu \wedge \neg \Lambda_{\cls}^\nu] \leq
    D(\wbarfarnb(\nu), \nu) = \fardist(\nu),
    \label{eqn: expected connection cost far facility}
  \end{equation*}
which proves part (b).
\end{proof}

Given the estimate on the three expected distances when $\nu$ connects
to its close facility in $\wbarclsnb(\nu)$ in (\ref{eqn: expected
  connection cost close facility}), or its far facility in
$\wbarfarnb(\nu)$ in (\ref{eqn: expected connection cost far
  facility}), or its target facility $\phi(\kappa)$ in (\ref{eqn:
  expected connection cost target facility}), the only missing pieces
are estimates on the corresponding probabilities of each event, which
we do in the next lemma. Once done, we shall put all pieces together
and proving the desired inequality on $\Exp[C_{\nu}]$, that is
(\ref{eqn: expectation of C_nu for EBGS}).

The next Lemma bounds the probabilities for events
that no facilities in $\wbarclsnb(\nu)$ and $\wbarN(\nu)$ are
opened by the algorithm.


\begin{lemma}\label{lem: close and far neighbor probability}
{\rm (a)} $\Prob[\neg\Lambda^\nu_{\cls}] \le 1/e$, and
{\rm (b)} $\Prob[\neg\Lambda^\nu] \le 1/e^\gamma$.
\end{lemma}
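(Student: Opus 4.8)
The plan is to mirror the probabilistic argument used for Lemma~\ref{lem: probability of not Lambda^nu}, but now with opening probabilities scaled by $\gamma$. First I would set up the grouping of $\wbarN(\nu)$ into disjoint groups $G_1,\ldots,G_k$ exactly as in the proof of Lemma~\ref{lem: expected distance in EBGS}: for each primary demand $\kappa$ with $\wbarclsnb(\kappa)\cap\wbarN(\nu)\neq\emptyset$, the set $\wbarclsnb(\kappa)\cap\wbarN(\nu)$ forms one group, and every facility in $\wbarN(\nu)$ not lying in any primary demand's close neighborhood is a singleton group. These groups are disjoint by (PD'.\ref{PD1:disjoint}), and the events ``$G_s$ has an open facility'' are mutually independent, each occurring with probability $g_s=\sum_{\mu\in G_s}\gamma\bary_\mu$. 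The same grouping, restricted to $\wbarclsnb(\nu)$, gives groups covering the close neighborhood; I should note that since $\wbarclsnb(\nu)$ is a prefix (in distance order) of $\wbarN(\nu)$ and the grouping respects primary-demand close neighborhoods, the groups contained in $\wbarclsnb(\nu)$ are simply a sub-collection of the $G_s$'s — or, more carefully, I may need to argue that no group straddles the close/far boundary, but for the probability bound it suffices to take any grouping of $\wbarclsnb(\nu)$ into disjoint groups whose union is $\wbarclsnb(\nu)$.

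For part (a), $\neg\Lambda^\nu_{\cls}$ is the event that no facility in $\wbarclsnb(\nu)$ opens. Using the groups $H_1,\ldots,H_m$ partitioning $\wbarclsnb(\nu)$ with selection probabilities $h_s=\sum_{\mu\in H_s}\gamma\bary_\mu$, independence gives
\begin{equation*}
\Prob[\neg\Lambda^\nu_{\cls}] = \prod_{s=1}^m (1-h_s) \le e^{-\sum_{s=1}^m h_s} = e^{-\gamma\sum_{\mu\in\wbarclsnb(\nu)}\bary_\mu} = e^{-\gamma\cdot(1/\gamma)} = 1/e,
\end{equation*}
where the inequality uses $1-x\le e^{-x}$, and the last step uses the defining property of close neighborhoods, $\sum_{\mu\in\wbarclsnb(\nu)}\bary_\mu = \sum_{\mu\in\wbarclsnb(\nu)}\barx_{\mu\nu} = 1/\gamma$ (property (NB) together with (CO)). For part (b), the same computation over all of $\wbarN(\nu)$ with groups $G_1,\ldots,G_k$ yields
\begin{equation*}
\Prob[\neg\Lambda^\nu] = \prod_{s=1}^k (1-g_s) \le e^{-\sum_{s=1}^k g_s} = e^{-\gamma\sum_{\mu\in\wbarN(\nu)}\bary_\mu} = e^{-\gamma},
\end{equation*}
using $\sum_{\mu\in\wbarN(\nu)}\bary_\mu = \sum_{\mu\in\wbarN(\nu)}\barx_{\mu\nu} = 1$ from (PS.\ref{PS:one}) and (CO).

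The one subtlety worth checking — and the main (minor) obstacle — is that the grouping for part (a) must genuinely partition $\wbarclsnb(\nu)$ with each group being entirely ``openable as a unit'' with the right probability: specifically, if some primary demand's close neighborhood $\wbarclsnb(\kappa)$ overlaps $\wbarN(\nu)$, I must make sure the piece of it that lies in $\wbarclsnb(\nu)$ is treated as one group, and that the complementary far-lying facilities of $\nu$ do not interfere with the independence. Since each $\wbarclsnb(\kappa)$ has exactly one open facility chosen from it, and the facilities outside all primary close neighborhoods are opened fully independently, restricting attention to $\wbarclsnb(\nu)$ and grouping its facilities by which $\wbarclsnb(\kappa)$ (if any) they belong to gives the required disjoint, independent groups. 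I would state this briefly and then present the two displayed inequalities above as the proof. Everything else is the routine $1-x\le e^{-x}$ estimate already used earlier in the paper.
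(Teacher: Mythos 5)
Your proof is correct and follows essentially the same route as the paper: partition the relevant neighborhood into disjoint, independently-opened groups defined by the primary demands' close neighborhoods (with singletons for facilities outside all such neighborhoods), then apply $1-x\le e^{-x}$ and use $\sum_{\mu\in\wbarclsnb(\nu)}\bary_\mu = 1/\gamma$ (resp.\ $\sum_{\mu\in\wbarN(\nu)}\bary_\mu = 1$). The ``subtlety'' you flag is handled exactly as you resolve it — the paper simply groups $\wbarclsnb(\nu)$ by which $\wbarclsnb(\kappa)$ each facility lies in, and independence across groups is immediate.
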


\begin{proof}
  (a) To estimate $\Prob[\neg\Lambda^\nu_{\cls}]$, we again consider a
  grouping of facilities in $\wbarclsnb(\nu)$, as in the proof of
  Lemma~\ref{lem: expected distance in EBGS}, according to the primary
  demand's close neighborhood that they fall in, with facilities not
  belonging to such neighborhoods forming their own singleton groups.
  As before, the groups are denoted $G_1, \ldots, G_k$. It is easy to
  see that $\sum_{s=1}^k g_s = \sum_{\mu \in \wbarclsnb(\nu)} \gamma
  \bary_{\mu} = 1$. For any group $G_s$, the probability that a
  facility in this group opens is $\sum_{\mu \in G_s} \gamma
  \bary_{\mu} = g_s$ because in the algorithm at most one facility in
  a group can be chosen and each is chosen with probability $\gamma
  \bary_{\mu}$. Therefore the probability that no facility 
  opens is $\prod_{s=1}^k (1 - g_s)$, which is
  at most $e^{-\sum_{s=1}^k g_s} = 1/e$. Therefore we have
  $\Prob[\neg\Lambda^\nu_A] \leq 1/e$.

(b)
  This proof is similar to the proof of (a). The probability $\Prob[\neg\Lambda^\nu]$ is at most
  $e^{-\sum_{s=1}^k g_s} = 1/e^\gamma$, because we now have
  $\sum_{s=1}^k g_s = \gamma \sum_{\mu \in \wbarN(\nu)} \bary_{\mu} =
  \gamma \cdot 1 = \gamma$.
\end{proof}

We are now ready to bound the overall connection cost of
Algorithm~{\EBGS}, namely inequality (\ref{eqn: expectation of C_nu for EBGS}).


\begin{lemma}\label{lem: EBGS nu's connection cost}
The expected connection of $\nu$ is
\begin{equation*}
\Exp[C_\nu] \le
  \concost(\nu)\cdot\max\Big\{\frac{1/e+1/e^\gamma}{1-1/\gamma}, 1+\frac{2}{e^\gamma}\Big\}.
\end{equation*}
\end{lemma}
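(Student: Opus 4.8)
The plan is to combine the three conditional bounds on $\Exp[C_\nu]$ (from Corollary~\ref{coro: EBGS close and far distance} and Lemma~\ref{lem: EBGS target connection cost}) with the probability estimates (Lemma~\ref{lem: close and far neighbor probability}), conditioning on the three mutually exclusive and exhaustive events $\Lambda^\nu_{\cls}$, $\Lambda^\nu\wedge\neg\Lambda^\nu_{\cls}$, and $\neg\Lambda^\nu$. Writing $p_1 = \Prob[\Lambda^\nu_{\cls}]$, $p_2 = \Prob[\Lambda^\nu\wedge\neg\Lambda^\nu_{\cls}]$, $p_3 = \Prob[\neg\Lambda^\nu]$, we have $p_1+p_2+p_3 = 1$, and by the cited results $p_2 + p_3 = \Prob[\neg\Lambda^\nu_{\cls}] \le 1/e$ and $p_3 \le 1/e^\gamma$. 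The law of total expectation then gives
\begin{equation*}
\Exp[C_\nu] \le p_1\clsdist(\nu) + p_2\fardist(\nu) + p_3\big(\clsdist(\nu)+2\fardist(\nu)\big).
\end{equation*}
The goal is to bound the right-hand side by $\concost(\nu)$ times the stated maximum, using the identity $\concost(\nu) = \frac1\gamma\clsdist(\nu) + \frac{\gamma-1}{\gamma}\fardist(\nu)$ from (\ref{eqn:avg dist cls dist far dist}) and the fact that $\clsdist(\nu)\le\fardist(\nu)$ (which holds since every facility in the close neighborhood is at least as near as every facility in the far neighborhood, by Property~(NB)).

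The main technical step is to show the linear functional $\Phi = (p_1+p_3)\clsdist(\nu) + (p_2+2p_3)\fardist(\nu)$, maximized over the feasible region for $(p_1,p_2,p_3)$ and over the ratio $t = \clsdist(\nu)/\fardist(\nu)\in[0,1]$, is at most $\concost(\nu)\cdot\max\{\frac{1/e+1/e^\gamma}{1-1/\gamma},\,1+2/e^\gamma\}$. First I would fix $\fardist(\nu)$ (say normalize $\fardist(\nu)=1$) and observe that, since $\clsdist(\nu)\le\fardist(\nu)$, increasing $p_3$ at the expense of $p_1$ only increases $\Phi$ (the coefficient of $\clsdist$ stays $p_1+p_3$ but we trade a unit of "$\clsdist$-weight" for a unit of "$2\fardist$-weight minus $\clsdist$-weight", a net nonnegative change since $\fardist\ge\clsdist$ — actually more carefully, moving mass from $p_1$ to $p_2$ or to $p_3$ never decreases $\Phi$). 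Hence the extremal configuration pushes $p_3$ up to its cap $1/e^\gamma$ and then pushes the remaining slack of $\neg\Lambda^\nu_{\cls}$ into $p_2$ up to its cap, i.e. $p_2 = 1/e - 1/e^\gamma$, and $p_1 = 1 - 1/e$. Plugging these in yields
\begin{equation*}
\Exp[C_\nu] \le \big(1 - 1/e + 1/e^\gamma\big)\clsdist(\nu) + \big(1/e + 1/e^\gamma\big)\fardist(\nu).
\end{equation*}
Then one does a two-point case analysis on the ratio $\clsdist(\nu):\fardist(\nu)$. When the two averages are equal, the bound becomes $(1+2/e^\gamma)\concost(\nu)$. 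When $\clsdist(\nu)$ is as small as possible relative to $\concost(\nu)$ — which, given the convex combination (\ref{eqn:avg dist cls dist far dist}), happens at $\clsdist(\nu)=0$, forcing $\fardist(\nu) = \frac{\gamma}{\gamma-1}\concost(\nu)$ — the bound becomes $\frac{1/e+1/e^\gamma}{1-1/\gamma}\concost(\nu)$. Since the coefficient expression above is linear (hence its maximum over the segment $\clsdist(\nu)\in[0,\fardist(\nu)]$, with $\concost(\nu)$ held fixed, is attained at an endpoint), the larger of these two values dominates, giving exactly the claimed bound.

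The step I expect to be the main obstacle is verifying rigorously that the worst case over $(p_1,p_2,p_3)$ is the corner $(1-1/e,\,1/e-1/e^\gamma,\,1/e^\gamma)$ — i.e. the monotonicity argument that shifting probability mass toward $p_3$ (and then toward $p_2$) never helps us, which relies essentially on $\clsdist(\nu)\le\fardist(\nu)$ and on the coefficients $1 < 1 < 2$ attached to $\clsdist,\fardist,\fardist$ across the three events. Once that monotonicity is pinned down, the remaining work is the routine endpoint comparison in $t=\clsdist(\nu)/\concost(\nu)$ described above, together with substituting (\ref{eqn:avg dist cls dist far dist}) and simplifying. I would also double-check the edge case $\wbarfarnb(\nu)=\emptyset$ (possible only if the augmenting phase added nothing beyond $1/\gamma$, which cannot happen since $\AugmentToUnit$ brings the total to $1>1/\gamma$), so $\fardist(\nu)$ is always well-defined and the division by $\sum_{\mu\in\wbarfarnb(\nu)}\bary_\mu = 1-1/\gamma > 0$ is legitimate.
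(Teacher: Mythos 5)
Your proposal is correct and follows essentially the same route as the paper: decompose $\Exp[C_\nu]$ over the three events $\Lambda^\nu_{\cls}$, $\Lambda^\nu\wedge\neg\Lambda^\nu_{\cls}$, $\neg\Lambda^\nu$, invoke Corollary~\ref{coro: EBGS close and far distance} and Lemma~\ref{lem: EBGS target connection cost} for the conditional expectations, cap $\Prob[\neg\Lambda^\nu]$ and $\Prob[\neg\Lambda^\nu_{\cls}]$ via Lemma~\ref{lem: close and far neighbor probability}, and finish by a linearity-in-$\rho$ endpoint argument using~(\ref{eqn:avg dist cls dist far dist}). The only cosmetic difference is that you phrase the probability step as an LP corner/monotonicity argument, while the paper achieves the same thing by directly rewriting the cost as $(\clsdist+\fardist)\Prob[\neg\Lambda^\nu] + (\fardist-\clsdist)\Prob[\neg\Lambda^\nu_{\cls}] + \clsdist$ (all coefficients nonnegative) and substituting the upper bounds; both land on $(1-1/e+1/e^\gamma)\clsdist(\nu) + (1/e+1/e^\gamma)\fardist(\nu)$ and then coincide.
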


\begin{proof}
  Recall that, to connect $\nu$, the algorithm uses the closest facility in
  $\wbarclsnb(\nu)$ if one is opened; otherwise it will try to connect $\nu$
  to the closest facility in $\wbarfarnb(\nu)$. Failing that, it will
  connect $\nu$ to $\phi(\kappa)$, the sole facility open in the
  neighborhood of $\kappa$, the primary demand $\nu$ was assigned
  to. Given that, we estimate $\Exp[C_\nu]$ as follows:
  \begin{align}
    \Exp[C_{\nu}] 
		\;&= \;\Exp[C_{\nu}\mid \Lambda^\nu_{\cls}] \cdot \Prob[\Lambda^\nu_{\cls}]	
				\;+\; \Exp[C_{\nu}\mid \Lambda^\nu\ \wedge\neg \Lambda^\nu_{\cls}] 
				\cdot \Prob[\Lambda^\nu\, \wedge\neg \Lambda^\nu_{\cls}]	
				\notag
		\\
		& \quad\quad\quad
				+ \; \Exp[C_{\nu}\mid \neg \Lambda^\nu] \cdot \Prob[\neg \Lambda^\nu]
				\notag
		\\
		&\leq \; \clsdist(\nu) \cdot \Prob[\Lambda^\nu_{\cls}]
			\;+\; \fardist(\nu)	
				\cdot \Prob[\Lambda^\nu\, \wedge\neg \Lambda^\nu_{\cls}]
                      \label{eqn: apply three expected dist}
						\\
                        &\quad\quad\quad
			+\; [\,\clsdist(\nu) + 2\fardist(\nu)\,] \cdot \Prob[\neg\Lambda^\nu]
		\notag
		\\
                &=\; [\,\clsdist(\nu) + \fardist(\nu)\,]\cdot \Prob[\neg\Lambda^\nu] 
						\;+\; 
							[\,\fardist(\nu)   -\clsdist(\nu)\,]
                                \cdot \Prob[\neg\Lambda^\nu_{\cls}]
                              \;+\;  \clsdist(\nu)
                                                        \notag
		\\
             &\leq\; [\,\clsdist(\nu) + \fardist(\nu)\,] \cdot \frac{1}{e^\gamma}
             \;+\; [\,\fardist(\nu) - \clsdist(\nu)\,] \cdot \frac{1}{e}
             \;+\; \clsdist(\nu)
             \label{eqn: probability estimate}
             \\
             \notag
             &=\; \Big(1 - \frac{1}{e} + \frac{1}{e^\gamma}\Big)\cdot \clsdist(\nu)
 				\;+\; \Big(\frac{1}{e} + \frac{1}{e^\gamma}\Big)\cdot\fardist(\nu).
\end{align}
Inequality (\ref{eqn: apply three expected dist}) follows from
Corollary~\ref{coro: EBGS close and far distance} and 
Lemma~\ref{lem: EBGS target connection cost}. 
Inequality (\ref{eqn: probability estimate}) follows from 
Lemma~\ref{lem: close and far neighbor probability} and
$\fardist(\nu) - \clsdist(\nu)\ge 0$.

Now define $\rho =\clsdist(\nu)/\concost(\nu)$. It is easy to
see that $\rho$ is between 0 and 1. Continuing the above
derivation, applying (\ref{eqn:avg dist cls dist far dist}), we get
\begin{align*}
\Exp[C_{\nu}]
             \;&\le\; \concost(\nu) 
			\cdot\left((1-\rho)\frac{1/e+1/e^\gamma}{1-1/\gamma} 
				+ \rho (1 + \frac{2}{e^\gamma})\right)
			\\
             &\leq \concost(\nu) 
				\cdot \max\left\{\frac{1/e+1/e^\gamma}{1-1/\gamma}, 1 + \frac{2}{e^\gamma}\right\},
\end{align*}
and the proof is now complete.
\end{proof}

With Lemma~\ref{lem: EBGS nu's connection cost} proven, we are now ready to bound our total connection cost.
For any client $j$ we have
\begin{align*}
\sum_{\nu\in j} C^{\avg}(\nu)
	&= \sum_{\nu\in j}\sum_{\mu\in\facilityset} d_{\mu\nu}\barx_{\mu\nu} 
	\\
	&= \sum_{i\in\sitesset}d_{ij}\sum_{\mu\in i}\sum_{\nu\in j} \barx_{\mu\nu}
	= \sum_{i\in\sitesset} d_{ij}x_{ij}^\ast = C_j^\ast.
\end{align*}
Summing over all clients $j$ we obtain that the total expected connection cost is
\begin{equation*}
	\Exp[ C_{\smallEBGS} ] \le  C^\ast\max\left\{\frac{1/e+1/e^\gamma}{1-1/\gamma}, 1+\frac{2}{e^\gamma}\right\}.
\end{equation*}
Recall that the expected facility cost is bounded by $\gamma F^\ast$,
as argued earlier. Hence the total expected cost is bounded by $\max\{\gamma,
\frac{1/e+1/e^\gamma}{1-1/\gamma}, 1+\frac{2}{e^\gamma}\}\cdot
\LP^\ast$. Picking $\gamma=1.575$ we obtain the desired ratio.


\begin{theorem}\label{thm:ebgs}
  Algorithm~{\EBGS} is a $1.575$-approximation algorithm for \FTFP.
\end{theorem}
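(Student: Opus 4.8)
The plan is to assemble Theorem~\ref{thm:ebgs} directly from the pieces already established in this section, so the "proof" is really a bookkeeping argument: combine the facility-cost bound, the per-demand connection-cost bound, and the feasibility observation, and then optimize over $\gamma$. First I would recall that the modified adaptive partitioning produces a fractional solution $(\barbfx,\barbfy)$ satisfying (PS), (CO), (NB), (PD'), and (SI') — this was verified above, culminating in Lemmas~\ref{lem: PD1: primary overlap} and~\ref{lem: PD1: primary optimal}. Feasibility of the rounded integral solution was argued from (SI'.\ref{SI1:siblings disjoint}), (SI'.\ref{SI1:primary disjoint}) and (PD'.\ref{PD1:disjoint}): for any two siblings $\nu,\nu'\in j$ assigned to primary demands $\kappa,\kappa'$, the sets $\wbarN(\nu)\cup\wbarclsnb(\kappa)$ and $\wbarN(\nu')\cup\wbarclsnb(\kappa')$ are disjoint, so $\nu$ and $\nu'$ land on different facilities no matter which of the three connection rules fires for each.

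Next I would collect the cost bounds. By Lemma~\ref{lem: EBGS facility cost}, the expected facility cost is at most $\gamma F^\ast$. For the connection cost, Lemma~\ref{lem: EBGS nu's connection cost} gives, for every demand $\nu$ (primary or not),
\begin{equation*}
\Exp[C_\nu] \le \concost(\nu)\cdot\max\Big\{\tfrac{1/e+1/e^\gamma}{1-1/\gamma},\,1+\tfrac{2}{e^\gamma}\Big\}.
\end{equation*}
Then, exactly as in the computation just after that lemma, for each client $j$ we have $\sum_{\nu\in j}\concost(\nu)=\sum_{i\in\sitesset} d_{ij}x^\ast_{ij}=C^\ast_j$ by (PS.\ref{PS:xij}) and the definition of $\concost(\nu)$; summing over all $\nu$ and using linearity of expectation yields $\Exp[C_{\smallEBGS}]\le C^\ast\cdot\max\{\tfrac{1/e+1/e^\gamma}{1-1/\gamma},1+\tfrac{2}{e^\gamma}\}$. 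Adding the two contributions, the total expected cost is at most
\begin{equation*}
\max\Big\{\gamma,\ \tfrac{1/e+1/e^\gamma}{1-1/\gamma},\ 1+\tfrac{2}{e^\gamma}\Big\}\cdot(F^\ast+C^\ast)
= \max\Big\{\gamma,\ \tfrac{1/e+1/e^\gamma}{1-1/\gamma},\ 1+\tfrac{2}{e^\gamma}\Big\}\cdot\LP^\ast,
\end{equation*}
and since $\LP^\ast\le\OPT$, this is a valid approximation bound for any fixed $\gamma\in(1,2)$.

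Finally I would choose $\gamma$ to minimize the three-way maximum. Numerically, $\min_{\gamma\ge 1}\max\{\gamma,\,1+2/e^\gamma,\,\tfrac{1/e+1/e^\gamma}{1-1/\gamma}\}\approx 1.575$, attained near $\gamma=1.575$ (one checks that at this value all three expressions are $\le 1.575$, with $\gamma$ and the last term essentially tied). Taking $\gamma=1.575$ therefore gives an expected cost at most $1.575\cdot\OPT$; de-randomization via the method of conditional expectations, as already noted for Algorithm~{\EGUP}, turns this into a deterministic $1.575$-approximation. There is no real obstacle here — the substance of the section lies in the partitioning correctness proofs and in Lemmas~\ref{lem: EBGS target connection cost}--\ref{lem: EBGS nu's connection cost}; the only mild point of care is checking that the three terms of the max are simultaneously bounded by $1.575$ at $\gamma=1.575$, i.e. that $\gamma=1.575$ is a legitimate near-minimizer, which is a routine one-variable calculation.
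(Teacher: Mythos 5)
Your proposal is correct and follows essentially the same route as the paper's own (implicit) proof of Theorem~\ref{thm:ebgs}: it assembles the feasibility argument from (SI'.\ref{SI1:siblings disjoint}), (SI'.\ref{SI1:primary disjoint}), (PD'.\ref{PD1:disjoint}), the facility-cost bound from Lemma~\ref{lem: EBGS facility cost}, the per-demand connection bound from Lemma~\ref{lem: EBGS nu's connection cost}, the client-wise sum $\sum_{\nu\in j}\concost(\nu)=C^\ast_j$ via (PS.\ref{PS:xij}), and the choice $\gamma=1.575$. The only tiny imprecision is that Lemma~\ref{lem: EBGS nu's connection cost} as stated is for non-primary demands, with the primary case handled separately (and trivially) in the paragraph preceding it, but since the bound holds in both cases your aggregation goes through unchanged.
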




\section{Final Comments}

In this paper we show a sequence of LP-rounding approximation algorithms
for FTFP, with the best algorithm achieving  ratio $1.575$. 
As we mentioned earlier, we believe that 
our techniques of demand reduction and adaptive partitioning are very flexible and
should be useful in extending other LP-rounding methods for UFL to obtain
matching bounds for FTFP.

One of the main open problems in this area is whether FTFL can be approximated with the
same ratio as UFL, and our work was partly motivated by this question. The techniques we
introduced are not directly applicable to FTFL, mainly because our partitioning
approach involves facility splitting that could result in several sibling demands being served
by facilities on the same site. Nonetheless, we hope that further refinements of 
our construction might get around this issue and
lead to new algorithms for FTFL with improved ratios.
\pagebreak

\bibliographystyle{elsarticle-num}
\bibliography{facility}

\pagebreak

\appendix


\section{Proof of Lemma~\ref{lem: EBGS target connection cost}}\label{sec: proof of lemma 15}

Lemma~\ref{lem: EBGS target connection cost} provides a bound on the
expected connection cost of a demand $\nu$ when Algorithm~{\EBGS} does not open
any facilities in $\wbarN(\nu)$, namely
\begin{equation}
  \Exp[C_{\nu} \mid \neg \Lambda^{\nu}] \leq
  \clsdist(\nu) +  2\fardist(\nu),
			\label{eqn: lemma ebgs target connection cost}
\end{equation}
We show a stronger inequality that 
\begin{equation}
  \Exp[C_{\nu} \mid \neg \Lambda^{\nu}] \leq
  \clsdist(\nu) + \clsmax(\nu) + \fardist(\nu)
			\label{eqn: lemma ebgs indirect connection cost},
\end{equation}
which then implies (\ref{eqn: lemma ebgs target connection cost})
because $\clsmax(\nu) \leq \fardist(\nu)$.  The proof of (\ref{eqn:
  lemma ebgs indirect connection cost}) is similar to that in
\cite{ByrkaA10}. For the sake of completeness, we provide it here,
formulated in our terminology and notation.

Assume that the event $\neg \Lambda^{\nu}$ is true, that is Algorithm~{\EBGS}
does not open any facility in $\wbarN(\nu)$.
Let $\kappa$ be the primary demand that $\nu$ was assigned to. Also let
\begin{equation*}
K = \wbarclsnb(\kappa) \setminus \wbarN(\nu), \quad
V_{\cls} = \wbarclsnb(\kappa) \cap \wbarclsnb(\nu) \quad \textrm{and}\quad 
V_{\far} = \wbarclsnb(\kappa) \cap \wbarfarnb(\nu).
\end{equation*}
Then $K, V_{\cls}, V_{\far}$ form a partition of
$\wbarclsnb(\kappa)$, that is, they are disjoint and their union is $\wbarclsnb(\kappa)$.
Moreover, we have that $K$ is not empty, because Algorithm~{\EBGS}
opens some facility in $\wbarclsnb(\kappa)$ and this facility cannot be in $V_{\cls}\cup V_{\far}$,
by our assumption. 
We also have that $V_{\cls}$ is not empty due to (PD'.\ref{PD1:assign:overlap}). 

Recall that $D(A,\eta) = \sum_{\mu\in A}d_{\mu\eta}\bary_{\mu}/\sum_{\mu\in A}\bary_{\mu}$
is the average distance between a demand $\eta$ and the facilities in a set $A$. We shall show that
\begin{equation}
	 D(K, \nu) \leq \clsdist(\kappa)+\clsmax(\kappa) + \fardist(\nu).
				\label{eqn: bound on D(K,nu)}
\end{equation}
This is sufficient, because, by the algorithm, $D(K,\nu)$ is exactly 
the expected connection cost for demand $\nu$ conditioned on
the event that none of $\nu$'s neighbors 
opens, that is the left-hand side of (\ref{eqn: lemma ebgs indirect connection cost}).
Further, (PD'.\ref{PD1:assign:cost}) states that 
$\clsdist(\kappa)+\clsmax(\kappa) \le \clsdist(\nu) + \clsmax(\nu)$, and thus
(\ref{eqn: bound on D(K,nu)})  implies (\ref{eqn: lemma ebgs indirect connection cost}).

\medskip

The proof of (\ref{eqn: bound on D(K,nu)}) is by analysis of several cases.

\medskip
\noindent
{\mycase{1}} $D(K, \kappa) \leq \clsdist(\kappa)$. For any
facility $\mu \in V_{\cls}$ (recall that $V_{\cls}\neq\emptyset$), 
we have $d_{\mu\kappa} \leq \clsmax(\kappa)$ 
and $d_{\mu\nu} \leq \clsmax(\nu) \leq \fardist(\nu)$. Therefore, using the
case assumption, we get
	$D(K,\nu) \leq D(K,\kappa) + d_{\mu\kappa} + d_{\mu\nu} 
				\leq \clsdist(\kappa) + \clsmax(\kappa) + \fardist(\nu)$.

\medskip
\noindent
{\mycase{2}} There exists a facility $\mu\in V_{\cls}$ such that
  $d_{\mu\kappa} \leq \clsdist(\kappa)$. Since $\mu\in V_{\cls}$, we infer
  that $d_{\mu\nu} \leq \clsmax(\nu) \leq \fardist(\nu)$.  Using
  $\clsmax(\kappa)$ to bound $D(K, \kappa)$, we have $D(K, \nu)
  \leq D(K, \kappa) + d_{\mu\kappa} + d_{\mu\nu} \leq
  \clsmax(\kappa) + \clsdist(\kappa) + \fardist(\nu)$.

\medskip
\noindent
{\mycase{3}} In this case we assume that neither of Cases~1 and 2 applies, that is
 $D(K, \kappa) > \clsdist(\kappa)$ and every $\mu \in V_{\cls}$ satisfies
 $d_{\mu\kappa} >  \clsdist(\kappa)$. This implies that
$D(K\cup V_{\cls}, \kappa) > \clsdist(\kappa) = D(\wbarclsnb(\kappa), \kappa)$.
Since sets $K$, $V_{\cls}$ and $V_{\far}$ form a partition of $\wbarclsnb(\kappa)$,
we obtain that in this case $V_{\far}$ is not
empty and $D(V_{\far}, \kappa) < \clsdist(\kappa)$. 
Let $\delta = \clsdist(\kappa) - D(V_{\far}, \kappa) > 0$. 
We now have two sub-cases:
\begin{description}
	
\item{\mycase{3.1}} {$D(V_{\far}, \nu) \leq \fardist(\nu) + \delta$}.
  Substituting $\delta$, this implies that $D(V_{\far}, \nu) +
  D(V_{\far},\kappa) \le \clsdist(\kappa) + \fardist(\nu)$.  From the
  definition of the average distance $D(V_{\far},\kappa)$ and
  $D(V_{\far}, \nu)$, we obtain that there exists some $\mu \in
  V_{\far}$ such that $d_{\mu\kappa} + d_{\mu\nu} \leq
  \clsdist(\kappa) + \fardist(\nu)$.  Thus $D(K, \nu) \leq D(K,
  \kappa) + d_{\mu\kappa} + d_{\mu\nu} \leq \clsmax(\kappa) +
  \clsdist(\kappa) + \fardist(\nu)$.

\item{\mycase{3.2}} {$D(V_{\far}, \nu) > \fardist(\nu) + \delta$}.
  The case assumption implies that $V_{\far}$ is a proper subset of
  $\wbarfarnb(\nu)$, that is $\wbarfarnb(\nu) \setminus V_{\far}
  \neq\emptyset$.  Let $\hat{y} = \gamma \sum_{\mu\in V_{\smallfar}}
  \bary_{\mu}$.  We can express $\fardist(\nu)$ using $\hat{y}$ as
  follows
\begin{equation*}
\fardist(\nu) = D(V_{\far},\nu) \frac{\hat{y}}{\gamma-1} +
    D(\wbarfarnb(\nu)\setminus V_{\far}, \nu) \frac{\gamma-1-\hat{y}}{\gamma-1}.
\end{equation*}
Then, using the case condition and simple algebra, we have
  \begin{align}
    \clsmax(\nu) &\leq D(\wbarfarnb(\nu) \setminus V_{\far}, \nu) 
			\notag
		\\
		&\leq \fardist(\nu) - \frac{\hat{y}\delta}{\gamma-1-\hat{y}} 
		\leq \fardist(\nu) - \frac{\hat{y}\delta}{1-\hat{y}},
			\label{eqn: case 3, bound on C_cls^max(nu)}
  \end{align}
where the last step follows from $1 < \gamma < 2$. 

On the other hand, since $K$, $V_{\cls}$, and $V_{\far}$ form a partition of $\wbarclsnb(\kappa)$,
we have
$\clsdist(\kappa) = (1-\hat{y}) D(K\cup V_{\cls}, \kappa) + \hat{y} D(V_{\far}, \kappa)$.
Then using the definition of $\delta$ we obtain
\begin{equation}
    D(K \cup V_{\cls}, \kappa) = \clsdist(\kappa) + \frac{\hat{y}\delta}{1-\hat{y}}.
				\label{eqn: formula for D(V_cls,kappa)}
\end{equation}
  Now we are essentially done. If there exists some $\mu \in V_{\cls}$ such
  that $d_{\mu\kappa} \leq \clsdist(\kappa) +
  \hat{y}\delta/(1-\hat{y})$, then	we have
  \begin{align*}
    D(K, \nu) &\leq D(K, \kappa) + d_{\mu\kappa} + d_{\mu\nu} \\
    &\leq \clsmax(\kappa) + \clsdist(\kappa) +
    			\frac{\hat{y}\delta}{1-\hat{y}}
    + \clsmax(\nu)\\
    &\leq \clsmax(\kappa) + \clsdist(\kappa) + \fardist(\nu),
  \end{align*}
where we used (\ref{eqn: case 3, bound on C_cls^max(nu)}) in the last step.
  Otherwise, from (\ref{eqn: formula for D(V_cls,kappa)}),
we must have $D(K, \kappa) \leq \clsdist(\kappa) +
  \hat{y}\delta/(1-\hat{y})$. Choosing any $\mu \in V_{\cls}$, it follows that
  \begin{align*}
    D(K, \nu) &\leq D(K, \kappa) + d_{\mu\kappa} + d_{\mu\nu} \\
    &\leq \clsdist(\kappa) + \frac{\hat{y}\delta}{1-\hat{y}} +
    		\clsmax(\kappa)  + \clsmax(\nu)\\
    &\leq \clsdist(\kappa) + \clsmax(\kappa) + \fardist(\nu),
  \end{align*}
again using (\ref{eqn: case 3, bound on C_cls^max(nu)}) in the last step.

\end{description}

This concludes the proof of (\ref{eqn: lemma ebgs target connection cost}).
As explained earlier, Lemma~\ref{lem: EBGS target connection cost} follows.


\vfill

\section{Proof of Inequality  (\ref{eqn: echs ineq direct cost, step 1})}
\label{sec: ECHSinequality}

In Sections~\ref{sec: 1.736-approximation} and \ref{sec: 1.575-approximation}
we use the following inequality
\begin{align}
  \label{eq:min expected distance}
  \bard_1 g_1 + \bard_2 g_2 (1-g_1) +
  \ldots &+ \bard_k g_k (1-g_1) (1-g_2) \ldots (1-g_k)\\ \notag
  &\leq \frac{1}{\sum_{s=1}^k g_s} \left(\textstyle\sum_{s=1}^k \bard_s g_s\right)\left(\textstyle\sum_{t=1}^k g_t \textstyle\prod_{z=1}^{t-1} (1-g_z)\right).
\end{align}
for $0 < \bard_1\leq \bard_2 \leq \ldots \leq \bard_k$, and
$0 < g_1,...,g_s \le 1$.

\medskip

We give here a new proof of this inequality, much simpler
than the existing proof in \cite{ChudakS04}, and also simpler than the
argument by Sviridenko~\cite{Svi02}.  We derive this inequality from
the following generalized version of the Chebyshev Sum Inequality:
\begin{equation}
  \label{eq:cheby}
  \textstyle{\sum_{i}} p_i \textstyle{\sum_j} p_j a_j b_j \leq \textstyle{\sum_i} p_i a_i \textstyle{\sum_j} p_j b_j,
\end{equation}
where each summation runs from $1$ to $l$ and the sequences $(a_i)$,
$(b_i)$ and $(p_i)$ satisfy the following conditions: $p_i\geq 0, a_i
\geq 0, b_i \geq 0$ for all $i$, $a_1\leq a_2 \leq \ldots \leq a_l$,
and $b_1 \geq b_2 \geq \ldots \geq b_l$.

Given inequality (\ref{eq:cheby}), we can obtain our inequality
(\ref{eq:min expected distance}) by simple substitution
\begin{equation*}
  p_i \leftarrow g_i, a_i \leftarrow \bard_i, b_i \leftarrow
  \Pi_{s=1}^{i-1} (1-g_s),
\end{equation*}
for $i = 1,...,k$.

\ignore{
For the sake of completeness, we include the proof of inequality (\ref{eq:cheby}), 
due to Hardy, Littlewood and Polya~\cite{HardyLP88}. The idea is to evaluate the 
following sum:
\begin{align*}
  S &= \textstyle{\sum_i} p_i \textstyle{\sum_j} p_j a_j b_j - \textstyle{\sum_i} p_i a_i \textstyle{\sum_j} p_j b_j
	\\
  & = \textstyle{\sum_i \sum_j} p_i p_j a_j b_j - \textstyle{\sum_i \sum_j} p_i a_i p_j b_j
	\\
  & = \textstyle{\sum_j \sum_i} p_j p_i a_i b_i - \textstyle{\sum_j \sum _i} p_j a_j p_i b_i
	\\
	&= \half \cdot \textstyle{\sum_i \sum_j} (p_i p_j a_j b_j - p_i a_i p_j b_j + p_j p_i a_i
  							b_i - p_j a_j p_i b_i)
\\
  &= \half \cdot \textstyle{\sum_i \sum_j} p_i p_j (a_i - a_j)(b_i - b_j) \leq 0.
\end{align*}
The last inequality holds because $(a_i-a_j)(b_i-b_j) \leq 0$, since the sequences
$(a_i)$ and $(b_i)$ are ordered oppositely.
}

\end{document}